\renewcommand{\epsilon}{\varepsilon}
\renewcommand{\star}{*}
\newcommand{\waut}{\ensuremath{\nu}}
\newcommand{\id}[1]{\it #1}
\newcommand{\eps}{\ensuremath{\epsilon}}
\providecommand{\abs}[1]{\lvert#1\rvert}
\newcommand{\prestar}[2]{\mbox{$pre^{*}_{\mathsf{#1}}(#2)$}}
\newcommand{\pcode}[2][\codesize]{
    \fbox{
    \begin{minipage}{0.45\linewidth}
    #1
    \begin{tabbing}
    xx \= xx \= xx \= xx \= xx \= xx \= xx \= xx \= \kill
    #2
    \end{tabbing}
    \end{minipage}
    }
  }
\newcommand{\ppcode}[2][\small]{
\begin{minipage}{0.45\linewidth}
 #1
\begin{tabbing}
xx \= xx \= xx \= xx \= xx \= xx \= xx \= xx \= \kill
#2
\end{tabbing}
\end{minipage}
}
\let\oldmarginpar\marginpar
\renewcommand\marginpar[1]{\-\oldmarginpar[\raggedleft\footnotesize #1]%
    {\raggedright\footnotesize #1}}
\newcommand{\abbrevfullstop}[1]{%
  \ifthenelse{\equal{#1}{.}}{.}{%
    \ifthenelse{\equal{#1}{,} \OR \equal{#1}{;} \OR \equal{#1}{'} %
            \OR \equal{#1}{?} \OR \equal{#1}{!} %
            \OR \equal{#1}{)} \OR \equal{#1}{]} %
            \OR \equal{#1}{~}}{.#1}%
    {.\ #1}}}
\newcommand{\etal}[1]{\emph{et al}\abbrevfullstop{#1}}
\newcommand{\eg}[1]{\emph{e.g}\abbrevfullstop{#1}}
\newcommand{\ie}[1]{\emph{i.e}\abbrevfullstop{#1}}
\newcommand{\tuple}[1]{\langle #1 \rangle}
\newcommand{\bigO}[1]{\mbox{O}(#1)} 
\newcommand{\ra}{\mathtt{a}}
\newcommand{\rb}{\mathtt{b}}
\newcommand{\rc}{\mathtt{c}}
\newcommand{\encode}[1]{\mbox{$\llbracket #1\rrbracket$}}
\newcommand{\nt}[2]{\mbox{$#1_{#2}$}}                      
\newcommand{\readterm}[2]{\mbox{$\mathtt{#1\_at\_#2}$}}   
\newcommand{\writeterm}[2]{\mbox{$\mathtt{set\_#1\_#2}$}} 
\newcommand{\newState}[4]{\node[state,#3](#1)[#4]{#2};}
\newcommand{\newTransition}[4]{\path[->] (#1) edge [#4] node {#3} (#2); }
\newtheorem{example}{Example}
\newtheorem{lemma}{Lemma}
\newtheorem{corollary}{Corollary}
\newtheorem{theorem}{Theorem}
\newtheorem{definition}{Definition}
\newtheorem{proposition}{Proposition}
\newcommand{\sys}[1]{\textsc{#1}}
\newcommand{\oursolver}{\sys{covenant}}
\newcommand{\ignore}[1]{}
\newcounter{proglineno}
\newcommand{\putno}{\refstepcounter{proglineno}\arabic{proglineno}:}
\newcommand{\lang}[1]{\ensuremath{\mathcal{L}(#1)}}
\def\blfootnote{\xdef\@thefnmark{}\@footnotetext}
\begin{document}
\title{A Complete Refinement Procedure for\\Regular Separability of Context-Free Languages}
\author[unimelb]{Graeme Gange}
\ead{gkgange@unimelb.edu.au}

\author[nasa]{Jorge A. Navas}
\ead{jorge.a.navaslaserna@nasa.gov}

\author[unimelb]{Peter Schachte}
\ead{schachte@unimelb.edu.au}

\author[unimelb]{Harald S{\o}ndergaard}
\ead{harald@unimelb.edu.au}

\author[unimelb]{Peter J. Stuckey}
\ead{pstuckey@unimelb.edu.au}

\address[unimelb]{
  Department of Computing and Information Systems,
  The University of Melbourne,
  Vic. 3010, Australia}
\address[nasa]{
  NASA Ames Research Center,
  Moffett Field,
  CA 94035, USA}

\begin{abstract}
Often, when analyzing the behaviour of systems modelled as 
context-free languages, we wish to know if two languages overlap. 
To this end, we present an effective semi-decision procedure 
for regular separability of context-free languages,
based on counter-example guided abstraction refinement.
We propose two refinement methods, one inexpensive but incomplete,
and the other complete but more expensive.
We provide an experimental evaluation of this procedure, 
and demonstrate its practicality on a range
of verification and language-theoretic instances.

\end{abstract}
\begin{keyword}
  abstraction refinement \sep
  context-free languages \sep
  regular approximation \sep
  separability 
\end{keyword}
\maketitle

\section{Introduction}
\label{sec-intro}

We address the problem of checking whether two given
context-free languages $L_1$ and $L_2$ are disjoint.
This is a fundamental language-theoretical problem.
It is of interest in many practical tasks that call for
some kind of automated reasoning about programs.
This can be
because program behaviour is modelled using context-free languages,
as in software verification approaches that try to capture a program's
control flow as a (pushdown-system) path language.
Or it can be because we wish to reason about string-manipulating
programs, as is the case in software vulnerability detection problems,
where various types of injection attack have to be modelled.

The problem of context-free disjointness is of course undecidable, 
but semi-decision procedures exist for non-disjointness. 
For example, one can systematically generate strings $w$
over the intersection $\Sigma_1 \cap \Sigma_2$, where $\Sigma_1$ is
the alphabet of $L_1$ and $\Sigma_2$ is that of $L_2$.
If some $w$ belongs to both $L_1$ and $L_2$, answer
``yes, the languages overlap.''
It follows that no semi-decision procedure exists for disjointness.
However, semi-decision procedures exist for the
stronger requirement of being separable by a regular language.
For example, one can systematically generate (representations of) 
regular languages over $\Sigma_1 \cup \Sigma_2$, and,
if some such language $R$ is found to satisfy 
$L_1 \subseteq R \land L_2 \subseteq \overline{R}$, answer 
``yes, the languages are disjoint''.

A radically different approach, which we will follow here, 
uses so-called \emph{counter-example guided abstraction refinement
(CEGAR)}~\cite{ClarkeGJLV00} of regular over-approximations.
The scheme is based on repeated approximation refinement, like so:

\begin{enumerate}

\item \emph{Abstraction:} Compute \emph{regular approximations} $R_1$
  and $R_2$ such that $L_1 \subseteq R_1$ and $L_2 \subseteq R_2$.
  (Here $R_1$ and $R_2$ are regular languages, represented using
  regular expressions, say.)

\item \emph{Verification:} Check whether the intersection of $R_1$ and
  $R_2$ is empty using a decision procedure for regular
  expressions. If $R_1 \cap R_2 = \emptyset$ then $L_1 \cap
  L_2 = \emptyset$, so answer ``the languages are disjoint.''
  If $ w \in (R_1 \cap R_2)$, $w \in L_1$, and $w \in L_2$
  then $ L_1 \cap L_2 \neq \emptyset$, 
  so answer ``the languages overlap'' and provide $w$ as a witness.
  Otherwise, go to step 3.

\item \emph{Refinement:} Produce new regular approximations $R_1'$ and
  $R_2'$ such that for each $R_i'$, $i \in \{1,2\}$,  we have $L_i
  \subseteq R_{i}' \subseteq R_i$, and $R'_i \subset R_i$ for some $i$.
  Update the approximations $R_1
  \leftarrow R_1',R_2 \leftarrow R_2'$, and go to step 2.

\end{enumerate}
For the abstraction step, note that regular approximations exist,
trivially.
For the verification step,
we could also take advantage of the 
fact that the class of context-free languages is closed under 
intersection with regular languages; however, this does not eliminate
the need for a refinement procedure.
For the refinement step, note that there is no indication of \emph{how}
the tightening of approximations should be done; indeed
that is the focus of this paper.
The step is clearly well-defined since, if $L \subset R$, there is 
always a regular language $R' \subset R$ such that $L \subseteq R'$.

For a given language $L$ there may well be an infinite chain
$R_1 \supset R_2 \supset \cdots \supset L$ of regular approximations.
This is a source of possible non-termination of the CEGAR scheme.
An interesting question therefore is: 
Are there refinement techniques that can guarantee termination at
least when $L_1$ and $L_2$ are 
\emph{regularly separable} context-free languages, that is, when
there exists a regular language $R$ such that $L_1 \subseteq R$
and $L_2 \subseteq \overline{R}$?%

In this paper we answer this question in the affirmative.
We propose a refinement procedure which can ensure termination of the 
CEGAR-based loop assuming the context-free languages involved are 
regularly separable.
In this sense we provide a refinement procedure which is
\emph{complete} for regularly separable context-free languages.
Of course the question of regular separability of context-free languages 
is itself undecidable~\cite{SzymanskiW73}. 
The method that we propose can also be used on language instances that 
are not regularly separable, and it will often decide such instances 
successfully.
However, in this case, it does not come with a termination guarantee.

\paragraph{Contribution}
The paper rests on regular approximation ideas by 
Nederhof~\cite{Nederhof_chapter1}
and we utilise the efficient $pre^{\star}$ algorithm~\cite{EsparzaRS00}
for intersecting (the language of) a context-free grammar with 
(that of) a finite-state automaton.
We propose a novel refinement procedure for a CEGAR inspired method
to determine whether context-free languages are disjoint, and
we prove the procedure complete for determining regular separability.
In the context of regular approximation, where languages must be 
over-approximated using \emph{regular} languages, separability is 
equivalent to regular separability, so the completeness means
that the refinement procedure is optimal.
On the practical side, the method has important applications in
software verification and security analysis.
We demonstrate its feasibility through an experimental evaluation.

\paragraph{Outline}
Section~\ref{sec-preliminaries} introduces concepts, notation and 
terminology used in the paper.
It also recapitulates relevant results about regular separability and 
language representations.
Section~\ref{sec-refine} proposes a new procedure for regular 
approximation of context-free languages and and shows that the
procedure is complete, in the sense that it proves the
separability for any pair of regularly separable context-free
languages.
Section~\ref{sec-example} provides an example.
In Section~\ref{sec-compare} we place our method in context,
comparing with previously proposed refinement techniques.
In Section~\ref{sec-results} we evaluate the method empirically,
comparing an implementation with the most closely related tool.
Section~\ref{sec-conclusion} concludes.
The appendices contain more peripheral implementation detail and
a description of test cases used for the experimental evaluation.

\section{Preliminaries}
\label{sec-preliminaries}

In this section we recall the notion of regular separability and 
introduce a concept of ``star-contraction'' for regular expressions.

\subsection{Regular and Context-Free Languages}
We first recall some basic formal-language concepts.
These are assumed to be well understood---the only purpose here is to
fix our terminology and notation.
Given an alphabet $\Sigma$, $\Sigma^*$ denotes the set of all finite 
strings of symbols from $\Sigma$.
The string $y$ is a \emph{substring} of string $w$ iff $w = x y z$
for some (possibly empty) strings $x$ and $z$.

The \emph{regular expressions} over an alphabet 
$\Sigma = \{a_1,\ldots,a_n\}$ are $\emptyset$, 
$\epsilon$, $a_1, \ldots a_n$, together with expressions of form
$e_1 | e_2$, $e_1 \cdot e_2$, and $e^*$, where $e$, $e_1$ and $e_2$
are regular expressions.
Here $|$ denotes union, $\cdot$ denotes language 
concatenation, and ${}^*$ is Kleene star.
As is common, we will often omit $\cdot$, so that juxtaposition of
$e_1$ and $e_2$ denotes concatenation of the corresponding languages.
Given a finite set $E = \{e_1,\ldots,e_k\}$ of regular expressions, 
we let $\bigparallel E$ stand for the regular expression 
$e_1 | \cdots | e_k$ 
(in particular, $\bigparallel \emptyset = \emptyset$).
We let $Reg_{\Sigma}$ denote the set of regular expressions over
alphabet $\Sigma$.

A (non-deterministic) \emph{finite-state automaton} is a quintuple
$\tuple{Q,\Sigma,\delta,q_0,F}$
where $Q$ is the set of states, $\Sigma$ is the alphabet, $\delta$ is
the transition relation, $q_0$ is the start state, and $F$ is the set
of accept states.
The presence of $(q,x,q')$ in $\delta \subseteq Q \times \Sigma \times Q$
indicates that, on reading symbol $x$ while in state $q$,
the automaton may proceed to state $q'$.
If $\delta$ is a total function, that is, 
if for all $q \in Q, x \in \Sigma,
|\{q' \mid (q,x,q') \in \delta\}|=1$,
then the automaton is \emph{deterministic}.

A language which can be expressed as a regular expression
(or equivalently, has a finite-state automaton that recognises it)
is \emph{regular}.
The language recognised by automaton $A$ is written as $\lang{A}$.
Similarly, $\lang{e}$ is the language denoted by regular expression $e$.

A \emph{context-free grammar}, or CFG, is a quadruple
$G = \tuple{V,\Sigma,P,S}$, where $V$ is the set of 
variables (non-terminals), $S$ is the start symbol, and $P$ is the
set of productions (or rules).
Each production is of form $X \rightarrow \alpha$ with 
$X \in V$ and $\alpha \in (V \cup \Sigma)^*$.
If $X \rightarrow \alpha$ is a production in $P$ then, for all
$\beta, \gamma \in (V \cup \Sigma)^*$, we say that
$\beta X \gamma$ \emph{yields} $\beta \alpha \gamma$, written
$\beta X \gamma \Rightarrow \beta \alpha \gamma$.
The language \emph{generated} by $G$ is 
$\lang{G} = \{w \in \Sigma^* \mid S \Rightarrow^* w\}$, 
where $\Rightarrow^*$ is the reflexive transitive
closure of $\Rightarrow$.
A set of strings is a context-free language (CFL) iff it is generated by 
some CFG.

In algorithms we represent regular languages using finite-state automata,
and CFLs using CFGs.
When there is little risk of confusion, we ignore the distinction between
a language and its representation.
Hence we may, for example, apply set operations to 
(the transition relation of) a finite-state automaton.

\subsection{Regular Separability}
As our approach uses regular approximations to context-free languages,
we cannot hope to prove separation for arbitrary disjoint pairs of
context-free languages. 
Instead we focus on pairs of of \emph{regularly separable} languages.

\begin{definition}[Regularly separable] \rm
Two context-free languages $L_1$ and $L_2$ are \emph{regularly separable}
iff there exists a regular language $R$ such that $L_1 \subseteq R$
and $L_2 \subseteq \overline{R}$ where $\overline{R}$ is the
complement of $R$.
\end{definition}
\noindent
It will be useful to have a slightly different view of separability:
\begin{definition}[Separating pair] \rm
Given a pair $(L_1,L_2)$ of context-free languages, 
a pair $(R_1, R_2)$ of regular languages form
a \emph{separating pair} for $(L_1,L_2)$ iff 
$L_1 \subseteq R_1$, $L_2 \subseteq R_2$,
and $R_1 \cap R_2 = \emptyset$.
\end{definition}
\begin{lemma}
Context-free languages $L_1$ and $L_2$ are regularly separable
iff there exists some separating pair $(R_1, R_2)$ for $(L_1, L_2)$.
\end{lemma}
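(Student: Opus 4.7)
The plan is to prove both directions by unfolding the definitions and exploiting closure of regular languages under complement. The lemma is essentially a restatement, and the only non-trivial ingredient is that the complement of a regular language is regular.

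For the forward direction, I would assume $L_1$ and $L_2$ are regularly separable and pick a regular witness $R$ with $L_1 \subseteq R$ and $L_2 \subseteq \overline{R}$. Setting $R_1 = R$ and $R_2 = \overline{R}$ gives two regular languages (the second being regular by closure under complement) that contain $L_1$ and $L_2$ respectively and whose intersection $R \cap \overline{R}$ is empty, so $(R_1, R_2)$ is a separating pair.

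For the backward direction, I would start from an arbitrary separating pair $(R_1, R_2)$ for $(L_1, L_2)$. From $R_1 \cap R_2 = \emptyset$ I would deduce $R_2 \subseteq \overline{R_1}$, and combine this with $L_2 \subseteq R_2$ to get $L_2 \subseteq \overline{R_1}$. Since also $L_1 \subseteq R_1$ and $R_1$ is regular, $R_1$ itself serves as the separating regular language, establishing regular separability.

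There is no substantive obstacle here; the argument is purely set-theoretic once regularity of $\overline{R}$ is invoked. The only thing to be mildly careful about is not conflating the asymmetric definition (one witness $R$) with the symmetric one (a pair $(R_1,R_2)$); the two formulations are convenient in different parts of the paper, which is presumably why both are stated.
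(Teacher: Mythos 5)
Your proof is correct and follows essentially the same route as the paper's: taking $(R,\overline{R})$ as the separating pair in one direction, and deducing $R_2 \subseteq \overline{R_1}$ from disjointness in the other. The only difference is that you make the appeal to closure of regular languages under complement explicit, which the paper leaves implicit.
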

\begin{proof}\rm
If $(R_1, R_2)$ is a separating pair then $L_1 \subseteq R_1$,
and $L_2 \subseteq R_2 \subseteq \overline{R_1}$, 
so $L_1$ and $L_2$ are regularly separable.
Conversely, if the regular language $R$ separates $L_1$ and $L_2$
then we have $L_1 \subseteq R$, $L_2 \subseteq \overline{R}$,
and $R \cap \overline{R} = \emptyset$. 
So $(R, \overline{R})$ is a separating pair.
\end{proof}
To see that there are disjoint context-free languages that are not 
regularly separable, 
consider $L = \{\mathsf{a}^n \mathsf{b}^n \mid n \geq 0\}$.
Both $L$ and $\overline{L}$ are non-regular context-free languages,
and therefore not regularly separable.
The problem of checking whether a pair of context-free languages is
regularly separable is undecidable~\cite{hunt82-grammar}.

\subsection{Star-Contraction}\label{sec-contract}

\begin{definition}[Union-free regular language] \rm
  A regular expression is \emph{union-free} iff it does not
  use the union operation.
  A regular language is \emph{union-free regular}
  if it can be written as a union-free regular expression.
  We use $\id{Reg}'_\Sigma$ to denote the set of union-free
  regular expressions.
\end{definition}
\noindent
Union-free regular languages are also known as
\emph{star-dot regular} languages~\cite{brzozowski69-stardot}.

\begin{definition}[Union-free decomposition] \rm
  A union-free decomposition~\cite{nagy04-regular} of
  a regular language $R$ is a finite
  set of union-free regular languages $R_1, \ldots, R_n$
  such that $R = R_1 \cup \ldots \cup R_n$.
\end{definition}

\begin{theorem}[Nagy~\cite{nagy04-regular}]\label{thm-nagy}
  Every regular language $R$ admits some finite union-free
  decomposition.
\end{theorem}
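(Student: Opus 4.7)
The plan is to prove the theorem by structural induction on a regular expression $e$ denoting $R$, exploiting the fact that the conclusion concerns only the underlying language. In the base cases $e \in \{\emptyset, \epsilon\} \cup \Sigma$, the expression $e$ itself uses no union operator, so $\{\lang{e}\}$ (or the empty set, in the case of $\emptyset$) is already a union-free decomposition. Now suppose inductively that $\lang{e_1}$ and $\lang{e_2}$ admit union-free decompositions $\{R_1, \ldots, R_m\}$ and $\{S_1, \ldots, S_n\}$, respectively. For $e = e_1 \mid e_2$ we simply take $\{R_1, \ldots, R_m, S_1, \ldots, S_n\}$, and for $e = e_1 \cdot e_2$ we distribute to obtain $\{R_i \cdot S_j \mid 1 \leq i \leq m,\, 1 \leq j \leq n\}$, where each component remains union-free because the concatenation of two union-free regular expressions is union-free.

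The main obstacle is the Kleene star case $e = e_1^*$, where the naive guess $\bigcup_i R_i^*$ fails because a word in $e_1^*$ may interleave factors drawn from different $R_i$'s: for instance, $ba \in (\{a\} \cup \{b\})^*$ while $ba \notin \{a\}^* \cup \{b\}^*$. The plan is instead to exploit the identity
\[
(R_1 \cup R_2 \cup \cdots \cup R_m)^* \;=\; (R_1^* R_2^* \cdots R_m^*)^*.
\]
The inclusion $\supseteq$ is immediate since each $R_i^*$ is contained in the left-hand side. For $\subseteq$, any word $w_1 w_2 \cdots w_k$ on the left, with each $w_j \in R_{i_j}$, can be realised on the right by giving each $w_j$ its own iteration of $R_1^* \cdots R_m^*$ and placing $w_j$ into the $R_{i_j}^*$ slot (with $\epsilon$ in the remaining slots). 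Crucially, $(R_1^* R_2^* \cdots R_m^*)^*$ is itself a single union-free expression, because each $R_i^*$ is union-free by the inductive hypothesis, concatenation preserves union-freeness, and so does the outer star. Hence $\{\lang{e}\}$ is a one-element union-free decomposition, closing the induction.
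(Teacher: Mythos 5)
Your proof is correct and follows exactly the route the paper alludes to: the paper cites Nagy for this theorem without giving a proof, but remarks that it ``utilises the well-known equivalence $(r_1 | r_2)^* = (r_1^* r_2^*)^*$,'' which is precisely the identity you use to handle the Kleene-star case of your structural induction, the other cases being routine. The only (trivial) detail worth a word is the degenerate star case where the inner decomposition is empty, i.e.\ $\lang{e_1}=\emptyset$, for which one takes $\{\epsilon\}$ as the decomposition of $\emptyset^*$.
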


\noindent
Theorem~\ref{thm-nagy} is not surprising; 
it utilises the well-known equivalence $(r_1 | r_2)^* = (r_1^* r_2^*)^*$.

The following concept is central to this paper's ideas.
For a given union-free language, it is
convenient to consider particular sets of sub-languages:

\begin{definition}[Star-contraction]\label{def-contract} \rm
  The star-contraction $\kappa(P)$ of a union-free
  regular expression is the set of languages obtained by replacing 
  some subset of $\star$-enclosed subterms
 in $P$ with $\epsilon$.
  The naive construction is confounded by the presence of nested
  $\star$ operators, as distinct portions of the subterms may
  occur in each outer repetition. 
  the star-contraction is defined as:
  \[
    \begin{array}{rcll}
      \kappa(\ra) &=& \{\ra\} & \mbox{for $\ra \in \Sigma \cup \{\eps\}$} \\
      \kappa(e^*) &=& \{ (\bigparallel E)^\star \mid E \subseteq \kappa(e) \} \\
      \kappa(e_1 \cdot e_2) &=& \{r_1 \cdot r_2 \mid r_1 \in \kappa(e_1) \land
			r_2 \in \kappa(e_2)\} 
    \end{array}
  \]
\end{definition}
\begin{example}
  Consider the union-free regular expression 
  $e = (\ra \rb^\star \rc^\star)^\star$.
  The star-contraction of the parenthesised term is
  $\kappa(\ra \rb^\star \rc^\star) = \{ \ra, \ra \rb^\star, \ra \rc^\star, \ra \rb^\star \rc^\star \}$.

  The star-contraction of $e$ (after elimination of equivalent languages) is then:
    \[\kappa(e) = \{ \eps, \ra^\star, (\ra \rb^\star)^\star, (\ra \rc^\star)^\star,
    (\ra \rb^\star | \ra \rc^\star)^\star, (\ra \rb^\star \rc^\star)^\star \}.\]
  Note how the elements $r \in \kappa(e)$ with $r \not= e$ make particular
  subsets of ${\cal L}(e)$ explicit.
  For example, $\ra^\star$ is the subset that makes no use of $\rb$ 
  or $\rc$,
  whereas $(\ra \rb^\star|\ra \rc^\star)^\star$ is the set of words 
  in which $\rb$ and $\rc$ are not adjacent.
\end{example}

\noindent
We later use $\kappa(R)$, that is, $\kappa$ applied to a regular language,
to denote $\kappa(e_1) \cup \ldots \cup \kappa(e_m)$, where $\{ e_1, \ldots, e_m \}$ is
some (arbitrary) union-free decomposition of $R$.
Note that the star-decomposition of a regular language is not unique; different union-free decompositions
give rise to different star-contractions. We assume, for a regular language $R$,  $\kappa(R)$ deterministically returns
\emph{some} valid star-contraction of $R$.

$\kappa(R)$ has several properties which will be useful in the following:
\begin{proposition}
  For any regular language $R$:
  \begin{enumerate}
    \item $\kappa(R)$ is finite.
    \item For every regular expression 
	$e \in \kappa(R)$, $\lang{e} \subseteq R$.
    \item $\bigcup \{ \lang{e} \mid e \in \kappa(R) \} = R$
  \end{enumerate}
\end{proposition}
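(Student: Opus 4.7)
The plan is to establish each of the three claims first for a single union-free regular expression $e$ by structural induction, and then lift to an arbitrary regular language $R$ via the union-free decomposition guaranteed by Theorem~\ref{thm-nagy}. If $R = \lang{e_1} \cup \cdots \cup \lang{e_m}$ with each $e_i$ union-free, then by definition $\kappa(R) = \kappa(e_1) \cup \cdots \cup \kappa(e_m)$, so finiteness of each $\kappa(e_i)$ yields finiteness of $\kappa(R)$; the pointwise inclusion $\lang{r} \subseteq \lang{e_i} \subseteq R$ gives part~(2); and the equality in part~(3) reduces to the two directions verified for each summand.

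For finiteness, the induction on $e$ is immediate: the bases $\kappa(\ra) = \{\ra\}$ are singletons; the star case yields $|\kappa(e^\star)| = 2^{|\kappa(e)|}$ as $E$ ranges over subsets of the inductively finite $\kappa(e)$; and the concatenation case gives $|\kappa(e_1 \cdot e_2)| \le |\kappa(e_1)|\cdot|\kappa(e_2)|$. For the inclusion $\lang{r} \subseteq \lang{e}$, another structural induction works: concatenation factors through the product of languages, and for $e = e_0^\star$ any $E \subseteq \kappa(e_0)$ satisfies $\lang{\bigparallel E} \subseteq \lang{e_0}$ by the induction hypothesis, so Kleene-starring preserves the inclusion. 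The point that requires a little care is the edge case $E = \emptyset$, for which $\bigparallel\emptyset = \emptyset$ and $\lang{\emptyset^\star} = \{\eps\} \subseteq \lang{e_0^\star}$.

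The equality in part~(3) only needs the reverse containment, and here the key lemma, proved by a short auxiliary induction, is that $e \in \kappa(e)$ for every union-free $e$: take $E = \{e_0\}$ in the star case (valid since $e_0 \in \kappa(e_0)$ by the induction hypothesis) and pick the identity representatives componentwise in the concatenation case. Consequently each decomposing $e_i$ lies in $\kappa(e_i) \subseteq \kappa(R)$, so $R = \bigcup_i \lang{e_i} \subseteq \bigcup\{\lang{r} \mid r \in \kappa(R)\}$. The main obstacle here is not technical depth but definitional bookkeeping around degenerate cases---$R = \emptyset$ and the empty decomposition, plus the choice $E = \emptyset$ inside a star---after which all three parts drop out of clean structural inductions.
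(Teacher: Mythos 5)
Your proof is correct. The paper states this proposition without giving any proof, and your argument---structural induction over union-free expressions for each of the three claims (with correct attention to the $E=\emptyset$ subcase of the star clause and to the key fact $e \in \kappa(e)$ needed for part~(3)), followed by lifting through the union-free decomposition of Theorem~\ref{thm-nagy}---is exactly the reasoning the paper implicitly relies on.
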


\section{Refining Regular Abstractions}
\label{sec-refine}
We now describe the main idea behind the refinement phase.  
We are interested in the intersection of a finite set of languages,
but without loss of generality, we consider the intersection of 
just two context-free languages $L_1$ and $L_2$ 
(provided as context-free grammars).

We assume a decision procedure that
returns ``no'' if $\lang{A_1} \cap \lang{A_2} = \emptyset$ or returns a
witness $w$ if $ w \in \lang{A_1} \cap \lang{A_2} \neq \emptyset$, 
where $A_1$ and $A_2$ are finite-state automata recognising regular 
languages $R_1$ and $R_2$, respectively 
(that is, $\lang{A_1} = R_1$ and $\lang{A_2} = R_2$).
Moreover, our refinement procedure will require the solving of constraints
of the form $ A = A_1 \setminus A_2$ where $A$, $A_1$ and $A_2$ are
finite-state automata, that is, $A$ recognises 
$\lang{A_1} \cap \overline{\lang{A_2}}$.

Assume that at some point we have regular approximations 
$R_1 \supseteq L_1$ and $R_2 \supseteq L_2$, 
and we have found some witness $w$ such that
$w \in R_1 \cap R_2$, but $w \notin L_1 \cap L_2$. 
There are three cases to consider:
\[
\setlength{\arraycolsep}{2ex}
\begin{array}{ccc}
  (1) ~~ w \notin L_1 \wedge w \in L_2
& (2) ~~ w \in L_1 \wedge w \notin L_2
& (3) ~~ w \notin L_1 \wedge w \notin L_2
\end{array}
\]

\noindent
For cases (1) and (2) we should refine $R_1$ and $R_2$, respectively.
For case (3) we could choose to refine either $R_1$ or $R_2$, or both. 
In our implementation, we always refine all the regular approximations.

If $w \notin L_i$ then a straightforward refinement is to
produce a new abstraction $R_i \setminus \{ w \}$ in place of $R_i$. 
However, this refinement process will
rarely converge, as we can exclude only finitely many strings in
finite time.
We must instead formulate a refinement procedure which generalizes a counterexample
to an infinite set of words.

\subsection{Star-generalizations}\label{sec-stargen}
The refinement procedure in this section operates by taking the regular expression recognizing
the single counterexample $w$, and progressively augmenting it with $\star$ operators while
ensuring the counterexample and query remain disjoint.

\begin{definition}[Star-generalization]\label{def-Xi} \rm
  The star-generalizations of a word $w$ is the
  set $\Xi$ of regular expressions given by
    \[
      \begin{array}{ccl}
        \Xi(\eps) &=& \{ \eps \} \\
        \Xi(x) &=& \{ x, x^\star \} ~\mbox{for $x \in \Sigma$} \\
        \Xi(\alpha_1 \ldots \alpha_n) &=& \{\alpha_1 \ldots \alpha_n, (\alpha_1 \ldots \alpha_n)^\star\} \cup
          \left\{ e_1 e_2, (e_1 e_2)^\star ~\middle|~
          \begin{array}{c}
            e_1 \in \Xi(\alpha_1 \ldots \alpha_i), \\
            e_2 \in \Xi(\alpha_{i+1} \ldots \alpha_n), \\
            i \in [1, n-1]
          \end{array}
          \right
        \}
      \end{array}
    \]
\end{definition}
A star-generalization of $w$ is a language which can be constructed by adding (nested)
unbounded repetition of intervals in $w$.

We shall represent a star-generalization as a pair $\langle w, S \rangle$
of a word $w = w_1 \ldots w_n$ and the set $S$ of ranges
 covered by $\star$-operators. We denote such a range by the pair $(i, j)$ of
 the first and last index covered by the operator.
$S$ must satisfy the constraints
\begin{equation}\label{eq:nest}
  \begin{array}{c}
    \forall (i, j) \in S~.~i, j \in [0, n], i < j \\
    \forall (i, j), (i', j') \in S~.~ j \le i' \vee j' \le j
  \end{array}
\end{equation}
This ensures the set of $\star$-enclosed ranges are well-formed.
We shall use $\lang{\langle w, S\rangle}$ to denote the language 
that results from such a generalization.

\begin{definition}[Star-generalization with respect to a language]
  The star-generalizations of $w$ with respect to some language $L$
  (denoted $\Xi_{L}(w)$)
  is the set of star-generalizations of $w$ which are contained in $L$.
  Formally, this can be expressed as
  \[ \Xi_{L}(w) = \{ r ~|~ r \in \Xi(w),~\lang{r} \subseteq L \}. \]
\end{definition}
\begin{definition}[Maximal star-generalization] \rm
  A maximal star-generalization of $w$ with respect to some language $L$
  is a star-generalization $r$ of $w$ such that
  there is no other star-generalization $r'$ with $\lang{r} \subset \lang{r'} \subseteq L$.
\end{definition}
A maximal star-generalization is not necessarily unique.
Consider generalizing $ab$ with respect to $(\ra^\star{}\rb|\ra\rb^\star)$ --
both $\ra^\star{}\rb$ and $\ra\rb^\star$ are incomparable maximal generalizations.

A greedy procedure for constructing a star-generalization
is given in Figure~\ref{pcode:refine}.
The algorithm takes as input a witness $w \in R_1 \cap R_2$, 
context-free language $L$ such that $w \notin L$, and $L$'s regular 
approximation $A$ (either $R_1$ or $R_2$).
The procedure begins with a trivial star-generalization recognizing
only $w$. $P$ stores the set of $(i, j)$ pairs where a $\star$-operation
may be introduced without causing the generalization to be malformed.
At each step, we add one of the candidate operations to the generalization,
then remove any pairs from $P$ which are no longer feasible (because they
violate the nestedness requirement).
Following (\ref{eq:nest}), this is the set of pairs $(i', j')$ such
that $i < i' < j < j' \vee i' < i < j' < j$.

\setcounter{proglineno}{0}
\begin{figure}[t]
  \centerline{
    \pcode{
      \> \textsf{refine}($w, L, A$) \\
       \putno \> \> let $w$ be $x_1 \cdot x_2 \cdots x_n$ \\
       \putno \> \> $S$ := $\emptyset$ \\
       \putno \> \> $P$ := $\{ (i,j) ~|~ i,j \in [0,n], i < j \}$ \\
       \putno \> \> \label{intp-begin}\textbf{while} $P \not= \emptyset$ \\
       \putno \> \> \> \label{gen}\textrm{choose} $(i,j) \in P$ \\
       \putno \> \> \> $S'$ := $S \cup \{(i, j)\}$ \\
       \putno \> \> \> $P$ := $P \setminus \{(i, j)\}$ \\
       \putno\label{test} \> \> \> \textbf{if} $(\lang{\langle w, S' \rangle } \cap L = \emptyset)$ \\
       \> \> \> \> $S$ := $S'$ \\
       \> \> \> \> $P$ := $\left\{ (i', j') ~\middle|~
          \begin{array}{ll}
                  & (i', j') \in P \\
           \wedge & (j \le i' \vee j' \le j) \\
           \wedge & (j' \le i \vee j \le j')
        \end{array} \right\}$ \\
      \putno \> \> \textbf{return} $A \setminus \lang{\langle w, S \rangle}$
    }
  }
  \caption{\label{pcode:refine}
    Refining a regular approximation greedily by removing a maximal
    star-generalization of some counterexample $w$.
  }
\end{figure}

\begin{figure}
  \centerline{
    \begin{tabular}{ccccc}
    \includegraphics[width=0.25\linewidth]{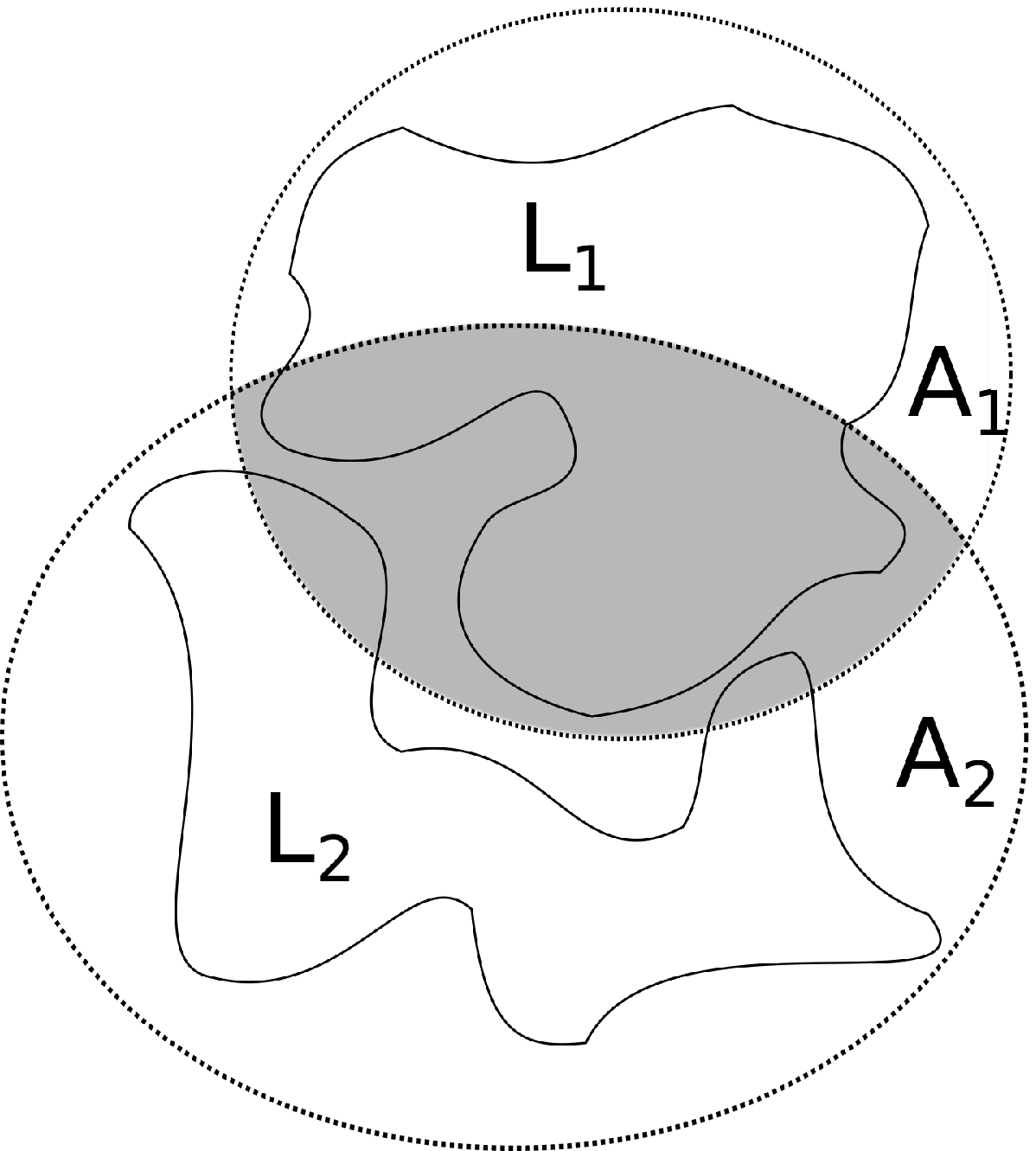}
    & \qquad &
    \includegraphics[width=0.25\linewidth]{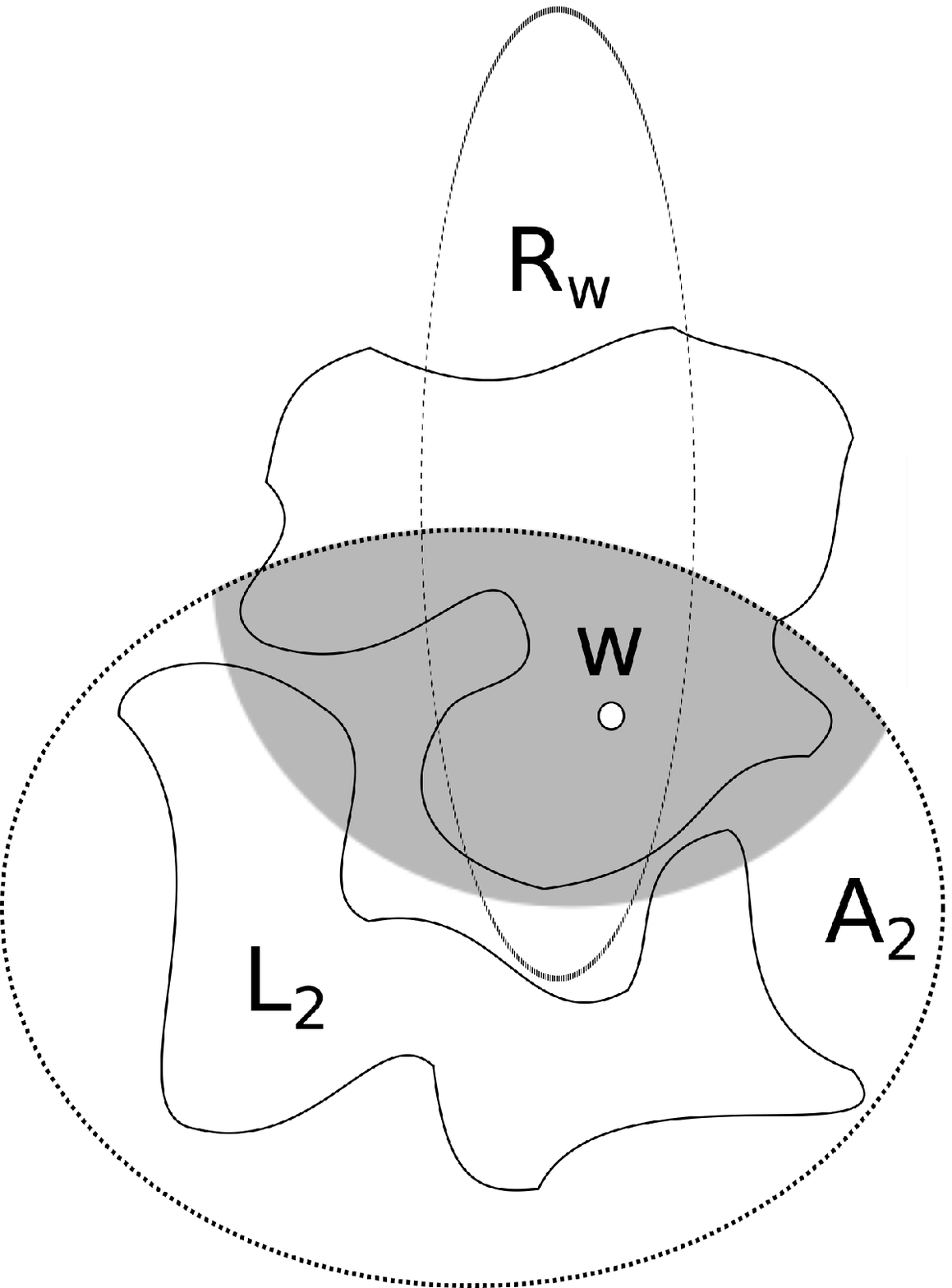}
    & \qquad &
    \includegraphics[width=0.25\linewidth]{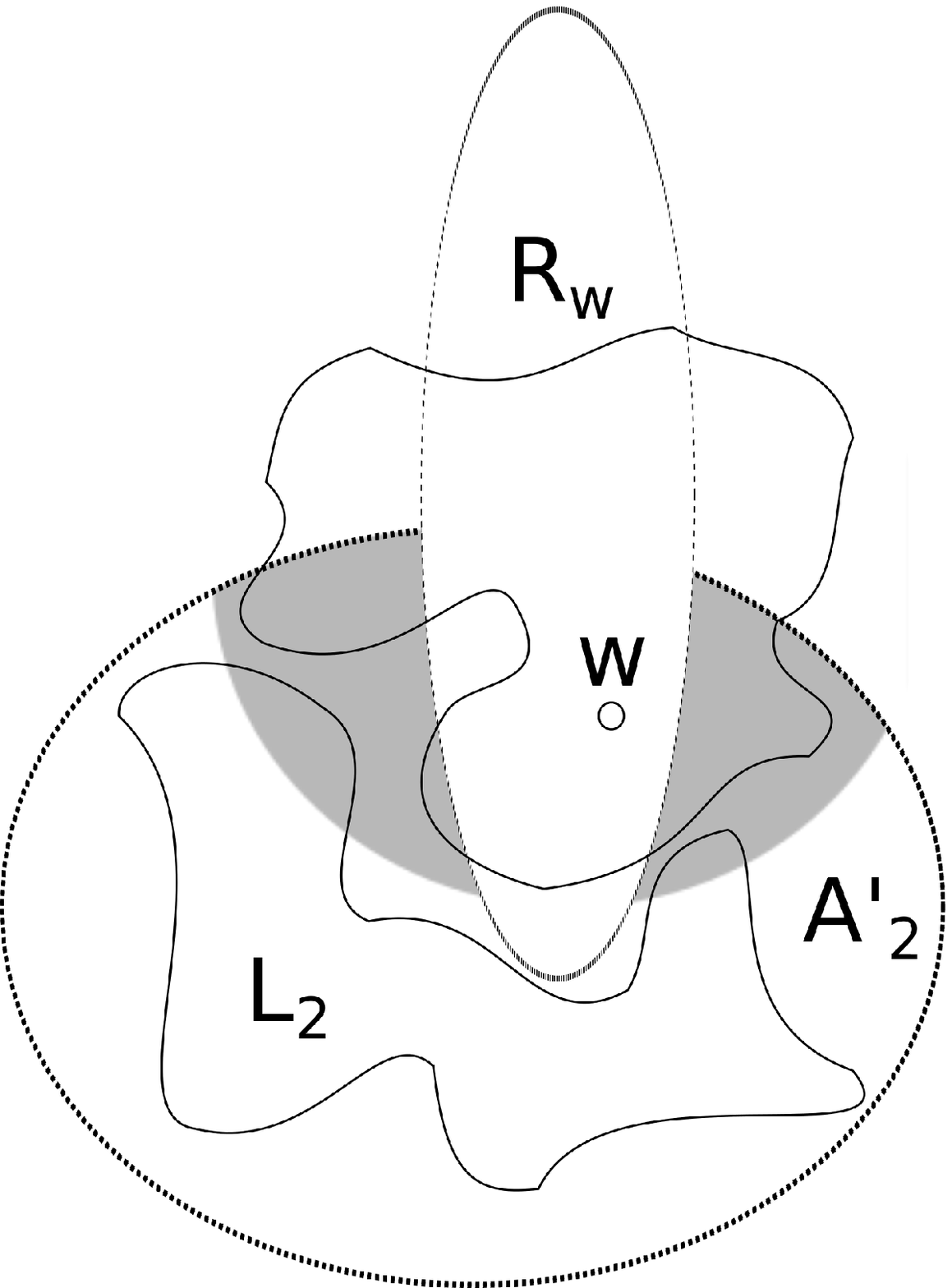} \\
    (a) & & (b) &  & (c)
    \end{tabular}
  }
  \caption{
    (a) A pair of context free languages $L_1, L_2$ and their initial approximations $A_1, A_2$.
        The intersection of the approximations is shaded.
    (b) A counterexample $w \in A_1 \cap A_2$ has been identified. 
        As $w \notin L_2$, we build a generalization
        $R_w$ such that $\{w\} \subseteq R_w \subseteq \overline{L_2}$. 
    (c) $L_2$'s new approximation is $A'_2 = A_2 \setminus R_w$.
    \label{fig:overview}
  }
\end{figure}

It is worth pointing out that this refinement procedure is \emph{anytime}:
If for some reason it would seem necessary or advantageous,
one can, without compromising correctness,
interrupt the while loop having considered only a subset of the possible
$\star$-augmentations, thereby settling for a smaller generalization.

\begin{example}\label{ex-stargen} \rm
Let 
$L = \{\mathtt{a}^i \mathtt{b}^{i+1} \mid i \geq  0\}$ be
(currently) approximated by $\ra^*\rb^*$,
and let the witness $w \not\in L$ be $\ra \ra \rb$.
To refine the regular approximation, 
$\mathsf{refine}(w,L,\ra^*\rb^*)$
begins with the trivial star-generalization $\tuple{w, \emptyset}$.
We greedily augment the counterexample with $\star$-operations,
in this case following lexicographic order,
with the accumulated language here described with a regular expression:

\[
\begin{array}{|cl|}
   \hline
   \star\mbox{-augmentation} & \multicolumn{1}{c|}{\mbox{$\lang{\langle w, S \rangle}$}}
\\ \hline
         & \ra \ra \rb
\\ (0,1) & \mbox{include, obtaining $\ra^\star \ra \rb$}
\\ (1,2) & \mbox{exclude, as $\rb \in L$}
\\ (2,3) & \mbox{exclude, as $\ra \rb \rb \in L$}
\\ (0, 2) & \mbox{exclude, as $\rb \in L$}
\\ (1, 3) & \mbox{include, obtaining $\ra^\star (\ra \rb)^\star$}
\\ (0, 3) & \mbox{include, obtaining $(\ra^\star (\ra \rb)^\star)^\star$}
\\ \hline
\end{array}
\]

\vspace{1ex}
\noindent
The complement of the resulting language is $(\ra^\star\ra\rb)^\star\rb(\ra|\rb)^\star$.
The language returned by $\mathsf{refine}(w,L,\mathtt{a}^*\mathtt{b}^*)$
is (represented by) the intersection automaton of this and the 
initial approximation
$\mathtt{a}^*\mathtt{b}^*$, yielding $\rb \rb^\star | \ra \ra^\star \rb \rb \rb^\star$.
This is the new, improved, regular approximation of $L$.
By construction, it does not contain $\mathtt{aab}$, 
but more importantly, along with $\mathtt{ab}$,
infinitely many other strings have been discarded from
the previous approximation $\mathtt{a}^*\mathtt{b}^*$,
namely all those strings containing at least one $\ra$ and
exactly one $\rb$.
\end{example}

The overall flow of the refinement step is illustrated in Figure~\ref{fig:overview}.
Figure~\ref{pcode:refine}'s refinement procedure has these properties:

\begin{enumerate}
  \item it is \emph{sound}: $L_i \subseteq (R_i \setminus \lang{R})
  \subseteq R_i$.

\item it \emph{terminates}: the while loop will be executed
  at most $\frac{n(n+1)}{2}$ times.

\item it is \emph{progressive}: the same witness $w$ cannot be
  produced again upon successive calls to \textsf{refine}.
\end{enumerate}

The refinement algorithm involves a check (line~\ref{test}) to see if
a regular and a context-free language overlap.  This problem is
decidable in polynomial time\footnote{The algorithm \emph{pre$^{*}$}
  described in~\cite{EsparzaR97,EsparzaRS00} has a time complexity of
  $O(|R|\times|Q|^{3})$ and space complexity of $O(|R|\times|Q|^{2})$,
  where $|R|$ is the number of productions in the context-free grammar
  and $|Q|$ is the number of states in the automaton. We use this
  algorithm in our implementation.}.  

\newcommand{\recon}{\ensuremath{\mathsf{r}}}
The following theorem is critical to the completeness result,
showing the relationship between $\kappa$ and $\Xi$.
Note that here we are intersecting sets of languages,
rather than the languages themselves.
\begin{lemma}\label{lemma-reconstruct_0}
  Let $e$ be a union-free regular expression, and $\kappa$ and $\Xi$ be the star-contraction and star-generalizations
  given in Definitions~\ref{def-contract} and \ref{def-Xi}.
  Then for all $w \in \lang{e}$, $\kappa(e) \cap \Xi(w) \neq \emptyset$.
\end{lemma}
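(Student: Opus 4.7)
The plan is to proceed by structural induction on the union-free expression $e$, matching the recursive structures of $\kappa$ and $\Xi$. The base cases, $e \in \Sigma \cup \{\epsilon\}$, are immediate: $\lang{e}$ is a singleton and appears in both $\kappa(e)$ and $\Xi(w)$ by definition.

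For the concatenation case $e = e_1 \cdot e_2$, I would factor $w = w_1 w_2$ with $w_i \in \lang{e_i}$, apply the inductive hypothesis to each factor to obtain languages $L_i \in \kappa(e_i) \cap \Xi(w_i)$, and take $L_1 L_2$ as the witness. Membership in $\kappa(e_1 e_2)$ is immediate from the concatenation clause of $\kappa$; membership in $\Xi(w)$ follows from the split clause, with the edge case $w_i = \epsilon$ handled by observing that $\Xi(\epsilon) = \{\epsilon\}$ forces $L_i = \{\epsilon\}$, so the product collapses to the other factor.

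The main obstacle is the star case $e = e_0^\star$. Here $w$ decomposes as $w^{(1)} \cdots w^{(k)}$ with each $w^{(j)} \in \lang{e_0}$ (taking $k = 0$ when $w = \epsilon$). By the inductive hypothesis each $w^{(j)}$ furnishes some $L_j \in \kappa(e_0) \cap \Xi(w^{(j)})$, and the natural candidate for the intersection is $L = (L_1 \cup \cdots \cup L_k)^\star$. Membership in $\kappa(e_0^\star)$ is immediate: pick $E \subseteq \kappa(e_0)$ whose members denote $\{L_1,\ldots,L_k\}$ and apply the clause $\kappa(e_0^\star) = \{(\bigparallel E)^\star \mid E \subseteq \kappa(e_0)\}$. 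The other side is harder, since $\Xi$ only nests stars within ordered substrings of $w$ and never introduces alternation, so $(L_1 \cup \cdots \cup L_k)^\star$ cannot be produced by $\Xi$ directly.

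To bridge this gap I would invoke the algebraic identity $(A \cup B)^\star = (A^\star B^\star)^\star$, iterated to yield $(L_1 \cup \cdots \cup L_k)^\star = (L_1^\star \cdots L_k^\star)^\star$. For each $j$ I would pick $\rho_j \in \Xi(w^{(j)})$ with $\lang{\rho_j} = L_j$; a short case analysis on the form of $\rho_j$ shows that $\rho_j^\star$ is again (up to collapsing a double star) an element of $\Xi(w^{(j)})$. The split and star clauses of $\Xi$ can then be applied iteratively to assemble $(\rho_1^\star \rho_2^\star \cdots \rho_k^\star)^\star \in \Xi(w)$, whose language is exactly $L$, providing the common element. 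The degenerate $k=0$ case collapses to $\lang{\emptyset^\star} = \{\epsilon\} \in \Xi(\epsilon)$, closing the induction.
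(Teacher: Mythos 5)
Your proposal is correct and follows essentially the same route as the paper's proof: structural induction on $e$, with the witness $r_1\cdot r_2$ in the concatenation case and $(r_1^\star\cdots r_k^\star)^\star \equiv (r_1\,|\,\cdots\,|\,r_k)^\star$ in the star case. Your additional care about the $\epsilon$-factor, the $k=0$ case, and the double-star collapse needed to keep the witness inside $\Xi(w)$ only makes explicit details the paper's proof leaves implicit.
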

\begin{proof}
  Assume $e = \{\ra\}, \ra \in \Sigma \cup \{\eps\}$. Then $w = \ra$.
  From the definitions, we have $\{\ra\} \in \kappa(e)$, and $\{\ra\} \in \Xi(w)$.

  Assume $e = e_1 \cdot e_2$, such that the induction hypothesis holds on $e_1$ and $e_2$.
  Consider some word $w \in \lang{e}$. We can partition $w$ into $w_1 \cdot w_2$, such that
  $w_1 \in \lang{e_1}$, $w_2 \in \lang{e_2}$.
  By the induction hypothesis, there is some $\recon_1$, $\recon_2$ such that
  $\recon_1 \in \kappa(e_1) \cap \Xi(w_1)$, $\recon_2 \in \kappa(e_2) \cap \Xi(w_2)$.
  As $\recon_1 \in \kappa(e_1)$ and $\recon_2 \in \kappa(e_2)$,
  from Definition~\ref{def-contract} we have
  $\recon_1 \cdot \recon_2 \in \kappa(e)$. By Definition~\ref{def-Xi}, we also have
  $\recon_1 \cdot \recon_2 \in \Xi(w)$. Therefore $\recon_1 \cdot \recon_2 \in \kappa(e) \cap \Xi(w)$.

  Assume $e = e'^\star$, for some $e'$ satisfying the induction hypothesis.
  Consider some word $w \in e$. We can partition $w$ into $w_1 \ldots w_k$, such
  that each $w_i \in e'$.
  By the induction hypothesis, each $w_i$ admits some star-generalization
  $\recon_i \in \kappa(e') \cap \Xi(w_i)$.
  Consider the generalization $\recon$ given by
    \[ \recon = (\recon(w_1)^\star \ldots \recon(w_k)^\star)^\star .\]
  By Definition~\ref{def-Xi}, $\recon \in \Xi(w)$.
  $\recon$ is equivalent to $(\recon(w_1) \cup \ldots \cup \recon(w_k))^\star$, which
  is in $\kappa(e)$. Therefore, $\recon \in \Xi(w) \cap \kappa(e)$.
\end{proof}

\begin{theorem}\label{lemma-reconstruct}
  Let $R$ be a regular language, and $\kappa$ and $\Xi$ be the star-contraction and star-generalizations
  given in Definitions~\ref{def-contract} and \ref{def-Xi}.
  Then for all $w \in R$, $\kappa(R) \cap \Xi(w) \neq \emptyset$.
\end{theorem}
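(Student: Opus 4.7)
The plan is to reduce this directly to the union-free case already established in Lemma~\ref{lemma-reconstruct_0}. Recall that, by the definition preceding the theorem, $\kappa(R)$ is constructed as $\kappa(e_1) \cup \cdots \cup \kappa(e_m)$ where $\{e_1, \ldots, e_m\}$ is a union-free decomposition of $R$; such a decomposition exists by Theorem~\ref{thm-nagy}.

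First I would fix $w \in R$ and appeal to the decomposition: since $R = \lang{e_1} \cup \cdots \cup \lang{e_m}$, there must exist some index $i$ with $w \in \lang{e_i}$. At this point $e_i$ is union-free, so Lemma~\ref{lemma-reconstruct_0} applies and yields an element $\recon \in \kappa(e_i) \cap \Xi(w)$. Finally, since $\kappa(e_i) \subseteq \kappa(R)$ by construction, we conclude $\recon \in \kappa(R) \cap \Xi(w)$, so the intersection is nonempty.

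There is essentially no hard step here: the substantive content (the inductive construction of a shared element across $\kappa$ and $\Xi$ from the factorisation of $w$) has already been discharged in Lemma~\ref{lemma-reconstruct_0}, and Theorem~\ref{thm-nagy} supplies the decomposition needed to lift that lemma from union-free expressions to arbitrary regular languages. The only thing to be mildly careful about is the remark made after the definition of $\kappa$ on languages, namely that $\kappa(R)$ depends on the chosen decomposition; the proof works uniformly for whatever decomposition is fixed, since the argument only needs the existence of some $i$ with $w \in \lang{e_i}$ within that particular decomposition.
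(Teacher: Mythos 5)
Your proposal is correct and follows essentially the same route as the paper's own proof: pick the union-free component $e_i$ of the fixed decomposition containing $w$, apply Lemma~\ref{lemma-reconstruct_0} to obtain an element of $\kappa(e_i) \cap \Xi(w)$, and conclude via $\kappa(e_i) \subseteq \kappa(R)$. Your added remark that the argument is uniform in the choice of decomposition is a sensible clarification but does not change the substance.
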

\begin{proof}
  As noted in Section~\ref{sec-contract}, the star-contraction of a regular language $R$
  is computed based on some union-free decomposition $E = \{e_1, \ldots, e_n\}$ of $R$.
  Consider $w \in R$. Therefore, there is some $e \in E$ such that $w \in \lang{e}$.
  By Lemma~\ref{lemma-reconstruct_0}, $\kappa(e) \cap \Xi(w) \neq \emptyset$.
  As $\kappa(e) \subseteq \kappa(R)$, we have $\kappa(R) \cap \Xi(w) \neq \emptyset$.
\end{proof}

\subsection{Epsilon-generalization}\label{sec-epsgen}
The notion of star-generalization, while useful for
reasoning, does not integrate well into existing automaton
algorithms. In this section, we introduce a slightly different
form of generalization.

Let $\waut(w)$ denote the automaton recognizing the single word 
$w \ x_1 \cdots x_n$. 
We have $\waut(x_1 \ldots x_n) = \tuple{Q, \Sigma, \delta, q_0, \{q_n\}}$, with
$Q = \{q_0, \ldots, q_n\}$ and $\delta = \{ (q_{i-1}, x_i, q_i) ~|~ i \in [1, n] \}$.
\begin{definition}[Epsilon-generalization]\label{def-epsgen}
  An epsilon-generalization of $w$ is any language obtained by
  augmenting the transition function of $\waut(w)$ with
  additional edges $E \cup P$, given by:
  \[ E \subseteq \{ (q_i, \eps, q_j) ~|~ i < j \} \]
  \[ P \subseteq \{ (q_{j-1}, w_j, q_{i}) ~|~ i < j \} \]
\end{definition}
That is, we may only introduce epsilon-transitions forwards; backwards transitions
always consume the same input character as the original outgoing transition from
the source state.
\newcommand{\Gen}{\ensuremath{\mathsf{Gen}}}
\newcommand{\gen}{\ensuremath{\mathsf{g}}}
We shall use $\Gen(w)$ to denote the set of epsilon-generalizations of $w$.
Note that where $\Xi(w)$ is a set of languages, $\Gen(w)$ is a set of automata.
Similarly, $\Gen_{L}(w)$ denotes the set of epsilon-generalizations with respect
to some language $L$.

\begin{lemma}\label{lemma-concat}
  Let $A_1 = \tuple{Q_1, \Sigma, \delta_1, q_{10}, \{q_{1n}\}}$
  and $A_2 = \tuple{Q_2, \Sigma, \delta_2, q_{20}, \{q_{2n}\}}$
  be (nondeterministic) finite-state automata such that $q_{1n}$
  has no outgoing edges.

  Then the automaton
  $A_1 \circ A_2 = \tuple{(Q_1 \cup Q_2) \setminus \{q_{1n}\},
      \Sigma, \delta_1 \cup \delta_2 [q_{1n} \mapsto q_{20}], \{q_{2n}\}}$
  recognizes the language $\lang{A_1} \cdot \lang{A_2}$.
\end{lemma}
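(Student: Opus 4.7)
The plan is to prove the equality $\lang{A_1 \circ A_2} = \lang{A_1} \cdot \lang{A_2}$ by the standard double inclusion. I would first remark that the tuple as written is missing its initial-state component, which is to be read as $q_{10}$ (or $q_{20}$ in the degenerate case $q_{10} = q_{1n}$), and that without loss of generality we may assume $Q_1 \cap Q_2 = \emptyset$ (otherwise rename). The key structural observation to extract before the inclusions is that in $A_1 \circ A_2$ there is exactly one ``bridge'' from $Q_1$-states to $Q_2$-states: every transition in $\delta_2$ stays inside $Q_2$, and every transition in $\delta_1$ has its source in $Q_1 \setminus \{q_{1n}\}$ (this is where the hypothesis that $q_{1n}$ has no outgoing edges is used), with $q_{20}$ being the only possible $Q_2$-target and only via transitions whose originals in $\delta_1$ targeted $q_{1n}$.

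For the forward inclusion, take $w_1 \in \lang{A_1}$ and $w_2 \in \lang{A_2}$. I would lift an accepting run of $A_1$ on $w_1$, which ends at $q_{1n}$: its final step $(p, x, q_{1n}) \in \delta_1$ is replaced in $A_1 \circ A_2$ by $(p, x, q_{20})$, so the lifted run reaches $q_{20}$ on reading $w_1$. Appending an accepting run of $A_2$ on $w_2$ from $q_{20}$ to $q_{2n}$ (which lives entirely inside $\delta_2 \subseteq \delta_1 \cup \delta_2[q_{1n} \mapsto q_{20}]$) produces an accepting run of $A_1 \circ A_2$ on $w_1 w_2$. The corner case $w_1 = \epsilon$ (forcing $q_{10} = q_{1n}$, hence the start state of $A_1 \circ A_2$ is $q_{20}$) is handled directly by the $A_2$-run alone.

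For the backward inclusion, consider any accepting run $\rho$ of $A_1 \circ A_2$ on some word $w$, starting at $q_{10} \in Q_1$ and ending at $q_{2n} \in Q_2$. By the structural observation, $\rho$ admits a unique first transition that leaves $Q_1$, and this transition is a remapped one of the form $(p, x, q_{20})$ coming from some $(p, x, q_{1n}) \in \delta_1$. Split $w = w_1 w_2$ where $w_1$ is the portion consumed up to and including this crossing. The prefix of $\rho$ then corresponds, edge by edge, to a run of $A_1$ that ends at $q_{1n}$, i.e.\ an accepting run on $w_1$; and the suffix is a run in $A_2$ alone (using only $\delta_2$, since $\rho$ never re-enters $Q_1$) from $q_{20}$ to $q_{2n}$, i.e.\ an accepting run on $w_2$. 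Hence $w \in \lang{A_1} \cdot \lang{A_2}$.

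The main obstacle is the ``no return'' property in the backward direction. If $q_{1n}$ had outgoing edges, the remapping $q_{1n} \mapsto q_{20}$ applied at sources would introduce transitions out of $q_{20}$ into $Q_1$, so a run could oscillate between the two halves and the clean split of $w$ into $w_1 w_2$ would fail. Making explicit how the assumption on $q_{1n}$ rules this out is the one nontrivial step; the rest is routine bookkeeping about lifting and projecting runs between $A_1$, $A_2$ and $A_1 \circ A_2$.
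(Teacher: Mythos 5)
Your proof is correct and follows essentially the same route as the paper's: both directions are handled by the same structural observation that $q_{20}$ is the unique bridge from $Q_1$-states to $Q_2$-states and that, because $q_{1n}$ has no outgoing edges, no transition ever returns from $Q_2$ to $Q_1$, so every accepting run splits uniquely into an $A_1$-prefix and an $A_2$-suffix. You are in fact somewhat more careful than the paper, which silently asserts the lifted $w_1$-path and does not discuss the missing initial-state component, the disjointness of $Q_1$ and $Q_2$, or the degenerate case $q_{10}=q_{1n}$.
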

\begin{proof}
  Let $A = A_1 \circ A_2$.
  Assume $w \in \lang{A_1} \cdot \lang{A_2}$. 
  Then $w = w_1 \cdot w_2$, for some
  $w_1 \in \lang{A_1}, w_2 \in \lang{A_2}$.
  So there is some path from $q_{10}$ to $q_{20}$ matching $w_1$ in
  $A_1 \circ A_2$, and a path from $q_{20}$ to $q_{2n}$ matching $w_2$.
  Therefore $w$ is recognized by $A$.

  Assume $w$ is recognized by $A$. There are no transitions
  from states in $Q_1$ to states in $Q_2$ except $q_{20}$; therefore,
  any path from $q_{10}$ to $q_{2n}$ in $A$ must pass through $q_{20}$.
  There are no transitions from states in $Q_2$ to states
  in $Q_1$ (as $q_{1n}$ had no outgoing edges). Therefore, once
  reaching a state in $Q_2$, a path through $A$ must remain in $Q_2$.

  Hence we can divide the path through $A$ into a prefix,
  following transitions exclusively in $A_1$ and reaching $q_{20}$,
  and a suffix from $q_{20}$ to $q_{2n}$ following transitions
  exclusively in $A_2$. Therefore, $w \in \lang{A_1} \cdot \lang{A_2}$.
\end{proof}

\begin{theorem}\label{thm-stareps}
  For any word $w$, $e \in \Xi(w)$, there is some $A \in \Gen(w)$ such that
  $\lang{e} = \lang{A}$.
  That is, star-generalization of $w$ may be expressed by an
  equivalent epsilon-generalization.
\end{theorem}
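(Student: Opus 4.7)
The plan is to prove the theorem by structural induction on $e \in \Xi(w)$, following the cases of Definition~\ref{def-Xi} and constructing a witness $A \in \Gen(w)$ explicitly in parallel with the recursive decomposition of $e$.

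For the base cases $e = \eps$ and $e = x \in \Sigma$, the plain line-automaton $\waut(w)$---which already lies in $\Gen(w)$ with no added edges---recognizes $\{w\} = \lang{e}$. For $e = x^\star$ I would augment $\waut(x)$ with the forward epsilon $(q_0, \eps, q_1)$ and the backward self-loop $(q_0, x, q_0)$; both edges are admitted by Definition~\ref{def-epsgen}, and the resulting automaton clearly recognizes $x^\star$. The ``whole-word star'' case $e = (\alpha_1 \ldots \alpha_n)^\star$ for $n \ge 2$ is handled identically, by adding $(q_0, \eps, q_n)$ and the single backward edge $(q_{n-1}, w_n, q_0)$ to $\waut(w)$.

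For the inductive case $e = e_1 e_2$ (with $e_1 \in \Xi(\alpha_1 \ldots \alpha_i)$ and $e_2 \in \Xi(\alpha_{i+1} \ldots \alpha_n)$) I would invoke the induction hypothesis to obtain $A_1$ and $A_2$ with $\lang{A_k} = \lang{e_k}$, and then compose them via Lemma~\ref{lemma-concat}. A short check from Definition~\ref{def-epsgen} confirms that $A_1$'s accept state has no outgoing transitions---forward epsilons originate only from states whose index is strictly less than the destination's, and backward edges originate only at intermediate states---so Lemma~\ref{lemma-concat}'s hypothesis holds. After renaming $A_2$'s states so its start coincides with $A_1$'s accept, the composition $A_1 \circ A_2$ lies in $\Gen(w)$ and recognizes $\lang{e_1} \cdot \lang{e_2} = \lang{e}$.

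The main obstacle is the final case $e = (e_1 e_2)^\star$. I would start from $B = A_1 \circ A_2$ (recognizing $\lang{e_1 e_2}$) and augment it with the forward epsilon $(q_0, \eps, q_n)$ together with, for each $k \in [1, n]$ at which $B$ has any transition $(q_{k-1}, w_k, q_\ell)$ whose target $q_\ell$ is reachable to $q_n$ by a chain of forward epsilons in $B$, the backward edge $(q_{k-1}, w_k, q_0)$. The new epsilon covers the zero-iteration case, while the new backward edges allow any accepting traversal of $B$ to be re-routed to $q_0$ at the precise moment an iteration would otherwise conclude. Soundness reduces to decomposing an accepting path of the augmented automaton at its uses of the new backward edges and observing that each resulting segment spells a word in $\lang{B} = \lang{e_1 e_2}$. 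Completeness is where the technical work lies: given $v = v_1 \cdots v_m \in \lang{e_1 e_2}^\star$, one must select, for each $v_j$ with $j < m$, an accepting path in $B$ whose final input-consuming transition is one of those our rule has redirected. Showing that such a path always exists---particularly when the offending transition is a backward edge inherited from $B$ rather than the normal $(q_{n-1}, w_n, q_n)$---is the heart of the argument and will require leveraging the inductive structure of $B$, specifically the correspondence between its backward edges and the nested stars introduced inside $e_1$ and $e_2$.
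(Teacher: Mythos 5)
Your overall strategy coincides with the paper's: structural induction on $e$, the line automaton $\waut(w)$ for the base cases, Lemma~\ref{lemma-concat} for concatenation (your observation that the accept state of an epsilon-generalization can never have outgoing edges, directly from Definition~\ref{def-epsgen}, is correct and slightly cleaner than the paper, which carries this as an explicit induction invariant), and for the star case the same construction: add the forward epsilon $(q_0, \eps, q_n)$ and redirect to $q_0$ a copy of each transition that can serve as the last input-consuming step of an accepting run. The paper keys its added backward edges on $\eps$-reachability of $q_n$ from the canonical index $q_j$, whereas you key them on $\eps$-reachability of $q_n$ from the \emph{actual target} of an existing transition; your variant is, if anything, the better-targeted one.

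The genuine gap is that you stop exactly where the proof obligation lies: the completeness half of the star case ($\lang{e_1 e_2}^\star \subseteq \lang{A'}$) is announced as ``the heart of the argument'' but not carried out, so the proposal is a plan rather than a proof at its only nontrivial point. Moreover, the difficulty you anticipate there is overstated given your own rule: no appeal to ``the inductive structure of $B$'' is needed. For any nonempty $v_i \in \lang{B}$, fix an accepting path of $B$; since $q_n$ has no outgoing edges and the suffix after the last input-consuming transition $(q_{k-1}, w_k, q_\ell)$ consists only of forward epsilons ending at $q_n$, that transition satisfies precisely the trigger condition of your redirection rule, so $(q_{k-1}, w_k, q_0)$ is present in $A'$; replacing the final consuming step by this edge and discarding the trailing epsilons yields a path from $q_0$ back to $q_0$ spelling $v_i$, and chaining these for $i < m$ followed by an unmodified accepting path for $v_m$ accepts $v_1 \cdots v_m$. (The paper closes the same gap with its ``second-last state'' observation, phrased for its variant of the rule.) Supplying this short argument, together with the converse decomposition at uses of the added edges which you do sketch correctly, would complete the proof.
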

\begin{proof}
  Consider $e \in \Xi(w)$.
  
  Assume $e = \ra$, $\ra \in \Sigma \cup \{\eps\}$.
  $\waut(\ra) \in \Gen(w)$, and $\lang{\waut(\ra)} = \lang{e}$.
  Also, the final state of $\waut(\ra)$ has no outgoing transitions.

  Assume that for all expressions
  $e'$ of up to depth $k$, if $e' \in \Xi(w)$ there is some
  $A \in \Gen(w)$ such that $\lang{A} = \lang{e'}$, and the accept
  state of $A$ has no outgoing transitions.
  Consider some star-generalization $e \in \Xi(w)$ of depth $k + 1$.

  Assume $e = e'^\star$ for some generalization $e' \in \Xi(w)$.
  As $e$ has depth $k+1$, $e'$ is of depth $k$.
  Then there is some automaton
  $A = \tuple{Q, \Sigma, \delta, q_0, \{q_n\}} \in \Gen(w)$ such that $\lang{A} = \lang{e'}$.
  We construct a new automaton $A'$ with transition relation $\delta'$
  \[ \delta' = \delta \cup \{q_0 \xrightarrow{\eps} q_n\}
    \cup \{ (q_{j-1} \xrightarrow{w_j} q_{0}) ~|~ (q_j \xrightarrow{\eps^\star} q_n) \in \delta \} \]
  The added transitions are of the form permitted by Definition~\ref{def-epsgen}.
  As $A$ is an epsilon-generalization of $w$, $A'$ is also a valid epsilon-generalization.
  As the accept state of $A$ had no outgoing transitions, and we have not added any transitions
  beginning at $q_n$, the accept state of $A'$ also has no outgoing transitions.
  We now consider the language recognized by $A'$.
  Assume some word $w$ is in $\lang{e'^\star}$. Then either $w = \eps$, or $w = w_1 \ldots w_m$ such that
  $w_i \in \lang{e'} \setminus \{\eps\}$. If $w = \eps$, then $w$ is recognized by $A'$ (by $q_0 \xrightarrow{\eps} q_n$).
  Otherwise, each $w_i$ is recognized by some path $p_i$ from $q_j$ to $q_k$ in $A$, such that $q_n$ is reachable
  from $q_k$ by $\eps$-transitions. Let $q_{k'}$ be the second-last state in $p_i$. By construction, there must
  be some alternate transition from $q_{k'}$ to $q_{0}$ in $A'$;
  then there must be some path following $w_i$ from $q_0$ to $q_0$ in $A'$. Therefore, $w$ is recognized by
  $A'$.
  Now assume there is some word $w \neq \eps$ recognized by $A'$. We can partition $w$ into sub-words $w_1 \ldots w_m$
  such that the path of each $w_i$ with $i < m$ starts at $q_0$, makes its final transition via an introduced
  edge, and uses no other introduced edges. As the path corresponding to $w_i$ finishes with an introduced edge,
  there must be some corresponding path from $q_0$ to $q_n$ in $A$. So $w_i \in \lang{e'}$ for $i < m$. And as the
  path corresponding to $w_m$ starts at $q_0$ and does not use any introduced edges, $w_m \in \lang{e'}$.
  Therefore, $w \in \lang{e'^\star} = \lang{e}$.
  Therefore $\lang{A'} = \lang{e'^\star} = \lang{e}$.

  Assume $e = e_1 \cdot e_2$ for $e_1 \in \Xi(w_1)$, $e_2 \in \Xi(w_2)$ and $w = w_1 w_2$.
  As $e_1$ and $e_2$ have depth at most $k$, there exists $A_1 \in \Gen(w_1)$ and
  $A_2 \in \Gen(w_2)$ satisfying the induction hypothesis.
  The automaton $A' = A_1 \circ A_2$ (with $\circ$ as defined in Lemma~\ref{lemma-concat})
  is a valid epsilon-generalization. So $A \in \Gen(w)$.
  By Lemma~\ref{lemma-concat}, $A' = A_1 \circ A_2$ recognizes $\lang{A_1} \cdot \lang{A_2} = \lang{e_1} \cdot \lang{e_2}$.
  Therefore, $\lang{A} = \lang{e}$.
  As the final state of $A_2$ had no outgoing transitions, and we have not added any outgoing transitions
  from $q_n$, the final state of $A'$ has no outgoing transitions.

  As we can construct epsilon-generalizations for trivial star generalizations (depth $k=1$), epsilon-generalizations
  of size $k$ can be constructed from star generalizations of size $k-1$, any element of $\Xi(w)$
  must correspond to an equivalent element of $\Gen(w)$.
\end{proof}

We can incrementally construct an epsilon-generalization with respect to
some co-context-free language $\overline{L}$ by adapting algorithms for
the intersection of context-free languages and finite automata, such as
the $\mathit{pre}^{*}$ algorithm described in~\cite{EsparzaR97}.

Essentially, we maintain a table $\tau \subseteq Q \times \Gamma \times Q$
such that $(q_i, P, q_j) \in \tau$ iff the context-free production $P$
can be generated by some sub-word recognized by a path from $q_i$ to $q_j$.
Whenever a new transition is added to the generalization, we update $\tau$
with any newly feasible productions; if the start production $S$ is ever
generated on a path from $q_0$ to $q_n$, the generalization has ceased
to be valid -- in which case we revert the table and discard the most
recent augmentation.
The $\mathit{pre}^{*}$ algorithm, as described in~\cite{EsparzaRS00},
exhibits $\bigO{\abs{\Gamma} \abs{Q}^3}$ worst-case time complexity.
In the worst case, where every generalization step fails after the maximum
number of steps, this gives the generalization procedure a worst-case
complexity of $\bigO{\abs{\Gamma} \abs{Q}^5}$.

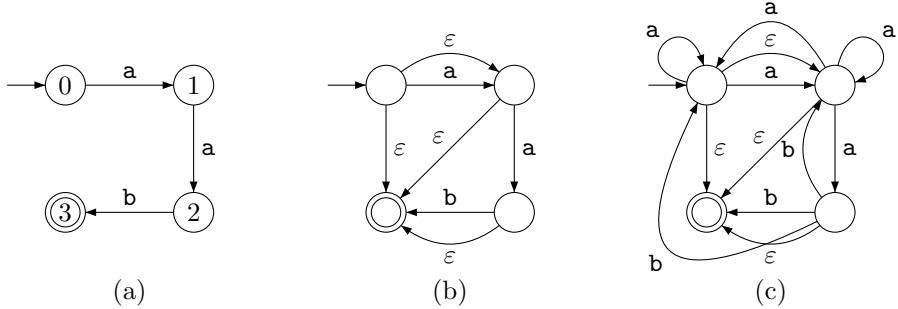
\begin{figure}[t]
\begin{center}
  \unitlength=3pt
  \begin{picture}(120,40)(0,-10)
  \gasset{Nw=5,Nh=5,curvedepth=0}
  \node[Nmarks=i,iangle=180](L0)(0,16){0}
  \node(L1)(16,16){1}
  \node(L2)(16,0){2}
  \node[Nmarks=r](L3)(0,0){3}
  \drawedge[ELside=l](L0,L1){\tt a}
  \drawedge[ELside=l](L1,L2){\tt a}
  \drawedge[ELside=r](L2,L3){\tt b}

  \gasset{Nw=5,Nh=5,curvedepth=0}
  \node[Nmarks=i,iangle=180](C0)(40,16){~}
  \node(C1)(56,16){~}
  \node(C2)(56,0){~}
  \node[Nmarks=r](C3)(40,0){~}
  \drawedge[ELside=l](C0,C1){\tt a}
  \drawedge[ELside=l](C1,C2){\tt a}
  \drawedge[ELside=r](C2,C3){\tt b}
  \drawedge[ELside=l](C0,C3){\eps}
  \drawedge[ELside=r](C1,C3){\eps}
  \gasset{curvedepth=4}
  \drawedge[ELside=l](C0,C1){\eps}
  \drawedge[ELside=l](C2,C3){\eps}

  \gasset{Nw=5,Nh=5,curvedepth=0}
  \node[Nmarks=i,iangle=180](C0)(80,16){~}
  \node(C1)(96,16){~}
  \node(C2)(96,0){~}
  \node[Nmarks=r](C3)(80,0){~}
  \drawedge[ELside=l](C0,C1){\tt a}
  \drawedge[ELside=l](C1,C2){\tt a}
  \drawedge[ELside=r](C2,C3){\tt b}
  \drawedge[ELside=l](C0,C3){\eps}
  \drawedge[ELside=r](C1,C3){\eps}
  \drawloop[loopangle=135,loopdiam=5](C0){\tt a}
  \drawloop[loopangle=45,loopdiam=5](C1){\tt a}
  \gasset{curvedepth=4}
  \drawedge[ELside=l](C0,C1){\eps}
  \drawedge[ELside=l](C2,C3){\eps}
  \drawedge[ELside=l](C2,C1){\tt b}
  \gasset{curvedepth=-8}
  \drawedge[ELside=r](C1,C0){\tt a}
  \gasset{curvedepth=18}
  \drawedge[ELside=l](C2,C0){\tt b}

  \node[Nframe=n](labelA)(8,-10){(a)}
  \node[Nframe=n](labelB)(48,-10){(b)}
  \node[Nframe=n](labelC)(88,-10){(c)}
  \end{picture}
\end{center}
\caption{Construction of $\mathrm{refine}_{\eps}(w,L)$, 
where $w = \ra\ra\rb$ and $L = \{\mathtt{a}^i \mathtt{b}^{i+1} \mid i \geq 0\}$:
(a) Initial automaton for $w$;
(b) the automaton after adding forward $\epsilon$-transitions;
(c) the final epsilon-generalization.\label{fig:example-ab}}
\end{figure}

\begin{example}\label{ex-epsgen} \rm
  Consider again the star-generalization of $\ra\ra\rb$ described in Example~\ref{ex-stargen}.
  If we were instead constructing an epsilon-generalization, we start with the automaton
  $A$ recognizing $w$, shown in Figure~\ref{fig:example-ab}(a). 

  We first try greedily adding forwards epsilon transitions. This yields the
  automaton shown in Figure~\ref{fig:example-ab}(b), corresponding to the language
  $(\ra? (\ra \rb?)?)?$.

  The first backward transition we attempt is $(q_{1-1}, \ra, q_0)$, followed by
  $(q_{2-1}, \ra, q_{1})$. The next transition, $(q_{3-1}, \rb, q_{2})$, cannot
  be added as it would accept $\ra \rb \rb$, which is in $L$.
  This process continues, resulting in the final automaton shown in Figure~\ref{fig:example-ab}(c).
  The language recognized by this automaton is $(\ra^\star \ra \rb)^\star \ra^\star)$, which is
  equivalent to the language obtained by star-generalization in Example~\ref{ex-stargen}.
\end{example}

\subsection{Maximum generalization}\label{sec-maxgen}
The procedure described in the previous section constructs \emph{some}
maximal element of $\Xi_{L}(w)$ (or $\Gen_{L}(w)$).
It is, however, undirected; the generalization is chosen blindly from the set of
possible maximal generalizations.

Even if the query languages are regularly separable, it is possible that
the refinement step may choose an infinite sequence of generalizations
which, though maximal, cannot separate the queries.

We can instead construct a generalization $\mathsf{gen}(L, w)$ which
computes the \emph{union} of all maximal star-generalizations of
$w$ with respect to $\overline{L}$. That is, it computes
$\xi_{\overline{L}}(w) = \bigcup \Xi_{\overline{L}}(w)$ directly.
We shall refer to this as the \emph{maximum} star-generalization.
A possible (though inefficient) method for computing this is given in
Figure~\ref{pcode:maxgen}.

\setcounter{proglineno}{0}
\begin{figure}[t]
  \centerline{
    \pcode{
      \> \textsf{maxgen}($L, w$) \\
       \putno \> \> let $w$ be $x_1 \cdot x_2 \cdots x_n$ \\
       \putno \> \> $S$ := $\emptyset$ \\
       \putno \> \> $P$ := $\{ (i, j) ~|~ i, j \in [0, n], i \le j \}$ \\
       \putno \> \> \textbf{return} $\mathsf{gen}(L, \langle w, S \rangle, P)$ \\
       \\
     \> \textsf{maxgen}($L, \langle w, S \rangle, \emptyset$) \\
     \putno \> \> \textbf{return} $\lang{\langle w, S \rangle}$ \\
     \\
     \> \textsf{maxgen}($L, \langle w, S \rangle, \{(i, j)\} \cup P$) \\
       \putno \> \> $R_f$ := \textsf{gen}($L, \langle w, S \rangle, P$) \\
       \putno \> \> $S'$ := $S \cup \{(i, j)\}$ \\
       \putno\label{gen-safe-check} \> \> \textbf{if} ($L \cap \lang{\langle w, S'\rangle} = \emptyset$) \\
       \putno\> \> \> $P'$ := $\left\{ (i', j') ~\middle|~
          \begin{array}{ll}
                  & (i', j') \in P \\
           \wedge & (j \le i' \vee j' \le j) \\
           \wedge & (j' \le i \vee j \le j')
        \end{array} \right\}$ \\
       \putno\label{gen-union} \> \> \> $R_f$ := $R_f \cup \textsf{gen}(L, \langle w, S' \rangle, P')$ \\
       \putno \> \> \textbf{return} $R_f$
    }
  }
  \caption{\label{pcode:maxgen}
    Computing $\xi_{\overline{L}}(w)$, the maximum star-generalization of $w$ with respect to $\overline{L}$.
  }
\end{figure}
The procedure $\textsf{maxgen}$ carries around $S$, a partial generalization,
and $P$, the set of candidate $\star$-augmentations. At each stage, an
augmentation $e$ is selected from $P$, and we recursively
compute the set of valid further generalizations of $\langle w, S \rangle$
both including and excluding $e$,
finally taking the union of the sub-languages.

The maximum epsilon-generalization with respect to $L$ -- denoted by $\gen_{L}(w)$ --
may be constructed by an analogous procedure.

We now show that this generalization procedure is sufficiently powerful
as to prove separability for any pair of regularly separable languages.

\begin{lemma}\label{lemma-genword}
  Consider a context-free language $G$, and regular language $R$
  such that $G \cap R = \emptyset$.
  Then for any word $w \in R$, there is some
  $e' \in \kappa(R)$ such that $\lang{e'} \subseteq \xi_{\overline{G}}(w) $.
\end{lemma}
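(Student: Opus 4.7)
The plan is to chain together Theorem~\ref{lemma-reconstruct} (which says every word of $R$ is witnessed by some common element of $\kappa(R)$ and $\Xi(w)$) with Proposition~1 (which says every element of $\kappa(R)$ denotes a sublanguage of $R$) and the disjointness hypothesis $G \cap R = \emptyset$. No new combinatorial work on star-generalizations should be required; everything has been packaged by the earlier lemmas.

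Concretely, I would proceed as follows. First, given $w \in R$, invoke Theorem~\ref{lemma-reconstruct} to obtain some $e' \in \kappa(R) \cap \Xi(w)$; this is precisely the statement that every word in $R$ has a star-generalization that is also a star-contraction of $R$. Second, use part (2) of Proposition~1 to conclude $\lang{e'} \subseteq R$. Third, combine this with $G \cap R = \emptyset$ to get $\lang{e'} \cap G = \emptyset$, i.e.\ $\lang{e'} \subseteq \overline{G}$. Therefore $e'$ satisfies both conditions required of a member of $\Xi_{\overline{G}}(w)$, namely $e' \in \Xi(w)$ and $\lang{e'} \subseteq \overline{G}$.

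Finally, since $e' \in \Xi_{\overline{G}}(w)$, by the definition $\xi_{\overline{G}}(w) = \bigcup \Xi_{\overline{G}}(w)$ we immediately have $\lang{e'} \subseteq \xi_{\overline{G}}(w)$, which is what the lemma asserts.

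I do not anticipate a main obstacle here: the heavy lifting was done in Lemma~\ref{lemma-reconstruct_0} and Theorem~\ref{lemma-reconstruct}, which guarantee the existence of the bridging expression $e'$, and the rest is a one-line inclusion chase using the defining properties of $\kappa$ and $\Xi_{\overline{G}}$. The only thing to be careful about is keeping track of the direction of complementation --- membership in $\Xi_{\overline{G}}(w)$ requires disjointness from $G$ rather than containment in $G$ --- but this is immediate from $G \cap R = \emptyset$.
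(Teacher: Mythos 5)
Your proposal is correct and follows essentially the same route as the paper's own proof: invoke Theorem~\ref{lemma-reconstruct} to obtain $e' \in \kappa(R) \cap \Xi(w)$, use $\lang{e'} \subseteq R$ and $G \cap R = \emptyset$ to place $e'$ in $\Xi_{\overline{G}}(w)$, and conclude by the definition of $\xi_{\overline{G}}(w)$ as a union. Your explicit appeal to Proposition~1(2) merely makes precise a step the paper leaves implicit.
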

\begin{proof}
  By Lemma~\ref{lemma-reconstruct},
  there is some $e \in \Xi(w) \cap \kappa(R)$.
  As $R \cap G = \emptyset$, we have $e \in \Xi_{\overline{G}}(w)$.
  Therefore $\lang{e} \subseteq \bigcup \{\lang{e'} ~|~ e' \in \Xi_{\overline{G}}(w) \} = \xi_{\overline{G}}(w)$.
\end{proof}

\begin{corollary}\label{corrolary-epsword}
  Consider a context-free language $G$, and regular language $R$
  such that $G \cap R = \emptyset$.
  Then for any word $w \in R$, there is some
  $e' \in \kappa(R)$ such that $\lang{e'} \subseteq \mathsf{g}_{\overline{G}}(w) $.
\end{corollary}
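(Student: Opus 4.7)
The plan is to derive the corollary directly from Lemma~\ref{lemma-genword} by using Theorem~\ref{thm-stareps} to transport the star-generalization witness across to the epsilon-generalization world. Concretely, I would first invoke Lemma~\ref{lemma-genword} on $G$, $R$ and $w$ to obtain some $e' \in \kappa(R)$ with $\lang{e'} \subseteq \xi_{\overline{G}}(w)$. The task then reduces to establishing the inclusion $\xi_{\overline{G}}(w) \subseteq \gen_{\overline{G}}(w)$, for then the same $e'$ serves as the required witness.

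To prove that inclusion, I would unfold the definition $\xi_{\overline{G}}(w) = \bigcup \{ \lang{e} \mid e \in \Xi_{\overline{G}}(w) \}$ and show that each individual $e \in \Xi_{\overline{G}}(w)$ contributes a language that is already present in $\gen_{\overline{G}}(w) = \bigcup \{ \lang{A} \mid A \in \Gen_{\overline{G}}(w) \}$. Fix such an $e$. By Theorem~\ref{thm-stareps}, there exists some $A \in \Gen(w)$ with $\lang{A} = \lang{e}$. Since $e \in \Xi_{\overline{G}}(w)$ means $\lang{e} \subseteq \overline{G}$, we immediately get $\lang{A} \subseteq \overline{G}$, so $A \in \Gen_{\overline{G}}(w)$. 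Hence $\lang{e} = \lang{A} \subseteq \gen_{\overline{G}}(w)$, and taking the union over all $e$ gives $\xi_{\overline{G}}(w) \subseteq \gen_{\overline{G}}(w)$.

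Combining these two steps, $\lang{e'} \subseteq \xi_{\overline{G}}(w) \subseteq \gen_{\overline{G}}(w)$, which is the conclusion of the corollary.

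The main obstacle is essentially bookkeeping rather than conceptual: one must be careful that Theorem~\ref{thm-stareps} preserves the ``with respect to $\overline{G}$'' side-condition, but this is automatic because the equivalence $\lang{A} = \lang{e}$ is language equality, so membership in $\overline{G}$ transfers verbatim. No new combinatorial argument is needed beyond what Lemma~\ref{lemma-genword} and Theorem~\ref{thm-stareps} already supply.
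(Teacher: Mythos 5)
Your proposal is correct and matches the paper's proof, which simply states that the corollary ``follows immediately from Lemma~\ref{lemma-genword} and Theorem~\ref{thm-stareps}''; you have merely spelled out the implicit step that Theorem~\ref{thm-stareps} yields $\xi_{\overline{G}}(w) \subseteq \gen_{\overline{G}}(w)$, which is exactly the intended reading.
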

\begin{proof}
  This follows immediately from Lemma~\ref{lemma-genword} and Theorem~\ref{thm-stareps}.
\end{proof}

\begin{theorem}\label{thm-complete}
  Given a pair of regularly separable context-free languages $(L, L')$
  and initial regular approximations $R_L$ and $R_{L'}$ with $L \subseteq R_L$
  and $L' \subseteq R_{L'}$, the refinement process described in
  Section~\ref{sec-refine}
will construct a separating pair $(S_L, S_{L'})$
  in a finite number of steps when refining using the maximum star- or
  epsilon-generalization.

\end{theorem}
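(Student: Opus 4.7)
The plan is to use the (unknown) separating pair $(S_L, S_{L'})$, guaranteed by hypothesis, as a yardstick that bounds the number of refinements, even though the algorithm itself is oblivious to it. Define the \emph{excess} of each current approximation with respect to this pair,
\[
  X^{(i)} = R_L^{(i)} \cap \overline{S_L}, \qquad Y^{(i)} = R_{L'}^{(i)} \cap \overline{S_{L'}}.
\]
Both sequences are regular and monotone decreasing. Since $L \subseteq S_L$ we have $X^{(i)} \cap L = \emptyset$, and symmetrically $Y^{(i)} \cap L' = \emptyset$. Because $S_L \cap S_{L'} = \emptyset$, every counterexample $w \in R_L^{(i)} \cap R_{L'}^{(i)}$ lies in $X^{(i)} \cup Y^{(i)}$, so proving termination reduces to showing $X^{(i)} = Y^{(i)} = \emptyset$ after finitely many steps.

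Next I would fix once and for all union-free decompositions of $X^{(0)}$ and $Y^{(0)}$; the sets $\kappa(X^{(0)})$ and $\kappa(Y^{(0)})$ are finite by the proposition in Section~\ref{sec-contract}. Define the ``active'' subsets
\[
  \mathcal{E}_L^{(i)} = \{\, e \in \kappa(X^{(0)}) \mid \lang{e} \cap R_L^{(i)} \neq \emptyset \,\}, \quad \mathcal{E}_{L'}^{(i)} = \{\, f \in \kappa(Y^{(0)}) \mid \lang{f} \cap R_{L'}^{(i)} \neq \emptyset \,\},
\]
and take $\Phi(i) = |\mathcal{E}_L^{(i)}| + |\mathcal{E}_{L'}^{(i)}|$, bounded above by $|\kappa(X^{(0)})| + |\kappa(Y^{(0)})|$, as a well-founded measure. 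Note that $\mathcal{E}_L^{(i)} = \emptyset$ is equivalent to $X^{(i)} = \emptyset$, because $X^{(0)}$ is the union of the $\lang{e}$ for $e \in \kappa(X^{(0)})$; so $\Phi(i) = 0$ already yields termination.

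The central step is to show $\Phi(i+1) < \Phi(i)$ at every non-terminal iteration. Let $w_i \in R_L^{(i)} \cap R_{L'}^{(i)}$ be the counterexample produced. Because $S_L \cap S_{L'} = \emptyset$, at least one of (i)~$w_i \notin S_L$ and (ii)~$w_i \notin S_{L'}$ holds. In case (i), $w_i \in X^{(i)} \subseteq X^{(0)}$ and also $w_i \notin L$ (since $L \subseteq S_L$), so the algorithm refines $R_L$ by subtracting $\xi_{\overline{L}}(w_i)$ (or $\gen_{\overline{L}}(w_i)$ in the epsilon variant). I then apply Lemma~\ref{lemma-genword} with $G = L$ and $R = X^{(0)}$ -- valid because $X^{(0)} \cap L = \emptyset$ -- to obtain $e' \in \kappa(X^{(0)})$ with $w_i \in \lang{e'}$ and $\lang{e'} \subseteq \xi_{\overline{L}}(w_i)$. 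The first clause places $e'$ in $\mathcal{E}_L^{(i)}$; the second forces $\lang{e'} \cap R_L^{(i+1)} = \emptyset$, so $e' \notin \mathcal{E}_L^{(i+1)}$. Case (ii) is symmetric, with Corollary~\ref{corrolary-epsword} supplying the epsilon variant. Hence $\Phi$ strictly decreases, the loop terminates in at most $|\kappa(X^{(0)})| + |\kappa(Y^{(0)})|$ iterations, and at termination $R_L^{(i)} \cap R_{L'}^{(i)} \subseteq S_L \cap S_{L'} = \emptyset$.

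The hard part will be ensuring that the $e'$ returned by Lemma~\ref{lemma-genword} both \emph{contains} $w_i$ (so $e'$ is currently active) and is \emph{contained in} the maximum generalization (so $e'$ becomes inactive at the next step). The stated lemma only records the containment direction; the membership $w_i \in \lang{e'}$ has to be extracted from the $\Xi(w_i) \cap \kappa(X^{(0)})$ witness that Lemma~\ref{lemma-reconstruct_0} produces inside its proof. A secondary subtlety is that $\Phi$ depends on the unknown $(S_L, S_{L'})$: the refinement itself is not guided by this pair, but its mere existence suffices to bound the refinement count.
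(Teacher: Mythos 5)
Your proposal is correct and follows essentially the same route as the paper: the paper likewise uses the star-contraction of the unknown separator ($\kappa(S)\cup\kappa(\overline{S})$, versus your $\kappa(X^{(0)})\cup\kappa(Y^{(0)})$) as a finite well-founded measure, and invokes Lemma~\ref{lemma-genword} to show each refinement permanently deactivates at least one element. The subtlety you flag --- that the $e'$ from Lemma~\ref{lemma-genword} must also satisfy $w_i\in\lang{e'}$, which follows from $e'\in\Xi(w_i)$ in the proof of Lemma~\ref{lemma-reconstruct_0} rather than from the lemma's statement --- is used implicitly by the paper's proof as well (``$w \in \recon$''), so your explicit acknowledgment is if anything more careful.
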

\begin{proof}
  Consider the (unknown) regular language $S$ separating $L$ and $L'$.
  Assume $\kappa(S)$ and $\kappa(\overline{S})$ are as given in Theorem~\ref{lemma-reconstruct}.
  Let $K^i$ denote the elements of $\kappa(S) \cup \kappa(\overline{S})$ having non-empty
  intersection with the current approximation $R_{L}^i \cap R_{L'}^i$.

  Assume that, at a given step, there is some word $w \in R_L^i \cap R_{L'}^i$.
  $w$ must be in exactly one of $S$ and $\overline{S}$; we assume $w$ is in
  $S$ (the case of $\overline{S}$ is symmetric). As $w \in S$, there must
  be some $\recon \in \kappa(S)$ such that $\lang{\recon} \subseteq \xi_{\overline{L'}}(w) \subseteq \mathsf{g}_{\overline{L'}}(w)$.
  As $w \in R_L^i \cap R_{L'}^i$, and $w \in \recon$, we have $\recon \in K^i$.

  As $R_{L'}^{i+1} = R_{L'}^i \setminus \xi_{\overline{L'}}(w)$ (or $R_{L'}^i \setminus \mathsf{g}_{\overline{L'}}(w)$),
  we have
  $\lang{\recon} \cap R_{L'}^{i+1} = \emptyset$.
  Therefore, $K^{i+1} \subset K^{i}$. As $[K^1, K^2, \ldots]$ is a decreasing sequence,
  and $K^1$ is finite, the refinement process must terminate after finitely many steps.
\end{proof}

\section{Example}
\label{sec-example}
Consider the two context-free grammars $ G_1 = \langle \{S_1,A_1,B_1\},
\Sigma, P_1, S_1 \rangle $ and $ G_2 = \langle \{S_2,A_2,B_2\},
\Sigma, P_2, S_2 \rangle $ where $\Sigma = \{\ra, \rb\}$ and $P_1$ and
$P_2$ are, respectively,

\vspace{2mm}
\begin{tabular}{ll}
\begin{minipage}{0.5\textwidth}
\begin{tabular}{lcl}
$ S_1$ & $\rightarrow$ & $A_1 B_1 $ \\
$ A_1$ & $\rightarrow$ & $\ra\ra \mid \rb\rb \mid  \ra S_1 \ra \mid  \rb S_1 \rb  $ \\
$ B_1$ & $\rightarrow$ & $\ra\rb B_1 \mid \ra\rb  $ 
\end{tabular}
\end{minipage} &
\begin{minipage}{0.5\textwidth}
\begin{tabular}{lcl}
$ S_2$ & $\rightarrow$ & $A_2 B_2 $ \\
$ A_2$ & $\rightarrow$ & $\ra\ra \mid \rb\rb \mid  \ra S_2 \ra \mid  \rb S_2 \rb  $ \\
$ B_2$ & $\rightarrow$ & $ \rb\ra B_2 \mid \rb\ra  $ 
\end{tabular}
\end{minipage}
\end{tabular}
\vspace{2mm}

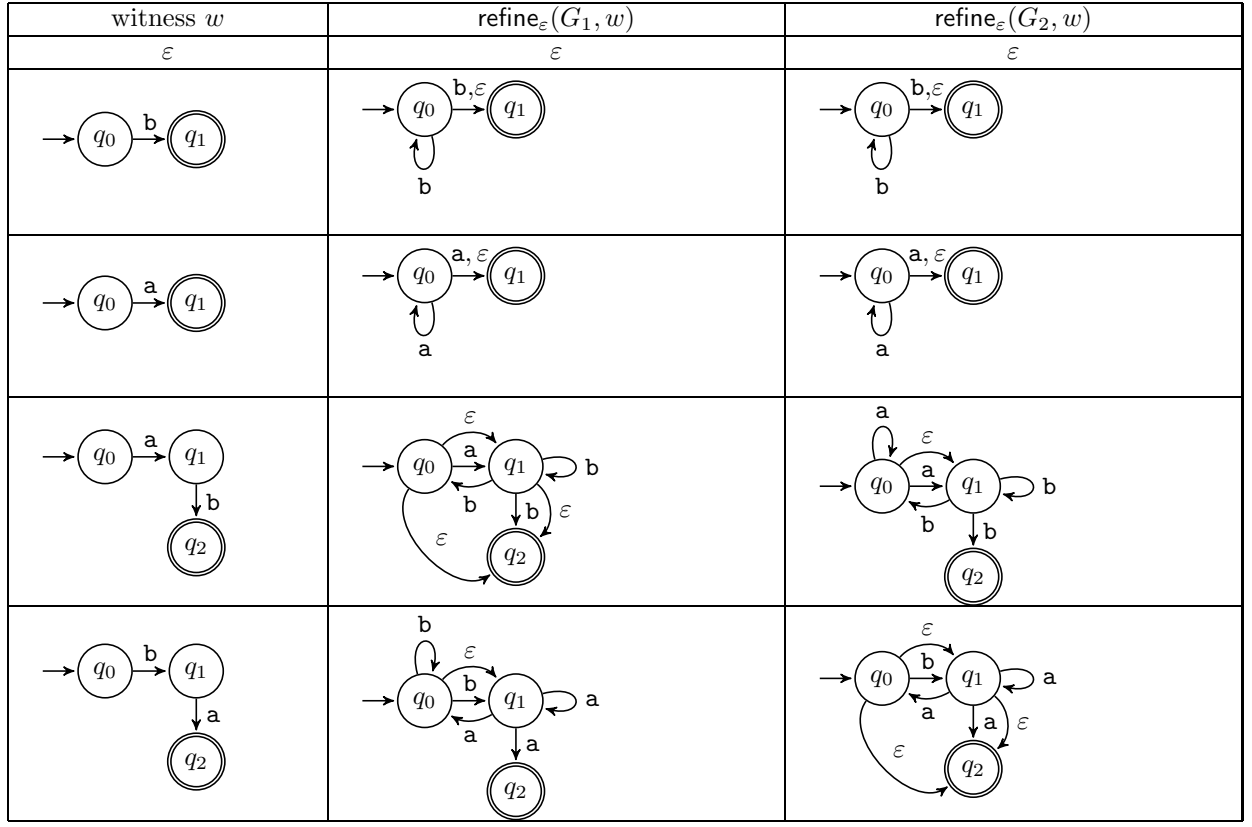
\begin{figure}[t]
\begin{tabular}{|c|c|c|}
\hline 
witness $w$  & \textsf{refine}$_{\eps}$($G_1, w $) & \textsf{refine}$_\eps$($G_2, w$) \\
\hline 
  \eps & \eps & \eps \\
\hline
\begin{minipage}{0.23\textwidth}
  \begin{tikzpicture}[->,>=stealth',shorten >=1pt,auto, node distance=1.2cm, semithick]
    \newState{q0}{$q_0$}{initial, initial text={}}{minimum size=0.5pt}
    \newState{q1}{$q_1$}{right of=q0}{accepting,minimum size=0.5pt}
    \newTransition{q0}{q1}{$\rb$}{}
  \end{tikzpicture}
\end{minipage} & 

\begin{minipage}{0.34\textwidth}
  \begin{tikzpicture}[->,>=stealth',shorten >=1pt,auto, node distance=1.2cm, semithick]
    \newState{q0}{$q_0$}{initial, initial text={}}{minimum size=0.5pt}
    \newState{q1}{$q_1$}{right of=q0}{accepting,minimum size=0.5pt}
    \newTransition{q0}{q1}{$\rb$,$\eps$}{}
    \newTransition{q0}{q0}{$\rb$}{loop below}
  \end{tikzpicture}
\end{minipage} & 

\begin{minipage}{0.34\textwidth}
  \begin{tikzpicture}[->,>=stealth',shorten >=1pt,auto, node distance=1.2cm, semithick]
    \newState{q0}{$q_0$}{initial, initial text={}}{minimum size=0.5pt}
    \newState{q1}{$q_1$}{right of=q0}{accepting,minimum size=0.5pt}
    \newTransition{q0}{q1}{$\rb$,$\eps$}{}
    \newTransition{q0}{q0}{$\rb$}{loop below}
  \end{tikzpicture} 
\end{minipage} 
\\ &&
\\ \hline 
\begin{minipage}{0.23\textwidth}
  \begin{tikzpicture}[->,>=stealth',shorten >=1pt,auto, node distance=1.2cm, semithick]
    \newState{q0}{$q_0$}{initial, initial text={}}{minimum size=0.5pt}
    \newState{q1}{$q_1$}{right of=q0}{accepting,minimum size=0.5pt}
    \newTransition{q0}{q1}{$\ra$\vphantom{$\rb$}}{}
  \end{tikzpicture}
\end{minipage} & 

\begin{minipage}{0.34\textwidth}
  \begin{tikzpicture}[->,>=stealth',shorten >=1pt,auto, node distance=1.2cm, semithick]
    \newState{q0}{$q_0$}{initial, initial text={}}{minimum size=0.5pt}
    \newState{q1}{$q_1$}{right of=q0}{accepting,minimum size=0.5pt}
    \newTransition{q0}{q1}{$\ra,\eps$\vphantom{$\rb,\eps$}}{}
    \newTransition{q0}{q0}{$\ra$}{loop below}
  \end{tikzpicture}
\end{minipage} & 

\begin{minipage}{0.34\textwidth}
  \begin{tikzpicture}[->,>=stealth',shorten >=1pt,auto, node distance=1.2cm, semithick]
    \newState{q0}{$q_0$}{initial, initial text={}}{minimum size=0.5pt}
    \newState{q1}{$q_1$}{right of=q0}{accepting,minimum size=0.5pt}
    \newTransition{q0}{q1}{$\ra,\eps$\vphantom{$\rb,\eps$}}{}
    \newTransition{q0}{q0}{$\ra$}{loop below}
  \end{tikzpicture} 
\end{minipage} 
\\ &&
\\ \hline 
\begin{minipage}{0.23\textwidth}
  \begin{tikzpicture}[->,>=stealth',shorten >=1pt,auto, node distance=1.2cm, semithick]
    \newState{q0}{$q_0$}{initial, initial text={}}{minimum size=0.5pt}
    \newState{q1}{$q_1$}{right of=q0}{minimum size=0.5pt}
    \newState{q2}{$q_2$}{below of=q1}{accepting,minimum size=0.5pt}
    \newTransition{q0}{q1}{$\ra$}{}
    \newTransition{q1}{q2}{$\rb$}{}
  \end{tikzpicture}
\end{minipage} & 

\begin{minipage}{0.34\textwidth}
  \begin{tikzpicture}[->,>=stealth',shorten >=1pt,auto, node distance=1.2cm, semithick]
    \newState{q0}{$q_0$}{initial, initial text={}}{minimum size=0.5pt}
    \newState{q1}{$q_1$}{right of=q0}{minimum size=0.5pt}
    \newState{q2}{$q_2$}{below of=q1}{accepting,minimum size=0.5pt}
    \newTransition{q0}{q1}{$\ra$}{}
    \newTransition{q1}{q2}{$\rb$}{}
    \newTransition{q1}{q1}{$\rb$}{loop right}
    \newTransition{q1}{q0}{$\rb$}{bend left}
    \newTransition{q0}{q1}{$\eps$}{bend left=50}
    \newTransition{q1}{q2}{$\eps$}{bend left=50}
    \newTransition{q0}{q2}{$\eps$}{bend right=80}
  \end{tikzpicture}
\end{minipage} & 

\begin{minipage}{0.34\textwidth}
  \begin{tikzpicture}[->,>=stealth',shorten >=1pt,auto, node distance=1.2cm, semithick]
    \newState{q0}{$q_0$}{initial, initial text={}}{minimum size=0.5pt}
    \newState{q1}{$q_1$}{right of=q0}{minimum size=0.5pt}
    \newState{q2}{$q_2$}{below of=q1}{accepting,minimum size=0.5pt}
    \newTransition{q0}{q1}{$\ra$}{}
    \newTransition{q1}{q2}{$\rb$}{}
    \newTransition{q0}{q0}{$\ra$}{loop above}
    \newTransition{q1}{q1}{$\rb$}{loop right}
    \newTransition{q1}{q0}{$\rb$}{bend left}
    \newTransition{q0}{q1}{$\eps$}{bend left=50}
  \end{tikzpicture}
\end{minipage} 
\\ \hline 
\begin{minipage}{0.23\textwidth}
  \begin{tikzpicture}[->,>=stealth',shorten >=1pt,auto, node distance=1.2cm, semithick]
    \newState{q0}{$q_0$}{initial, initial text={}}{minimum size=0.5pt}
    \newState{q1}{$q_1$}{right of=q0}{minimum size=0.5pt}
    \newState{q2}{$q_2$}{below of=q1}{accepting,minimum size=0.5pt}
    \newTransition{q0}{q1}{$\rb$}{}
    \newTransition{q1}{q2}{$\ra$}{}
  \end{tikzpicture}
\end{minipage} & 

\begin{minipage}{0.34\textwidth}
  \begin{tikzpicture}[->,>=stealth',shorten >=1pt,auto, node distance=1.2cm, semithick]
    \newState{q0}{$q_0$}{initial, initial text={}}{minimum size=0.5pt}
    \newState{q1}{$q_1$}{right of=q0}{minimum size=0.5pt}
    \newState{q2}{$q_2$}{below of=q1}{accepting,minimum size=0.5pt}
    \newTransition{q0}{q1}{$\rb$}{}
    \newTransition{q1}{q2}{$\ra$}{}
    \newTransition{q0}{q0}{$\rb$}{loop above}
    \newTransition{q1}{q1}{$\ra$}{loop right}
    \newTransition{q1}{q0}{$\ra$}{bend left}
    \newTransition{q0}{q1}{$\eps$}{bend left=50}
  \end{tikzpicture}
\end{minipage} & 

\begin{minipage}{0.34\textwidth}
  \begin{tikzpicture}[->,>=stealth',shorten >=1pt,auto, node distance=1.2cm, semithick]
    \newState{q0}{$q_0$}{initial, initial text={}}{minimum size=0.5pt}
    \newState{q1}{$q_1$}{right of=q0}{minimum size=0.5pt}
    \newState{q2}{$q_2$}{below of=q1}{accepting,minimum size=0.5pt}
    \newTransition{q0}{q1}{$\rb$}{}
    \newTransition{q1}{q2}{$\ra$}{}
    \newTransition{q1}{q1}{$\ra$}{loop right}
    \newTransition{q1}{q0}{$\ra$}{bend left}
    \newTransition{q0}{q1}{$\eps$}{bend left=50}
    \newTransition{q1}{q2}{$\eps$}{bend left=50}
    \newTransition{q0}{q2}{$\eps$}{bend right=80}
  \end{tikzpicture}
\end{minipage} 
\\ \hline  
\end{tabular} 
\caption{\label{fig:example-1}Relevant witnesses and generalizations obtained by
  greedy epsilon generalization.}
\end{figure}

\noindent 
Note that $\lang{G_1} = \{ ww^{R}(\ra\rb)^+ \mid w \in \Sigma^{*} \}$ 
and $\lang{G_2} = \{ ww^{R}(\rb\ra)^+ \mid w \in \Sigma^{*} \}$. 

The first step of our method will approximate $G_1$ and $G_2$ with
finite-state automata $A_1$ and $A_2$. The only requirement is that
$\lang{G_1} \subseteq \lang{A_1}$ and $\lang{G_2} \subseteq \lang{A_2}$. 
For simplicity, assume $A_1 = A_2 = \Sigma^{*}$.
Next, we check if $\lang{A_1} \cap \lang{A_2} \not= \emptyset$. 
In this case, the intersection is trivially not empty. 
Furthermore, our regular solver provides the witness $w=\eps$.
This cannot be generalized, so we eliminate $\eps$ from both approximations
and try again, this time obtaining $w=\rb$.
In the third step we refine the regular approximations. We assume the use
of greedy epsilon refinement, preferring backwards transitions.
We first generalize the witness by calling \textsf{refine}$_{\eps}$($G_1,w$) and
\textsf{refine}$_{\eps}$($G_2,w$) to produce new approximations $\lang{A_1'} = \lang{A_1}
\setminus \mathsf{refine}_{\eps}(G_1,w)$ and $\lang{A_2'} = \lang{A_2} \setminus
\mathsf{refine}_{\eps}(G_2,w)$, respectively.  We show the automata obtained
from $\mathsf{refine}_{\eps}(G_1,b)$ and $\mathsf{refine}_{\eps}(G_2,b)$ on the first row
in Figure~\ref{fig:example-1}. In both cases, we obtain the language $\rb^\star$.

Since $G_1$ and $G_2$ are regularly separable, using maximal refinement
would be guaranteed to eventually
eventually halt proving that the languages are disjoint.
While we have no such guarantee for greedy refinement, in this case
it successfully proves separation after 5 refinement steps.
Figure~\ref{fig:example-1} depicts the rest of witnesses obtained as
well as their generalizations produced by the procedure
\textsf{refine}$_{\eps}$.

\section{Previous Refinement Techniques}
\label{sec-compare}
Several CEGAR-based approaches have been proposed for testing
intersection of context-free languages. In this section, we attempt to
characterise the expressiveness of existing refinement methods.  For
these comparisons we do not consider the effect of initial regular
approximations, as they do not affect the expressiveness of the
refinement method.  For any fixed finite set of regularly-separable
languages, there is always some approximation scheme which allows the
languages to be trivially proven separate; however it is impossible to
define such an approximation in general.

The idea of using CEGAR to check the intersection of CFGs was
pioneered by Bouajjani \etal~\cite{BouajjaniET03} for the context of
verifying concurrent programs with recursive procedures.
They rely on the concept of \emph{refinable finite-chain abstraction}
consisting of computing the series $(\alpha_i)_{i \geq 1}$ which
overapproximates the language of a CFG $L$ (\ie, $L \subseteq
\alpha_i(L)$) such that $\alpha_1(L)\supset \alpha_2(L)\supset \cdots
\supseteq L$.  The method is parameterized by the refinable
abstraction. \cite{BouajjaniET03} describe several possible
abstractions but no experimental evaluation is provided.  Chaki
\etal~\cite{Chaki_TACAS06} extend~\cite{BouajjaniET03} by, among other
contributions, implementing and evaluating this method.
The experimental evaluation of Chaki \etal uses both the
\emph{$i^{th}$-prefix} and \emph{$i^{th}$-suffix} abstractions. Given
language $L$, the \emph{$i^{th}$-prefix} abstraction $\alpha_i(L)$ is
the set of words of $L$ of length less than $i$, together with the set
of prefixes of length $i$ of $L$. The \emph{$i^{th}$-suffix
  abstraction} can be defined analogously. 
We next provide a theorem about the 
expressiveness of the \emph{$i^{th}$-prefix abstraction}. 
A similar result holds for the \emph{$i^{th}$-suffix abstraction}.

\begin{theorem}
  There exist regularly separable languages that cannot be proven separate
  by the $i^{th}$-prefix abstraction.
\end{theorem}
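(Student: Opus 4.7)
The plan is to exhibit a pair of context-free languages $L_1, L_2$ that are regularly separable but whose $i^{th}$-prefix abstractions $\alpha_i(L_1)$ and $\alpha_i(L_2)$ share a common word for every $i$. The underlying intuition is that the $i^{th}$-prefix abstraction remembers only bounded information about the start of each word, so if the feature distinguishing $L_1$ from $L_2$ can be pushed arbitrarily far to the right, no finite $i$ will capture it.

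Concretely, I would take $L_1 = \{\ra^n \rb^n \mid n \geq 1\}$ and $L_2 = \{\ra^n \rc^n \mid n \geq 1\}$ over $\Sigma = \{\ra,\rb,\rc\}$. Both are context-free and trivially disjoint, and they are regularly separable: the regular language $R = \Sigma^\star \rb \Sigma^\star$ contains $L_1$ and is disjoint from $L_2$, so $(R, \overline{R})$ is a separating pair.

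The key observation is that for every $i \geq 1$, the word $\ra^i$ is itself a length-$i$ prefix of some word in \emph{both} languages: it is the length-$i$ prefix of $\ra^i \rb^i \in L_1$ and of $\ra^i \rc^i \in L_2$. Under the intended semantics of the $i^{th}$-prefix abstraction --- in which each length-$i$ prefix is extended by $\Sigma^\star$ so that $\alpha_i(L)$ genuinely over-approximates $L$ --- this means every word $\ra^i u$ with $u \in \Sigma^\star$ belongs to $\alpha_i(L_1) \cap \alpha_i(L_2)$. Taking for instance $u = \ra^i$ gives $\ra^{2i} \in \alpha_i(L_1) \cap \alpha_i(L_2)$, so the abstractions have non-empty intersection. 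The case $i = 0$ is degenerate: $\alpha_0(L) = \Sigma^\star$ for any nonempty $L$, so the intersection is trivially nonempty. Since this holds for every $i$, no $i^{th}$-prefix abstraction can prove $L_1$ and $L_2$ separate.

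The main obstacle is really just pinning down the definition of $\alpha_i$ from the description given earlier in the paper: the text says $\alpha_i(L)$ comprises the words of $L$ of length less than $i$ together with ``the set of prefixes of length $i$ of $L$'', which taken literally would give only a finite set and fail to over-approximate $L$. The intended reading is that each length-$i$ prefix is followed by $\Sigma^\star$, and once that point is fixed the argument reduces to the routine verification above. No subtler separation-of-languages machinery is needed, because the essential failure of the prefix abstraction is already visible on this tiny example.
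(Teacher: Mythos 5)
Your proof is correct and follows essentially the same route as the paper's: both arguments exhibit a regularly separable pair whose words share the common length-$i$ prefix $\ra^i$ for every $i$, so the $i^{th}$-prefix abstractions always intersect. The only difference is cosmetic --- the paper uses the regular pair $\ra^\star\rb$ and $\ra^\star\rc$ where you use $\{\ra^n\rb^n\}$ and $\{\ra^n\rc^n\}$ --- and your added care about reading $\alpha_i$ as prefix-followed-by-$\Sigma^\star$ is a reasonable clarification of a point the paper leaves implicit.
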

\begin{proof}
  Consider the languages $R_1 = \ra^* \rb$, $R_2 = \ra^* \rc$. 
  $R_1 \cap R_2$ is empty. However,
  for a given length $i$, the string ${\ra}^i$ forms a prefix to
  words in both $R_1$ and $R_2$.
  It follows that the intersection of the two abstractions will always 
  be non-empty, so the refinement method cannot prove the languages
  separate.
\end{proof}

The \sys{lcegar} method described by Long \etal~\cite{LongCMM12}
is based on a similar refinement
framework, but the approach differs radically. They maintain
a pair of context-free grammars $A_1, A_2$ over-approximating the
intersection of the original languages.
At each refinement step, an \emph{elementary bounded language} $B_i$ is generated
from each grammar $A_i$.\footnote{
  An elementary bounded language is some language of the form
  $B = w_1^\star \ldots w_k^\star$, where each $w_i$ is a (finite) word in $\Sigma^\star$.
}
The refinement ensures $B_i \cap A_i \neq \emptyset$,
but $B_i$ is not necessarily either an over- or under-approximation of $A_i$.
They then compute $I = B_i \cap L_1 \cap L_2$. If $I$ is non-empty,
$L_1 \cap L_2$ must also be non-empty. If $I$ is empty, then the approximations
can safely be refined by subtracting the $B_i$.

We now wish to characterise the set of languages for which \sys{lcegar}
can prove separation.
Note that we do not consider the initial approximation; for any fixed
pair of regularly-separable languages, there necessarily exists \emph{some}
approximation method which immediately proves separation without refinement.
\begin{theorem}\label{thm-nonsep}
  There exist non-regularly-separable languages which can be proven separate
  by \sys{lcegar}.
\end{theorem}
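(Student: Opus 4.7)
The plan is to exhibit an explicit pair of disjoint context-free languages that are not regularly separable yet admit a terminating run of \sys{lcegar}. The candidate I would use is $L_1 = \{\ra^n \rb^n \mid n \geq 0\}$ and $L_2 = \{\ra^n \rb^m \mid n, m \geq 0,~n \neq m\}$. Both are plainly context-free ($L_1$ via $S \to \ra S \rb \mid \eps$; $L_2$ as the union of two CFLs, one for $n > m$ and one for $n < m$), and $L_1 \cap L_2 = \emptyset$ by construction.

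For non-regular-separability I would invoke a standard pumping-lemma argument. Suppose some regular $R$ satisfied $L_1 \subseteq R$ and $L_2 \subseteq \overline{R}$, and let $n_0$ be the pumping length of $R$. Then $\ra^{n_0}\rb^{n_0} \in L_1 \subseteq R$ decomposes as $xyz$ with $|xy| \leq n_0$ and $|y| \geq 1$, forcing $y = \ra^k$ for some $k \geq 1$. Pumping gives $\ra^{n_0+k}\rb^{n_0} \in R$, but this word lies in $L_2 \subseteq \overline{R}$, a contradiction.

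For the \sys{lcegar} half I would start from the natural initial approximations $A_1, A_2$ given by the original CFGs for $L_1$ and $L_2$, and show that a single refinement step collapses both. The key observation is that $L_1, L_2 \subseteq B$ for $B = \ra^\star \rb^\star$, an elementary bounded language derivable directly from the Parikh structure of either grammar. Since $B \cap A_i = L_i \neq \emptyset$, $B$ is a legitimate choice of $B_i$. The test $I = B \cap L_1 \cap L_2 = L_1 \cap L_2 = \emptyset$ succeeds, and the refinement step replaces each $A_i$ by $A_i \setminus B = \emptyset$ (because $L_i \subseteq B$), so \sys{lcegar} reports separation.

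The main obstacle I anticipate is confirming that \sys{lcegar}'s bounded-language extraction in \cite{LongCMM12} actually admits the choice $B = \ra^\star \rb^\star$, or at least some finite sequence of elementary bounded languages whose union covers $L_1 \cup L_2$. I expect to handle this by appealing to Ginsburg--Spanier: because $A_1$ and $A_2$ are themselves contained in a bounded language, their Parikh images are captured by finitely many elementary bounded languages, so some nondeterministic execution of \sys{lcegar} selects a covering sequence and terminates with empty approximations, witnessing that the non-regularly-separable pair $(L_1, L_2)$ is handled by \sys{lcegar}.
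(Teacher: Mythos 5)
Your proposal is correct and takes essentially the same route as the paper: the paper also uses $L = \{\ra^n\rb^n \mid n \ge 0\}$ paired with its complement and has \sys{lcegar} choose the elementary bounded language $B = \ra^\star\rb^\star$, so that $B \cap L_1 \cap L_2 = \emptyset$ and the subtraction $A_1 \setminus B$ empties an approximation in one step. Your only deviations are cosmetic: you restrict $L_2$ to the complement within $\ra^\star\rb^\star$ (so both approximations empty rather than just one) and you spell out the pumping argument for non-regular-separability, which the paper simply cites from its earlier discussion.
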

\begin{proof}
  Consider the languages $L$ and $\overline{L}$,
  where $L = \{\ra^n \rb^n \mid n \geq 0\}$.
  These are not regularly separable.
  Still, \sys{lcegar} will find that they do not overlap.
  Assume initial approximations $A_1 = L_1$ and $A_2 = L_2$.
  At the first iteration, \sys{lcegar} may choose bounded approximation
  $B = \ra^* \rb^*$. 
  It will find $B \cap L_1 \cap L_2 = \emptyset$,
  then update $A_1 = A_1 \setminus B = \emptyset$.
  As $A_1 = \emptyset$, 
  the refinement process has successfully proven separation.
\end{proof}

\begin{lemma}\label{lemma-boundgap}
  For any bounded regular language $B = w_1^* \cdots w_k^*$, 
  there is some word $p$ that is not a substring of any word in $B$.
\end{lemma}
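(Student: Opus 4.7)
The plan is to use a counting argument. Assume (as is the case throughout the paper) that $|\Sigma| \ge 2$; also, without loss of generality I may discard any $w_i = \eps$, since doing so does not change $B$.

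First I would characterise the substrings of words in $B$. Any word in $B$ has the form $w_1^{n_1}\cdots w_k^{n_k}$, and any contiguous substring of it can be written uniquely as
\[
  \alpha \cdot w_{i+1}^{m_{i+1}}\cdots w_{j-1}^{m_{j-1}} \cdot \beta,
\]
where $1 \le i \le j \le k$, $\alpha$ is a suffix of some power $w_i^{\ast}$, and $\beta$ is a prefix of some power $w_j^{\ast}$ (if $i = j$, the middle is empty and the substring sits inside a single block).

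Next I would bound the number of distinct substrings of length $n$. Fix $i,j$ and the lengths $\ell_\alpha, \ell_\beta$ with $\ell_\alpha + \ell_\beta \le n$. Since $w_i^\infty$ is periodic with period $|w_i|$, the suffix of length $\ell_\alpha$ is determined by the pair $(i, \ell_\alpha)$; similarly $\beta$ is determined by $(j, \ell_\beta)$. The middle then has fixed length $n - \ell_\alpha - \ell_\beta$ and is determined by the tuple $(m_{i+1},\ldots,m_{j-1})$ subject to $\sum_{l} m_l |w_l| = n - \ell_\alpha - \ell_\beta$; the number of such tuples is at most $(n+1)^{k-2}$. Summing over $i, j, \ell_\alpha, \ell_\beta$ gives a total of $O(n^k)$ distinct length-$n$ substrings of words in $B$, a polynomial in $n$ (with $k$ and the $|w_l|$ fixed).

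Finally, since $|\Sigma|^n$ grows exponentially while the number of length-$n$ substrings of words in $B$ is only $O(n^k)$, there exists $n_0$ such that for $n \ge n_0$ some word $p \in \Sigma^n$ is not a substring of any word in $B$. That $p$ is the required witness. The main obstacle is the bookkeeping for degenerate cases (the $i=j$ case, empty $w_i$'s, or periodic $w_i$'s where different $(i, \ell_\alpha)$ can give the same $\alpha$); these only reduce the count, so the polynomial bound is unaffected and the argument goes through.
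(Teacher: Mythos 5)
Your proof is correct, but it takes a genuinely different route from the paper's. The paper argues \emph{constructively}: for each $w_i$ it picks a letter $c_i$ different from the last letter of $w_i$, sets $p_i = c_i^{|w_i|}$, and shows by an induction over the blocks that $p = p_1\cdots p_k$ cannot be embedded in any word of $B$ (each block $w_i^{n_i}$ must terminate strictly inside $p_i$, until $p_k$ would be forced to straddle the end of an occurrence of $w_k$, which its letters forbid). Your argument is a non-constructive counting one: the factor complexity of $B$ is polynomial in the length ($O(n^k)$ length-$n$ factors, via the suffix--middle--prefix decomposition and periodicity of each block), so for $|\Sigma|\ge 2$ almost every sufficiently long word is a non-factor. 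What the paper's approach buys is an explicit, short witness of length $\sum_i |w_i|$; what yours buys is generality and robustness --- it applies verbatim to any language whose set of factors grows subexponentially, and shows the witnesses are abundant rather than merely existent. Both proofs silently or explicitly require $|\Sigma|\ge 2$ (the lemma is false over a unary alphabet), and your non-constructive existence statement is all that the downstream Corollary~\ref{cor-boundgap} and Theorem~\ref{thm-incomplete} actually use, so the substitution is harmless. Two small nits: the decomposition of a factor as $\alpha\cdot w_{i+1}^{m_{i+1}}\cdots w_{j-1}^{m_{j-1}}\cdot\beta$ need not be \emph{unique}, but non-uniqueness only inflates the upper bound, which is the direction you need; and the exponent in the bound is arguably $k+1$ rather than $k$ once the two boundary lengths are summed over, which again does not affect polynomiality.
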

\begin{proof}
  For each word $w_{1}$, we pick some character $c_1$ which differs from the
  \emph{last} character of $w_{1}$.
  We then construct $p = p_1 \cdots p_k$, such that:
  \[ p_i = \underbrace{c_i \ldots c_i}_{\abs{w_i}} \]

  Assume there is some word $t = u p_1 \ldots p_k v \in B$. 
  $t$ must consist of some number of occurrences of $w_1$ through $w_k$, in order.
  Since $p_1$ differs from the last character of $w_1$, $u p_1$ cannot consist
  only of occurrences of $w_1$; therefore, $p_2 \cdots p_k$ must be made up of
  occurrences of $w_2$ through $w_k$.

  Similarly, since no occurrence of $w_2$ may end in $p_2$, so $p_3 \cdots p_k$
  must consist only of $w_3$ through $w_k$.
  By induction, we find that $p_k$ must be an occurrence of $w_k$. However, no
  occurrence of $w_k$ may occur in $p_k$.
  Therefore, there can be no word $t \in B$ such that $t \in \Sigma^* p \Sigma^*$.
\end{proof}

\begin{corollary}\label{cor-boundgap}
  For any finite set of bounded regular languages $\{B_1, \ldots, B_n\}$, we can
  construct some substring $p$ that is not a substring of $B_1 \cup \cdots \cup B_n$.
\end{corollary}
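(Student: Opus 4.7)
The plan is to reduce the corollary directly to Lemma~\ref{lemma-boundgap} by a simple concatenation trick. First I would apply Lemma~\ref{lemma-boundgap} separately to each $B_i$ in the finite collection, obtaining, for each $i \in [1,n]$, a word $p_i$ that fails to occur as a substring of any word in $B_i$. The key observation is then the following monotonicity property of the ``not a substring'' relation: if $q$ is a substring of $p$, then any word of which $p$ is a substring must also have $q$ as a substring. Hence, if I can build a single word $p$ that contains every $p_i$ as a substring, then $p$ cannot possibly be a substring of any word in any $B_i$.

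The obvious construction is $p = p_1 \cdot p_2 \cdots p_n$. Clearly each $p_i$ is a substring of $p$. To conclude, I would argue by contradiction: suppose $p$ is a substring of some word $w \in B_1 \cup \cdots \cup B_n$. Then $w \in B_j$ for some $j$, and since $p_j$ is a substring of $p$ and $p$ is a substring of $w$, $p_j$ is a substring of $w \in B_j$, contradicting the choice of $p_j$ supplied by Lemma~\ref{lemma-boundgap}. Therefore $p$ is not a substring of any word in $B_1 \cup \cdots \cup B_n$, as required.

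I do not expect any real obstacle here. The only thing worth double-checking is that ``substring'' is used consistently in the sense fixed in Section~\ref{sec-preliminaries} (contiguous factor, not subsequence), since the argument relies on transitivity of the substring relation, which holds for contiguous factors. Since $n$ is finite, the concatenation $p_1 \cdots p_n$ is a well-defined finite word, so the construction is effective.
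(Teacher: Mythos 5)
Your proof is correct and follows essentially the same route as the paper: apply Lemma~\ref{lemma-boundgap} to each $B_i$ to obtain $p_i$, concatenate to form $p = p_1 \cdots p_n$, and conclude via transitivity of the (contiguous) substring relation. You simply make explicit the transitivity step that the paper leaves implicit.
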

\begin{proof}
  By Lemma~\ref{lemma-boundgap}, we can find $p_1, \ldots, p_n$ such that
  $p_i$ is not a substring in $B_i$. 
  Then $p = p_1 \cdots p_n$ cannot occur as a
  substring in $B_1 \cup \ldots \cup B_n$.
\end{proof}

\begin{theorem}\label{thm-incomplete}
  There exist regularly separable languages for which the \sys{lcegar} 
  refinement method cannot prove separability.
\end{theorem}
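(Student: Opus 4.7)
My plan is to exhibit two unbounded context-free languages that are regularly separable yet have the property that neither is contained in any finite union of elementary bounded languages, so that \sys{lcegar}'s subtraction-based refinement can never drive either of its approximations to $\emptyset$.

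Concretely, over $\Sigma = \{\ra, \rb\}$ I would take $L_1 = \{w w^R \mid w \in \Sigma^*\}$, the even-length palindromes, and $L_2 = \{w \ra w^R \mid w \in \Sigma^*\}$, the odd-length palindromes with centre letter $\ra$. These are plainly context-free, they are disjoint because they have different length parities, and they are regularly separated by $R = (\Sigma\Sigma)^*$, since $L_1 \subseteq R$ and $L_2 \subseteq \overline{R}$.

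Next I would establish the key inexpressibility property: for any finite family $B_1,\ldots,B_n$ of elementary bounded languages over $\Sigma$, neither $L_1$ nor $L_2$ is contained in $B_1 \cup \cdots \cup B_n$. To see this, apply Corollary~\ref{cor-boundgap} to obtain a word $p \in \Sigma^*$ that does not occur as a substring of any word in $B_1 \cup \cdots \cup B_n$. The witnesses $p p^R \in L_1$ and $p \ra p^R \in L_2$ both contain $p$ as a prefix, so they are CFL members lying outside the chosen union of bounded languages.

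Finally, I would combine this with the mechanics of \sys{lcegar} as used in the proof of Theorem~\ref{thm-nonsep}: starting from $A_1 = L_1$, $A_2 = L_2$, each refinement step subtracts some elementary bounded language from an $A_i$, and because CFL-CFL intersection emptiness is undecidable, \sys{lcegar} can only declare separation once some $A_i$ becomes empty. After any finite run we thus have $A_i = L_i \setminus (B_{i,1} \cup \cdots \cup B_{i,k_i})$, and the inexpressibility property forces both approximations to remain non-empty indefinitely. The main conceptual obstacle is pinning down precisely what counts as \sys{lcegar} ``proving separability''; once the $A_i = \emptyset$ termination criterion is fixed (as implicit in the proof of Theorem~\ref{thm-nonsep}), the palindromic construction together with Corollary~\ref{cor-boundgap} yields the incompleteness result cleanly.
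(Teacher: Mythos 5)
Your proposal is correct and follows essentially the same route as the paper's proof: both invoke Corollary~\ref{cor-boundgap} to produce a word $p$ that avoids every subtracted bounded language and then exhibit members of $L_1$ and $L_2$ containing $p$ as a substring, so that neither approximation can ever become empty. The only difference is the choice of witness languages --- the paper uses the regular pair $(\ra|\rb)^*\ra$ and $(\ra|\rb)^*\rb$ whereas you use even- and odd-length palindromes --- but the key lemma and the structure of the argument are identical.
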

\begin{proof}
  Consider an \sys{lcegar} process with $L_1 = A_1 = (\ra|\rb)^* \ra$,
  and $L_2 = A_2 = (\ra|\rb)^* \rb$.
  These languages are disjoint, and regularly separable.
  After some finite number of steps, the approximations
  have been refined with bounded languages $\{B_1, \ldots, B_n\}$.
  By Corollary~\ref{cor-boundgap}, there is some substring $p$ such that
  $\Sigma^* p\Sigma^* \subseteq (\overline{B_1} \cap \ldots \cap \overline{B_n})$.
  The updated approximation $A'_1$ is non-empty, as it contains $p\ra$.
  Similarly, the approximation $A'_2$ is non-empty, as it contains $p\rb$.

  Since after any finite sequence of refinement steps neither $A'_1$ nor
  $A'_2$ is empty, the refinement process will never prove separation
  of $L_1$ and $L_2$.
\end{proof}

From Theorems~\ref{thm-complete},~\ref{thm-nonsep} and~\ref{thm-incomplete},
we conclude that the classes of languages which can be proven separate by
\sys{lcegar} and \oursolver{} are incomparable.

\section{Experimental Evaluation}
\label{sec-results}

We have implemented the CEGAR method proposed in this paper in a
prototype tool called \oursolver\footnote{Publicly available at
  \url{https://bitbucket.org/jorgenavas/covenant} together with all
  the benchmarks used in this section.}. The tool is implemented in
C++ and parameterized by the initial approximation and the refinement
procedure. \oursolver\ implements the method described
in~\cite{Nederhof_chapter1} for approximating CFGs with strongly
regular languages as well as the coarsest abstraction $\Sigma^{*}$ for
comparison purposes. For refinement, the tool implements both the
greedy and maximum star-epsilon generalizations (described in
Sections~\ref{sec-stargen} and~\ref{sec-maxgen},
respectively). \oursolver\ currently implements only the classical
product construction for solving the intersection of regular languages
but other regular solvers (\eg, \cite{revenant,strsolve}) can be
easily integrated\footnote{In fact, an initial implementation of
  \oursolver\ was tested using \sys{Revenant}~\cite{revenant}, an
  efficient regular solver based on bounded model checking with
  interpolation, though the released version does not incorporate
  it.}.

To assess the effectiveness of our tool, we have conducted
two experiments. First, we used \oursolver\ for proving safety
properties in recursive multi-threaded programs. Second, we crafted
pairs of challenging context free grammars and intersected them using
\oursolver. The motivation for this second experiment was to exercise
features of \oursolver\ that were not required during the first
experiment.
All experiments were run on a single core of a 2.4GHz Core i5-M520
with 7.8Gb memory.

\vspace{2mm}
\noindent
\textbf{Safety verification of recursive multi-threaded programs.}
Bouajjani \etal~\cite{BouajjaniET03} was pioneered showing that the
safety verification problem of recursive multi-threaded programs can
be reduced to check whether the intersection of context free languages
is empty. Since then, several encodings have been
described~\cite{BouajjaniET03,Chaki_TACAS06,LongCMM12}.
As a result, we can use \oursolver\ to prove certain safety properties
in recursive multi-threaded programs assuming the programs have been
translated accordingly.
We briefly exemplify the translation of a concurrent program to
context-free grammars, using the approach of~\cite{LongCMM12}.

For simplicity, we assume a concurrency model in which 
communication is based on shared memory. Shared memory is modelled via
a set of global variables.  We assume that each statement is executed
atomically.
We will consider only \emph{Boolean programs}. Any program $P$ can be
translated into a Boolean program $\mathcal{B}(P)$ using techniques
such as predicate abstraction~\cite{GrafS97}. A key property is that
$\mathcal{B}(P)$ is an over-approximation of $P$ preserving the
control flow of $P$ but the only type available in $\mathcal{B}(P)$ is
Boolean. Therefore, if $\mathcal{B}(P)$ is correct then $P$ must be
correct but, of course, if $\mathcal{B}(P)$ is unsafe $P$ may be still
safe.
Each (possibly recursive) procedure in $\mathcal{B}(P)$ is modelled as
a context-free grammar as well as each shared variable specifying the
possible values that the variable can take. In addition, extra
production rules are added to specify the synchronization points.

\newcommand*{\myhline}{\rule{\textwidth}{0.55pt}}

\begin{figure}[t]
\begin{tabular}[t]{|l|l|l|}
\hline
\ppcode{
$x=0; y=0;$ \\[2mm]
p1 () $\{$ \\
$n_{0}$: \> \> $x = \mathsf{not}~ y$ ;   \\
$n_{1}$: \> \> \texttt{if}($*$) p1();  \\
$n_{2}$: \> \> $x = \mathsf{not}~ y$ ;\\
$n_{3}$: $\}$ \\[2mm]
p2 () $\{$ \\
$m_{0}$: \> \> $y = \mathsf{not}~ x$ ;   \\
$m_{1}$: \> \> \texttt{if}($*$)  p2();  \\
$m_{2}$: \> \> $y = \mathsf{not}~ x$ ; \\
$m_{3}$: $\}$ \\[2mm] 

\texttt{if} $(x ~\mathsf{and}~ y)$ \\
\> \textsf{error}(); \\ 
} & 
\begin{minipage}{0.43\textwidth}
\begin{tabular}{@{}p{0.2\linewidth}p{0.01\linewidth}p{0.55\linewidth}@{}}
\multicolumn{3}{@{}c}{\scriptsize$\mathsf{CFG_1}$} \\[-2mm]
\myhline \\[-2mm]
\multicolumn{3}{@{}l}{\scriptsize// control flow of thread $p1$} \\
\nt{N}{0} & $\rightarrow$ & $\encode{x = \mathsf{not}~ y}$ \nt{N}{1} \\
\nt{N}{1} & $\rightarrow$ &  \nt{N}{0} \nt{N}{2} $\mid$ \nt{N}{2} \\
\nt{N}{2} & $\rightarrow$ & $\encode{x = \mathsf{not}~ y}$ \nt{N}{3} \\
\nt{N}{3} & $\rightarrow$ & $\encode{x}$ \\
\multicolumn{3}{@{}l}{\scriptsize//encoding of instructions for $p1$} \\
$\encode{x = \mathsf{not}~ y}$ & $\rightarrow$ & \nt{S}{p_{2}}  \readterm{y}{0} \writeterm{x}{1} \nt{S}{p_{2}} $\mid$ \\
 &  & \nt{S}{p_{2}}  \readterm{y}{1} \writeterm{x}{0} \nt{S}{p_{2}} \\
$\encode{x}$ & $\rightarrow$ & \readterm{x}{1}  \\
\multicolumn{3}{@{}l}{\scriptsize//synchronization with $p2$'s actions} \\
\nt{S}{p_{2}} & $\rightarrow$ & \readterm{x}{0} \nt{S}{p_{2}} $\mid$ \\
& &   \readterm{x}{1} \nt{S}{p_{2}} $\mid$ \\
& & \writeterm{y}{0} \nt{S}{p_{2}} $\mid$ \\
& & \writeterm{y}{1} \nt{S}{p_{2}} $\mid$ $\epsilon$ \\
\myhline \\[-1mm]
\multicolumn{3}{@{}c}{\scriptsize$\mathsf{CFG_2}$} \\[-2mm]
\myhline \\[-2mm]
\multicolumn{3}{@{}l}{\scriptsize//Control flow of thread $p2$} \\
\nt{M}{0} & $\rightarrow$ & $\encode{y = \mathsf{not}~ x}$ \nt{M}{1} \\
\nt{M}{1} & $\rightarrow$ &  \nt{M}{0} \nt{M}{2} $\mid$ \nt{M}{2}\\
\nt{M}{2} & $\rightarrow$ & $\encode{y = \mathsf{not}~ x}$ \nt{M}{3} \\
\nt{M}{3} & $\rightarrow$ & $\encode{y}$ \\
\multicolumn{3}{@{}l}{\scriptsize//Encoding of instructions for $p2$} \\
$\encode{y = \mathsf{not}~ x}$ & $\rightarrow$ & \nt{S}{p_{1}}  \readterm{x}{0} \writeterm{y}{1} \nt{S}{p_{1}} $\mid$ \\
 &  & \nt{S}{p_{1}}  \readterm{x}{1} \writeterm{y}{0} \nt{S}{p_{1}} \\
$\encode{y}$ & $\rightarrow$ & \readterm{y}{1}  \\
\multicolumn{3}{@{}l}{\scriptsize//Synchronization with $p1$'s actions} \\
\nt{S}{p_{1}} & $\rightarrow$ & \readterm{y}{0} \nt{S}{p_{1}} $\mid$ \\
& & \readterm{y}{1} \nt{S}{p_{1}} $\mid$ \\
& & \writeterm{x}{0} \nt{S}{p_{1}} $\mid$ \\
& & \writeterm{x}{1} \nt{S}{p_{1}} $\mid$ $\epsilon$ \\
\end{tabular}
\end{minipage}
& 
\begin{minipage}{0.3\textwidth}
\begin{tabular}{@{}p{0.1\linewidth} p{0.01\linewidth} p{0.5\linewidth}@{}}
\multicolumn{3}{@{}c}{\scriptsize$\mathsf{CFG_3}$} \\[-2mm]
\myhline \\[-2mm]
\multicolumn{3}{@{}l}{\scriptsize//Modelling variable $x$} \\
\nt{X}{false} & $\rightarrow$ & \readterm{x}{0} \nt{X}{false} $\mid$ \\
& & \writeterm{x}{0} \nt{X}{false} $\mid$ \\
& & \writeterm{x}{1} \nt{X}{true} $\mid$ \\
& & \nt{S}{x} \nt{X}{true} $\mid$ $\epsilon$   \\
\nt{X}{true} & $\rightarrow$ & \readterm{x}{1} \nt{X}{true} $\mid$ \writeterm{x}{1} \nt{X}{true} $\mid$\\
& & \writeterm{x}{0} \nt{X}{false} $\mid$ \nt{S}{x} \nt{X}{true} $\mid$ $\epsilon$   \\
\multicolumn{3}{@{}l}{\scriptsize//Synchronization with $y$} \\
\nt{S}{x} & $\rightarrow$ & \readterm{y}{0} \nt{S}{x} $\mid$ \\
& &  \writeterm{y}{0} \nt{S}{x} $\mid$ \\
& & \readterm{y}{1} \nt{S}{x} $\mid$ \\
& & \writeterm{y}{1} \nt{S}{x} $\mid$ $\epsilon$ \\
\myhline \\[-1mm]
\multicolumn{3}{@{}c}{\scriptsize$\mathsf{CFG_4}$} \\[-2mm]
\myhline \\[-2mm]
\multicolumn{3}{@{}l}{\scriptsize//Modelling variable $y$} \\
\nt{Y}{false} & $\rightarrow$ & \readterm{y}{0} \nt{Y}{false} $\mid$ \\
& & \writeterm{y}{0} \nt{Y}{false} $\mid$ \\
& & \writeterm{y}{1} \nt{Y}{true} $\mid$ \\
& & \nt{S}{y} \nt{Y}{true} $\mid$ $\epsilon$   \\
\nt{Y}{true} & $\rightarrow$ & \readterm{y}{1} \nt{Y}{true} $\mid$ \writeterm{y}{1} \nt{Y}{true} $\mid$\\
& & \writeterm{y}{0} \nt{Y}{false} $\mid$ \nt{S}{y} \nt{Y}{true} $\mid$ $\epsilon$   \\
\multicolumn{3}{@{}l}{\scriptsize//Synchronization with $x$} \\
\nt{S}{y} & $\rightarrow$ & \readterm{x}{0} \nt{S}{y} $\mid$ \\
& & \writeterm{x}{0} \nt{S}{y} $\mid$ \\
& & \readterm{x}{1} \nt{S}{y} $\mid$ \\
& & \writeterm{x}{1} \nt{S}{y} $\mid$ $\epsilon$ \\
\end{tabular}
\end{minipage}
\\
\hline
\end{tabular}
\caption{\label{fig:encoding} A concurrent Boolean program
  (\textsf{SharedMem}) and its translation to CFGs.
  }
\end{figure}

The left hand column of Figure~\ref{fig:encoding} shows a small program
\textsf{SharedMem}~\cite{LongCMM12}. It consists of two symmetric,
recursive procedures $p1$ and $p2$ which are executed by two different
threads. The communication between the threads is done through the
global variables $x$ and $y$ which are initially set to $0$. Note that
the program is already Boolean since $x$ and $y$ can only take values
$0$ and $1$. We would like to prove that after $p1$ and $p2$
terminate, $x$ and $y$ cannot be true simultaneously.
 
The rest of Figure~\ref{fig:encoding} describes the
corresponding translation to context-free grammars.  
The four resulting grammars, which we explain shortly, are
\[
\begin{array}{ll}
   \mathsf{CFG}_1 : &
     \tuple{\{ \nt{N}{0}, \nt{N}{1}, \nt{N}{2}, \nt{N}{3}, 
	\encode{x = \mathsf{not}~ y}, \encode{x}, \nt{S}{p_{2}} \},
	\Sigma,P_1, \nt{N}{0}}
\\ \mathsf{CFG}_2 : &
     \tuple{\{ \nt{M}{0}, \nt{M}{1}, \nt{M}{2}, \nt{M}{3}, 
	\encode{y = \mathsf{not}~ x}, \encode{y}, \nt{S}{p_{1}} \},
	\Sigma,P_2, \nt{M}{0}}
\\ \mathsf{CFG}_3 : &
     \tuple{\{ \nt{X}{false}, \nt{X}{true}, \nt{S}{x}\},
	\Sigma,P_3, \nt{X}{false}}
\\ \mathsf{CFG}_4 : &
     \tuple{\{ \nt{Y}{false}, \nt{Y}{true}, \nt{S}{y}\},
	\Sigma,P_4, \nt{Y}{false}}
\end{array}
\]
where $\Sigma= \{ \readterm{x}{0}, \readterm{x}{1},
  \readterm{y}{0}, \readterm{y}{1}, \writeterm{x}{0},
  \writeterm{x}{1}, \writeterm{y}{0}, \writeterm{y}{1}\}$
and $P_1$, $P_2$, $P_3$, and $P_4$ are the respective sets of
productions, as shown in Figure~\ref{fig:encoding}.

Procedures $p1$ and $p2$ are translated into $\mathsf{CFG_1}$ and 
$\mathsf{CFG_2}$, respectively. 
First, we need to encode the control flow of the
procedures.
For instance, ``$p1$ reaches location $n_0$ and it executes the
statement $x =~\mathsf{not}~y$'' is translated into the grammar
production $\nt{N}{0} \rightarrow \encode{x = \mathsf{not}~
  y}~\nt{N}{1}$, where $\nt{N}{1}$ represents the next program
location $n_1$. We use the notation $\encode{s} \in V$ to refer to the
corresponding translation of statement $s$.
A function call such as ``$p1$ calls itself recursively after location
$n_1$ is executed'' is translated through the production $\nt{N}{1}
\rightarrow \nt{N}{0} \nt{N}{2}$ where $\nt{N}{0}$ is the entry
location of the callee function and $\nt{N}{2}$ is the continuation of
the caller after the callee returns.
The non-terminal symbol $\encode{x = \mathsf{not}~y}$ models the
execution of negating $y$ and storing its result in $x$. We create a
terminal symbol for each possible action on $x$ (and analogously for
$y$): \readterm{x}{0} (the value of $x$ is $0$), \readterm{x}{1} (the
value of $x$ is $1$), \writeterm{x}{0} ($x$ is updated to~$0$), and
\writeterm{x}{1} ($x$ is updated to~$1$). For instance, the grammar
production $\encode{x = \mathsf{not}~ y}~\rightarrow ~\nt{S}{p_{2}}~
\readterm{y}{0}~\writeterm{x}{1}~ \nt{S}{p_{2}}$ represents that if we
read $0$ as the value of $y$ then it must be followed by writing $1$
to $x$. The rest of logical operations are encoded similarly.

Note that whenever a global variable is read or written we need to
consider the synchronization between threads. For this purpose, we
define the non-terminal symbols $\nt{S}{p_{2}}$ ($\nt{S}{p_{1}}$)
which loops zero or more times with all possible actions of $p2$
($p1$): value of $x$ is $0$ (value of $y$ is $0$), value of $x$
is $1$ (value of $y$ is $1$), $y$ is updated to $0$ ($x$ is updated to
$0$), and $y$ is updated to $1$ ($x$ is updated to $1$).

Next, we need to model which are the possible values that $x$ and $y$
can take. For this we use $\mathsf{CFG_3}$ and $\mathsf{CFG_4}$, 
respectively. Ignoring synchronization, the set of
values that $x$ and $y$ can take are indeed expressed by regular
automata:

\vspace{2mm}
\begin{tabular}{cc}
\begin{minipage}{0.5\textwidth}
  \begin{tikzpicture}[->,>=stealth',shorten >=1pt,auto, 
      node distance=4cm, semithick]
    \newState{q0}{$\nt{X}{false}$}{initial, initial text={}}{accepting,minimum size=0.5pt}
    \newState{q1}{$\nt{X}{true}$}{right of=q0}{accepting, minimum size=0.5pt}
    \newTransition{q0}{q0}{$\readterm{x}{0}$}{loop above}
    \newTransition{q0}{q0}{$\writeterm{x}{0}$}{loop below}
    \newTransition{q0}{q1}{$\writeterm{x}{1}$}{bend left=30}
    \newTransition{q1}{q1}{$\readterm{x}{1}$}{loop above}
    \newTransition{q1}{q1}{$\writeterm{x}{1}$}{loop below}
    \newTransition{q1}{q0}{$\writeterm{x}{0}$}{bend right=-30}

  \end{tikzpicture}
\end{minipage} &
\begin{minipage}{0.5\textwidth}
  \begin{tikzpicture}[->,>=stealth',shorten >=1pt,auto, 
      node distance=4cm, semithick]
    \newState{q0}{$\nt{Y}{false}$}{initial, initial text={}}{accepting,minimum size=0.5pt}
    \newState{q1}{$\nt{Y}{true}$}{right of=q0}{accepting, minimum size=0.5pt}
    \newTransition{q0}{q0}{$\readterm{y}{0}$}{loop above}
    \newTransition{q0}{q0}{$\writeterm{y}{0}$}{loop below}
    \newTransition{q0}{q1}{$\writeterm{y}{1}$}{bend left=30}
    \newTransition{q1}{q1}{$\readterm{y}{1}$}{loop above}
    \newTransition{q1}{q1}{$\writeterm{y}{1}$}{loop below}
    \newTransition{q1}{q0}{$\writeterm{y}{0}$}{bend right=-30}
  \end{tikzpicture}
\end{minipage} 
\end{tabular}
\vspace{2mm}

Finally, we need to synchronize $x$ and $y$ by allowing them to loop
zero or more times while new values from the other variable can be
generated. We use non-terminal symbols (and their productions)
$\nt{S}{x}$ and $\nt{S}{y}$ for that.

\begin{table}[p]
  \begin{center}
    \subfloat[Verification of multi-thread Erlang programs]{
      \setlength{\tabcolsep}{16pt}
    \begin{tabular}{|l|c||c||c|c|}
      \hline
      \multicolumn{2}{|c||}{Program} & \oursolver & \multicolumn{2}{|c|}{\sys{lcegar}} \\
      \hline
      \multicolumn{2}{|c||}{}  & & PDC & CB \\
      \hline
      \hline
      \textsf{SharedMem} & \hphantom{un}safe & 0.01 & 14.37 & 24.75 \\
      \hline
      \textsf{Mutex} & \hphantom{un}safe & 0.04 & \hphantom{1}6.12  & \hphantom{2}0.14 \\
      \hline
      \textsf{RA} & \hphantom{un}safe & 0.01 & $\infty$  & \hphantom{2}0.39 \\
      \hline
      \textsf{Modified RA} & \hphantom{un}safe & 0.03 & $\infty$  & 27.90 \\
      \hline
      \textsf{TNA} & unsafe &  0.01 & \hphantom{1}0.02 & \hphantom{2}0.25 \\
      \hline
      \textsf{Banking} & unsafe & 0.01 & $\infty$  & \hphantom{2}3.36  \\
      \hline
    \end{tabular}
    }\\
    \subfloat[Verification of multi-thread Bluetooth drivers]{
      \setlength{\tabcolsep}{9pt}
    \begin{tabular}{|l|c||c||r|r|}
      \hline
      \multicolumn{2}{|c||}{Program} & \oursolver & \multicolumn{2}{|c|}{\sys{lcegar}} \\       
      \hline
      \multicolumn{2}{|c||}{}  & & PDC & CB \\
      \hline
      \hline
      \textsf{Version 1} & unsafe & \hphantom{2}0.84 & 19.74 & 21.04 \\
      \hline
      \textsf{Version 2} & unsafe & \hphantom{2}0.25  & 5560.00 & 4852.00  \\
      \hline
      \textsf{Version 2 w/ Heuri} & unsafe & \hphantom{2}0.11  & 44.68  & 38.89 \\
      \hline
      \textsf{Version 3 (1A2S)} & unsafe & \hphantom{2}0.12 & 217.74  & 217.27 \\
      \hline
      \textsf{Version 3 (1A2S) w/ Heuri} & unsafe & \hphantom{2}0.05 & 6.68   & 11.37 \\
      \hline
      \textsf{Version 3 (2A1S)} & \hphantom{un}safe & \hphantom{2}0.27 & 4185.00  & 3981.00  \\
      \hline
    \end{tabular}
    }\\
    \subfloat[Interesting/challenging grammars ($\infty$ indicates time-out at 60 sec and ``-'' a raised exception.)]{
      \setlength{\tabcolsep}{4pt}
    \begin{tabular}{|c|c||c|c|c|c||c|c|}
      \hline
      \multicolumn{2}{|c||}{} & 
      \multicolumn{4}{c||}{\oursolver} & 
      \multicolumn{2}{c|}{\sys{lcegar}} \\
      \cline{3-8}
      \multicolumn{2}{|c||}{} &
      \multicolumn{2}{|c|}{$\Sigma^{*}$} &
      \multicolumn{2}{|c||}{\cite{Nederhof_chapter1}} &  
      \multicolumn{1}{|c|}{PDC}  &
      \multicolumn{1}{|c|}{CB}   \\
      \cline{3-8}
      \multicolumn{2}{|c||}{} & Greedy & Gen & Greedy & Gen &  &  \\
      \hline 
      \hline 
      $C_1 \cap C_7$ & \hphantom{un}sat   
                     & \hphantom{2}8 (0.01)   
                     & \hphantom{1}11 (7.88) 
                     & \hphantom{2}5 (0.01)   
                     & \hphantom{2}8 (6.20)   
                     & $\infty$    &   -- \\  
      $C_7 \cap C_1$ &       &            &          &          &          & \hphantom{1}0 (0.13) & 0 (0.32) \\
      \hline
      $C_1 \cap C_8$ & \hphantom{un}sat   
                     & \hphantom{2}8 (0.01)   
                     & \hphantom{1}13 (8.36) 
                     & \hphantom{2}7 (0.01)   
                     & \hphantom{2}9 (2.22)   
                     & \hphantom{2}0 (20.28) & -- \\ 
      $C_8 \cap C_1$ &       &            &          &          &          & $\infty$ &  $\infty$  \\
      \hline
      $C_2 \cap C_3$ & \hphantom{un}sat   
                     & \hphantom{1}10 (0.01)  
                     & \hphantom{1}13 (9.10) 
                     & \hphantom{2}2 (0.01)   
                     & \hphantom{2}2 (0.02)   
                     & 0 (0.03) & 0 (0.01) \\  
      $C_3 \cap C_2$ &       &            &           &          &          & 0 (0.03) & 0 (0.01) \\
      \hline
      $C_2 \cap C_4$ & unsat 
                    & \hphantom{1}15 (0.02) 
                    & $\infty$              
                    &  \hphantom{2}3 (0.01) 
                    & \hphantom{2}3 (0.80)  
                    &  1 (0.01) & 0 (0.01) \\ 
      $C_4 \cap C_2$ &       &    &          &           &          &  $\infty$ & 0 (0.01) \\
      \hline
      $C_3 \cap C_4$ & unsat 
                     & \hphantom{1}11 (0.01) 
                     & $\infty$              
                     & \hphantom{2}2 (0.01)  
                     & \hphantom{2}2 (0.04)  
                     & 0 (0.01) & 0 (0.01) \\ 
      $C_4 \cap C_3$ &       &    &          &          &          & 0 (0.01) & 0 (0.01) \\
      \hline
      $C_5 \cap C_6$ & unsat 
                     & \hphantom{2}6 (0.01) 
                     & $\infty$             
                     & \hphantom{2}5 (0.01) 
                     & $\infty$             
                     & $\infty$ & 0 (0.01)  \\ 
      $C_6 \cap C_5$ &       &          &          &    &          & $\infty$ & 0 (0.01) \\
      \hline
      $C_5 \cap C_7$ & \hphantom{un}sat 
                     & \hphantom{1}14 (0.04) 
                     & $\infty$              
                     & \hphantom{1}11 (0.02) 
                     & $\infty$              
                     &  $\infty$ & -- \\ 
      $C_7 \cap C_5$ &       &         &     &           &    &  0 (0.33) & $\infty$  \\
      \hline
      $C_5 \cap C_8$ & \hphantom{un}sat 
                     & \hphantom{2}7 (0.01) 
                     & \hphantom{2}9 (2.81) 
                     & \hphantom{2}5 (0.01) 
                     & \hphantom{2}5 (3.54) 
                     & $\infty$ & -- \\ 
      $C_8 \cap C_5$ &       &        &          &          &          & 0 (0.04) & $\infty$ \\
      \hline
      $C_6 \cap C_7$ & \hphantom{un}sat  
                     & \hphantom{1}14 (0.04) 
                     & $\infty$              
                     & \hphantom{1}11 (0.02) 
                     & $\infty$              
                     & $\infty$  & -- \\  
      $C_7 \cap C_6$ &      &          &    &          &    &  0 (0.10) & $\infty$ \\
      \hline
      $C_6 \cap C_8$ & \hphantom{un}sat 
                     & \hphantom{2}8 (0.01) 
                     & \hphantom{2}9 (2.86) 
                     & \hphantom{2}5 (0.01) 
                     & \hphantom{2}5 (3.46) 
                     & 0 (1.21) & -- \\ 
      $C_8 \cap C_6$ &       &        &          &          &          & $\infty$ & $\infty$ \\
      \hline
      $C_7 \cap C_8$ & \hphantom{un}sat 
                     & \hphantom{2}4 (0.01) 
                     & \hphantom{2}4 (0.01) 
                     & \hphantom{2}3 (0.01) 
                     & \hphantom{2}3 (0.01) 
                     & 0 (0.70) & --  \\  
      $C_8 \cap C_7$ &     &          &          &          &          & $\infty$ & -- \\
      \hline
    \end{tabular}
    }
  \end{center}
  \caption{\label{results} Comparison of \oursolver\ with
    \sys{lcegar}, on several classes of context free grammars;
    times in seconds.  }
\end{table}

Once we have obtained the CFGs described in Figure~\ref{fig:encoding}
we are ready to ask reachability questions. For this example, we would
like to prove that when threads start at $n_{0}$ and $m_{0}$,
respectively, \textsf{error} cannot be reachable simultaneously by both
threads. This question can be answered by checking if the intersection
of the above CFGs is empty. If the intersection is not empty then
\oursolver\ will return a witness $w \in \Sigma^{*}$ containing the
sequence of reads and writes to $x$ and $y$. Otherwise, 
\oursolver\ will return either ``yes'' (that is, the program is safe) if
the languages of the CFGs are regularly separable or run until
resources are exhausted.

We have tested \oursolver\ with the programs used in~\cite{LongCMM12}
and compared with~\sys{lcegar}~\cite{LongCMM12}. There are two classes
of programs: textbook Erlang programs and several variants of a real
Bluetooth driver. The Bluetooth variants labelled \textsf{W/
  Heuri} are encoded with an unsound heuristic that permits context
switches only at basic block boundaries.
We refer readers to~\ref{sec-benchmarks} for a detailed
description of the programs as well as the safety properties.

Table~\ref{results}(a) and Table~\ref{results}(b) show the times in
seconds for both solvers when proving the Erlang programs and the
Bluetooth drivers. The symbol $\infty$ indicates that the solver
failed to terminate after $2$ hours.
We ran~\sys{lcegar}
using the settings suggested by the authors and tried with the two
available initial abstractions: \emph{pseudo-downward closure} (PDC)
and \emph{cycle breaking} (CB).
For our tool, we used as the initial abstraction the one described
in~\cite{Nederhof_chapter1} which is described in~\ref{sec-abstract}.
We also tried $\Sigma^{*}$ but \oursolver\ did not converge for any of
the programs in a reasonable amount of time.

It is somewhat surprising that all properties were successfully
proven by \sys{lcegar} using the initial regular approximation,
including Bluetooth instances.  The same is true for \oursolver,
except for \textsf{Version 1} which required $12$ refinements using
the greedy strategy.  Nevertheless, these programs show cases in which
\oursolver\ can significantly outperform \sys{lcegar}. Since almost no
refinements were required by any of the tools, it also suggests that
the approximation of all CFGs at once and the use of a regular solver
is often a more efficient choice than relying on computing
intersection of CFLs and regular languages as \sys{lcegar} does.

\vspace{2mm}
\noindent \textbf{Other interesting CFLs.}  The
verification instances \cite{LongCMM12} are in fact all solved with no
use of refinement, by \sys{lcegar} as well as by \oursolver\ (with the
exception of one instance).  To explore more interesting cases that
exercise the refinement procedures, we have added experiments
involving the following languages ($\Sigma = \{\ra,\rb\}^*$; note that
$C_5$ is $\lang{G_1}$ from Section~\ref{sec-example} and $C_6$ is 
$\lang{G_2}$):

\begin{center}
\setlength{\tabcolsep}{8pt}
\begin{tabular}{l|l}
 $C_1:$  $\{w w^R \mid w \in \Sigma^* \}$   & $C_5:$  $\{w w^R (\ra \rb)^+ \mid w \in \Sigma^* \}$ \\ 
 $C_2:$  $\{w c w^R \mid w \in \Sigma^* \}$ & $C_6:$  $\{w w^R (\rb \ra)^+ \mid w \in \Sigma^* \}$ \\
 $C_3:$  $\{{\ra}^n \rc {\ra}^n \mid n > 0 \}$      & $C_7:$  $\{w \in \Sigma^* \mid \mbox{$w$ has equal numbers of $\ra$s and $\rb$s} \}$ \\
 $C_4:$  $\{{\ra}^n \rc {\rb}^n \mid n > 0 \}$  & $C_8:$  $\{w w' \mid |w| = |w'|, w \not= w' \}$   \\
\end{tabular}
\end{center}

\noindent
Table~\ref{results}(c) shows the pairs of languages whose disjointness
can be proven or a counterexample can be found requiring at least one
refinement for \oursolver. We ignore pairs of languages which
are disjoint but not regularly separable.

We ran \oursolver\ using two initial abstractions: $\Sigma^{*}$ and
the more precise one described by Nederhof~\cite{Nederhof_chapter1}.
For each one, we used our
greedy (Greedy) refinement described in Section~\ref{sec-stargen} and
the complete refinement (Gen) from Section~\ref{sec-maxgen}. We
compared again with \sys{lcegar} using its two abstractions PDC and
CB. For a given $C_i \cap C_j$ \sys{lcegar} checks first whether
$\lang{C_i} \cap \lang{\alpha(C_j)} = \emptyset$ and then, only if it
is not empty a refinement is triggered. Therefore, \sys{lcegar} fixes
a priori that the first input grammar will not be abstracted while the
second will (that is, the order in which the grammars are given to
\sys{lcegar} matters). 
That is why we test both $C_i \cap C_j$ and $C_j \cap C_i$. 
In \oursolver\ the order is irrelevant.
We use the format \#R (T) to indicate that the tool needed \#R
refinements to prove disjointness or to find a counterexample, in T
seconds. We set a timeout ($\infty$) of 60 seconds.

Table~\ref{results}(c) indicates that, generally, the more precise the
initial abstraction, the fewer refinements are necessary.  This claim
was also made in~\cite{LongCMM12} although we were not able to fully
confirm it because \sys{lcegar} raised an exception with many of the
instances while using CB (denoted by the symbol --).  Interestingly,
Greedy performs quite well, terminating for all instances. This
suggests that Greedy might be a good practical choice in cases where
Gen spends too much time computing the generalization of the
witnesses.

Regarding \sys{lcegar} either a timeout is reached or the tool can
either prove disjointness or find a witness without any refinement
except for one instance ($C_2 \cap C_4$). It is worth noticing that
even if both tools would start with the same initial abstraction
\sys{lcegar} might not refine at all while \oursolver\ might do. The
reason is that \sys{lcegar} does not abstract all the CFLs which
forced us try with both pair orderings. On the other hand, this gives
some unpredictability to \sys{lcegar} because depending on the
ordering, the tool can behave very differently (for example, 
($C_2 \cap C_4$) versus ($C_4 \cap C_2$)).

\section{Conclusions and Future Work}
\label{sec-conclusion}
We have presented a CEGAR-based semi-decision procedure for
regular separability of context-free languages. We have described
two refinement strategies; an inexpensive greedy approach, and a
more expensive exhaustive strategy.
We have implemented these approaches in a prototype solver,
\sys{covenant}.
The method outperforms \sys{lcegar} on a range of verification
and language-theoretic instances.
The greedy approach often requires more refinement steps, but
tends to quickly find witnesses in cases with non-empty intersections;
the exhaustive method performs substantially more expensive refinement steps,
but can prove separation of some instances not solved by other methods.
  
The maximum $\epsilon$-generalization algorithm can become extremely
expensive for large witnesses. It would be fruitful to consider
whether we can find a cheaper generalization which still ensures completeness.
Similarly, it may be possible to develop a specialized intersection
algorithm for computing $\epsilon$-generalizations, rather than relying
on the standard regular/context-free intersection algorithm.

Visibly pushdown languages (VPLs)~\cite{AlurM04} have become popular
and possess closure and decidability properties very similar to those 
of the class of regular languages.
It would be interesting to explore algorithms for approximation by VPLs.
(Of the languages $C_1$--$C_8$ in Section~\ref{sec-results},
only $C_4$ is a VPL.)

\subsection*{Acknowledgments}
We wish to thank Georgel Calin for providing the test programs and the
implementation of \sys{lcegar}. 
We also thank Pierre Ganty for fruitful discussions about this topic. 
We acknowledge support of the Australian Research Council through 
Discovery Project Grant DP140102194.

\section*{References}

\bibliography{ref}

\begin{thebibliography}{10}

\bibitem{AlurM04}
Rajeev Alur and P.~Madhusudan.
\newblock Visibly pushdown languages.
\newblock In {\em Proceedings of the 36th Annual {ACM} Symposium on the Theory
  of Computing}, pages 202--211. ACM Publ., 2004.

\bibitem{BouajjaniET03}
Ahmed Bouajjani, Javier Esparza, and Tayssir Touili.
\newblock A generic approach to the static analysis of concurrent programs with
  procedures.
\newblock In {\em Proceedings of the 30th Annual {SIGPLAN-SIGACT} Symposium on
  Principles of Programming Languages}, pages 62--73. ACM Publ., 2003.

\bibitem{brzozowski69-stardot}
Janusz~A. Brzozowski and Rina~S. Cohen.
\newblock On decompositions of regular events.
\newblock {\em Journal of the ACM}, 16(1):132--144, 1969.

\bibitem{Chaki_TACAS06}
S.~Chaki, E.~Clarke, N.~Kidd, T.~Reps, and T.~Touili.
\newblock Verifying concurrent message-passing {C} programs with recursive
  calls.
\newblock In H.~Hermanns and J.~Palsberg, editors, {\em Tools and Algorithms
  for the Construction and Analysis of Systems}, volume 3920 of {\em Lecture
  Notes in Computer Science}, pages 334--349. Springer, 2006.

\bibitem{ClarkeGJLV00}
Edmund~M. Clarke, Orna Grumberg, Somesh Jha, Yuan Lu, and Helmut Veith.
\newblock Counterexample-guided abstraction refinement.
\newblock In E.~A. Emerson and A.~P. Sistla, editors, {\em Computer Aided
  Verification}, volume 1855 of {\em Lecture Notes in Computer Science}, pages
  154--169. Springer, 2000.

\bibitem{EsparzaR97}
Javier Esparza and Peter Rossmanith.
\newblock An automata approach to some problems on context-free grammars.
\newblock In C.~Freksa, M.~Jantzen, and R.~Valk, editors, {\em Foundations of
  Computer Science: Potential, Theory, Cognition}, volume 1337 of {\em Lecture
  Notes in Computer Science}, pages 143--152. Springer, 1997.

\bibitem{EsparzaRS00}
Javier Esparza, Peter Rossmanith, and Stefan Schwoon.
\newblock A uniform framework for problems on context-free grammars.
\newblock {\em Bulletin of the EATCS}, 72:169--177, 2000.

\bibitem{revenant}
Graeme Gange, Jorge~A. Navas, Peter~J. Stuckey, Harald S{\o}ndergaard, and
  Peter Schachte.
\newblock Unbounded model-checking with interpolation for regular language
  constraints.
\newblock In N.~Piterman and S.~Smolka, editors, {\em Tools and Algorithms for
  the Construction and Analysis of Systems}, volume 7795 of {\em Lecture Notes
  in Computer Science}, pages 277--291. Springer, 2013.

\bibitem{GrafS97}
Susanne Graf and Hassen Sa{\"{\i}}di.
\newblock Construction of abstract state graphs with {PVS}.
\newblock In {\em CAV}, pages 72--83, 1997.

\bibitem{strsolve}
Pieter Hooimeijer and Westley Weimer.
\newblock {StrSolve}: Solving string constraints lazily.
\newblock {\em Automated Software Engineering}, 19(4):531--559, 2012.

\bibitem{hunt82-grammar}
H.~B. Hunt, III.
\newblock On the decidability of grammar problems.
\newblock {\em Journal of the ACM}, 29(2):429--447, 1982.

\bibitem{bluetooth}
Nicholas Kidd.
\newblock {B}luetooth protocol.
\newblock \url{http://pages.cs.wisc.edu/~kidd/bluetooth}.

\bibitem{LongCMM12}
Zhenyue Long, Georgel Calin, Rupak Majumdar, and Roland Meyer.
\newblock Language-theoretic abstraction refinement.
\newblock In J.~de~Lara and A.~Zisman, editors, {\em Fundamental Approaches to
  Software Engineering}, volume 7212 of {\em Lecture Notes in Computer
  Science}, pages 362--376, 2012.

\bibitem{nagy04-regular}
Benedek Nagy.
\newblock A normal form for regular expressions.
\newblock In {\em Supplemental Papers for the Eighth International Conference
  on Developments in Language Technology}, CDMTCS Research Report Series, pages
  53--62. CDMTCS, 2004.

\bibitem{Nederhof00}
Mark-Jan Nederhof.
\newblock Practical experiments with regular approximation of context-free
  languages.
\newblock {\em Computational Linguistics}, 26(1):17--44, 2000.

\bibitem{Nederhof_chapter1}
Mark-Jan Nederhof.
\newblock Regular approximation of {CFL}s: A grammatical view.
\newblock In H.~Bunt and A.~Nijholt, editors, {\em Advances in Probabilistic
  and Other Parsing Technologies}, volume~16 of {\em Text, Speech and Language
  Technology}, pages 221--241. Springer, 2000.

\bibitem{kiss}
Shaz Qadeer and Dinghao Wu.
\newblock {KISS:} keep it simple and sequential.
\newblock In {\em {PLDI}}, pages 14--24, 2004.

\bibitem{SzymanskiW73}
Thomas~G. Szymanski and John~H. Williams.
\newblock Non-canonical parsing.
\newblock In {\em Conference Record of the 14th Annual Symposium on Switching
  and Automata Theory}, pages 122--129. IEEE Comp.\ Soc., 1973.

\end{thebibliography}
\bibliographystyle{plain} 

\appendix

\newcommand{\stronglyreg}{\mbox{\textsc{StronglyRegGrammar}}}
\newcommand{\makefa}{\mbox{\textsc{MakeFA}}}
\newcommand{\sccs}{\mbox{$SCC$}}
\newcommand{\alphafunction}{\mbox{\textsc{NederhofApproximation}}}

\newpage
\section{Regular Abstractions of Context-Free Grammars}
\label{sec-abstract}

Nederhof~\cite{Nederhof00,Nederhof_chapter1} proposed a transformation
that converts a context-free grammar into a finite automaton. We
present the algorithm \alphafunction\ in Figure~\ref{pcode:approxCFG}
that performs the whole transformation in two steps. The first,
\stronglyreg, describes how to convert a context-free grammar
into a strongly regular grammar. The second step, \makefa, builds a
finite-state automaton from a given strongly regular grammar.

\begin{figure}[h]
  \centerline{
    \pcode{
      $\alphafunction(G)$ \\
      \> let $\sccs$ be the strongly connected components of $G$ \\
      \> $G' \equiv (V',\Sigma,P',S') $ :=  \stronglyreg($G,\sccs$) \\
      \> $Q = \{q_0, q_F\}$ where $q_0$ and $q_F$ are new fresh states; \\
      \> $F$ := $\{ q_F \}$; $\Delta$ := $\emptyset$ \\
      \> \makefa($q_0$, $S'$, $q_F$, $G'$, \sccs) \\
      \> \textbf{return} $R \equiv (Q,\Sigma,\Delta, q_0, F)$ 
    }
  }     
  \caption{\label{pcode:approxCFG}
    Approximating a context-free grammar $G$ with a finite automaton
    $R$ such that $\lang{G} \subseteq \lang{R}$}
\end{figure}

Prior to describing the procedures \stronglyreg\ and \makefa\ we present
some useful definitions.
Consider partitions of the set of non-terminal symbols $V$. We define
$A$ and $B$ to be part of the same partition if $A$ and $B$ are
\emph{mutually recursive}. That is,
$A \Rightarrow^* \alpha B \beta$ \ and $B \Rightarrow^* \alpha' A \beta'$
for some sentential forms $\alpha$, $\beta$, $\alpha'$ and $\beta'$.

\begin{definition}[Left- and Right-Linearity] \rm
A production is \emph{left-linear} iff it is of the form 
$A \rightarrow B\ w$ or $A \rightarrow w$, where $w \in \Sigma^*$.
It is \emph{right-linear} iff it is of the form $A \rightarrow w\ B$
or $A \rightarrow w$, where $w \in \Sigma^*$.
\end{definition}

\begin{definition}[Strongly regular grammars]  
A \emph{strongly regular grammar} is a grammar in which the
productions are either all left-linear or all right-linear.
\end{definition}

\begin{definition}[Left- and Right-Generating]
A set of mutually recursive nonterminals $S$ is \emph{left (right)
generating} if there exists a grammar production $A \rightarrow\ \alpha
B \gamma , \alpha \neq \epsilon$ ($A \rightarrow\ \alpha B \gamma
, \gamma \neq \epsilon$) and $A \in S$.

\noindent Assuming that we have a strongly regular grammar $G$ we can
classify each mutually recursive set $S$ as\footnote{The case where
$S$ is both ``left'' and ``right'' is not applicable here since the
grammar $G$ is strongly regular}:

\vspace{2mm}
\begin{tabular}{ll}
``left'' & if $S$ is not left and right generating \\
``right'' & if $S$ is left and not right generating \\
``cyclic'' & if $S$ is neither left nor right generating \\
\end{tabular}
\end{definition}

\setcounter{proglineno}{0}
\begin{figure}[t]
  \centerline{
    \pcode{
      \stronglyreg($\langle V,\Sigma,P,S\rangle $, $\sccs$) \\
      \putno \> $V'$ := $V$; $S'$ := $S$ ; $P'$ := $\emptyset$ \\
      \putno \> \textbf{foreach} $ i \in \{ 0, \ldots, |SCC| \}$ \\
      \putno \> \> \textbf{if} there exists a production with lhs in $\sccs_i$ neither left- nor right- linear  \\    
      \putno \> \> \> \textbf{foreach} $A \in \sccs_i$ \textbf{do} \\
      \putno \> \> \> \> create a fresh nonterminal $A' \notin V$ \\
      \putno \> \> \> \>  $V'$ := $V' \cup \{A'\}$  \\
      \putno\label{convertStrong-epsilon} \> \> \> \>  $P'$ := $P' \cup \{A' \rightarrow \epsilon \}$ \\
      \putno \> \> \> \textbf{foreach}\ $A \rightarrow \alpha \in P$ with $\alpha \in
      (\Sigma \cup (V \setminus \sccs_i))^{*}$ \textbf{do} \\
      \putno\label{convertStrong-breakcycle-base} \> \> \> \> $P'$ := $P' \cup \{ A \rightarrow \alpha A' \} $  \\
      \putno \> \> \> \textbf{foreach}\ $A \rightarrow \alpha_0 B_1 \alpha_1 B_2
      \alpha_2 \cdots B_m \alpha_m \in P$ with $m \geq 0$ \\
      \putno \> \> \> \hspace{11mm} $B_1,\ldots,B_m \in \sccs_i$, and $\alpha_0,\ldots,\alpha_m \in
      (\Sigma \cup (V \setminus \sccs_i))^{*}$ \textbf{do} \\
      \putno\label{convertStrong-breakcycle-begin} \> \> \> \> $P'$ := $P' \cup\{ $ $A \rightarrow \alpha_0 B_1$  \\
      \putno \> \> \> \> \> \> \> \> $B_1' \rightarrow \alpha_1 B_2$ \\
      \putno \> \> \> \> \> \> \> \> $B_2' \rightarrow \alpha_2 B_3$ \\
      \putno \> \> \> \> \> \> \> \> $\ldots$  \\
      \putno \> \> \> \> \> \> \> \> $B_{m-1}' \rightarrow \alpha_{m-1} B_m$ \\
      \putno\label{convertStrong-breakcycle-end} \> \> \> \> \> \> \> \> $B_{m}' \rightarrow \alpha_{m} A' \}$ \\
      \putno \> \> \textbf{else} \\
      \putno\label{convertStrong-nonrecursive} \> \> \> \textbf{foreach} $A \in \sccs_i$ \textbf{and} $A \rightarrow X \in P$ \textbf{do} $P'$ := $P' \cup \{A \rightarrow X \}$ \\
      \putno \> \textbf{return} $(V', \Sigma, P', S')$
    }
  }
  \caption{\label{pcode:convertStrong}
    Converting a context-free grammar into a strongly regular grammar
  }
\end{figure}

\noindent
Figure~\ref{pcode:convertStrong} shows the transformation,
suggested by Nederhof, to convert an arbitrary context-free grammar
to a strongly regular grammar. 
This transformation is based on the following observation. A
context-free grammar consisting of productions of the form $A
\rightarrow^* \alpha A \beta$ with both $\alpha,\beta$ non-empty
might not be represented as a strongly regular grammar. The intuition
is that $\alpha$ and $\beta$ might be related through an ``unbounded''
communication (\ie some correlation between the occurrences of a's and
b's) no expressible by regular languages.

The procedure \stronglyreg\ takes a context-free grammar $G$
and \sccs, the set of strongly connected components that identify the
set of mutually recursive nonterminal symbols of $G$. The algorithm
identifies in a conservative manner situations where an unbounded
communication can arise and breaks it by adding instead either left or
right linear productions.

The procedure iterates over all strongly connected components in an
arbitrary order and checks if all productions of the nonterminals in
each component are either left or right linear. If yes, no
transformation needs to be applied
(line~\ref{convertStrong-nonrecursive}). Otherwise, it applies some
transformations in order to convert all productions in the same
strongly connected component into either left or right linear, as
described at lines \ref{convertStrong-epsilon} and
\ref{convertStrong-breakcycle-begin}-\ref{convertStrong-breakcycle-end}.
The transformation assumes that nonterminals that do not belong to
$\sccs_i$ are considered as terminals here, for determining if a
production of $\sccs_i$ is right-linear or left-linear. This allows
traversing the strongly connected components in any order.
We show next how the transformation works through an example.

\begin{example}[Conversion to strongly regular grammar] \rm

Let $ G = ( \{\ra,\rb,\rc\}, \{A,B \}, A, P )$,
where $P$ is the set of productions:

\vspace{2mm}
\begin{tabular}{clcl}
 (P1) ~~ & $A$ & $\rightarrow$ & $\ra B \rb$ \\
 (P2) ~~ & $A$ & $\rightarrow$ & $\rc$ \\ 
 (P3) ~~ & $B$ &  $\rightarrow$ & $A$ 
\end{tabular}
\vspace{2mm}

\noindent 
The set of strongly connected components is $\{\{A,B\}\}$. Therefore
there is only one set $\{A,B\}$ of mutually recursive nonterminals.
Line~\ref{convertStrong-epsilon} adds to $P'$ the rules:

\vspace{2mm}
\begin{tabular}{lcl}
 $A'$ & $\rightarrow$ & $\epsilon$ \\
 $B'$ & $\rightarrow$ & $\epsilon$ \\ 
\end{tabular}
\vspace{2mm}

\noindent 
From P1 and executing
lines~\ref{convertStrong-breakcycle-begin}-\ref{convertStrong-breakcycle-end}
we add to $P'$ the rules:

\vspace{2mm}
\begin{tabular}{lcl}
 $A$ & $\rightarrow$ & $\ra B$ \\
 $B'$ & $\rightarrow$ & $\rb A'$ \\
\end{tabular}
\vspace{2mm}

\noindent 
From P2 and executing line~\ref{convertStrong-breakcycle-base} we add
to $P'$ the rule:

\vspace{2mm}
\begin{tabular}{lcl}
 $A$ & $\rightarrow$ & $\rc A'$ \\
\end{tabular}
\vspace{2mm}

\noindent 
From P3 and executing
lines~\ref{convertStrong-breakcycle-begin}-\ref{convertStrong-breakcycle-end}
we add to $P'$ the rule:

\vspace{2mm}
\begin{tabular}{lcl}
 $B$ & $\rightarrow$ & $ A$ \\
 $A'$ & $\rightarrow$ & $B'$ \\
\end{tabular}
\vspace{2mm}

\noindent Finally, putting all rules of $P'$ together and after some
trivial simplifications:

\vspace{2mm}
\begin{tabular}{lcl}
 $A$ & $\rightarrow$ & $\ra A \mid \rc A'$ \\
 $A'$ & $\rightarrow$ & $\epsilon \mid \rb A'$ \\
\end{tabular}
\vspace{2mm}

\noindent  The resulting strongly regular grammar is $G' =
( \{\ra,\rb,\rc\}, \{A,A' \}, A, P' )$.  
Note that $L(G') = \ra^*\rc \rb^*$ while
$L(G) = \{\ra^n \rc \rb^n \mid n \geq 0\}$.  
Therefore, it is easy to see that $L(G) \subseteq L(G')$. 
The transformation broke the synchronization between the number of 
$\ra$'s and $\rb$'s (\#$\ra$'s = \#$\rb$'s) by allowing an
arbitrary number of $\ra$'s and $\rb$'s. \qed
\end{example}

\setcounter{proglineno}{0}
\begin{figure}[t]
  \centerline{
    \pcode{
      \makefa($q_0,A,q_1,G \equiv (V,\Sigma,P,S), \sccs$) \\
      Global variables: set of states $Q$, transition relation $\Delta$ \\ \\
      \putno\label{makeDFA:term-1} \> \textbf{if} $A= \epsilon$ \textbf{then} $\Delta \leftarrow \Delta \cup \{ (q_0,\epsilon,q_1) \}$ \\
      \putno\label{makeDFA:term-2} \> \textbf{else if} $A = a \in \Sigma$ \textbf{then} $\Delta \leftarrow \Delta \cup \{ (q_0,a,q_1) \}$ \\
      \putno\label{makeDFA:descend-beg} \> \textbf{else if} $A = X Y, X \in V, Y \in V^{+}$ \textbf{then} \\
      \putno \> \> $ Q = Q \cup \{q\}$ where $q$ is a new fresh state; \\
      \putno \> \> \makefa($q_0, X, q, G,\sccs$); \\
      \putno\label{makeDFA:descend-end} \> \> \makefa($q, Y, q_1, G,\sccs$) \\
      \putno \> \textbf{else} // $A$ is a nonterminal \\
      \putno \> \> \textbf{if} $A \in \sccs_i$ \textbf{and} $|\sccs_{i}| > 1$ \textbf{then} \\ 
      \putno \> \> \> \textbf{if} $\sccs_i$ is ``left'' \textbf{then} \\
      \putno\label{makeDFA:left-loop-1-beg} \> \> \> \> \textbf{foreach} $C \rightarrow X_1 \cdots X_m \in P$ 
      such that $C \in \sccs_i$ \textbf{and} $X_1 \cdots X_m \notin \sccs_i$ \\
      \putno \> \> \> \> \> $q_C = \mathsf{lookup}(C,Q)$; \\
      \putno\label{makeDFA:left-loop-1-end} \> \> \> \> \> \makefa($q_0,X_1 \cdots X_m, q_C, G, \sccs$) \\
      \putno\label{makeDFA:left-loop-2-beg}\> \> \> \> \textbf{foreach} $C \rightarrow D X_1 \cdots X_m \in P$ 
      such that $C,D \in \sccs_i$ \textbf{and} $X_1 \cdots X_m \notin \sccs_i$ \\
      \putno \> \> \> \> \> $q_C = \mathsf{lookup}(C,Q)$; $q_D = \mathsf{lookup}(D,Q)$; \\
      \putno\label{makeDFA:left-loop-2-end} \> \> \> \> \> \makefa($q_D,X_1 \cdots X_m, q_C, G, \sccs$) \\
      \putno \> \> \> \> $\Delta = \Delta~\cup~\{ (q_A, \epsilon, q_1) \}$ \\
      \putno \> \> \> \textbf{else} // $\sccs_i$ is either ``right'' or ``cyclic''  \\
      \putno \> \> \> \> \textbf{foreach} $C \rightarrow X_1 \cdots X_m \in P$ 
      such that $C \in \sccs_i$ \textbf{and} $X_1 \cdots X_m \notin \sccs_i$ \\
      \putno \> \> \> \> \> $q_C = \mathsf{lookup}(C,Q)$; \\
      \putno \> \> \> \> \> \makefa($q_C,X_1 \cdots X_m, q_1, G, \sccs$) \\
      \putno \> \> \> \> \textbf{foreach} $C \rightarrow X_1 \cdots X_m D \in P$ 
      such that $C,D \in \sccs_i$ \textbf{and} $X_1 \cdots X_m \notin \sccs_i$ \\
      \putno \> \> \> \> \> $q_C = \mathsf{lookup}(C,Q)$; $q_D = \mathsf{lookup}(D,Q)$; \\
      \putno \> \> \> \> \> \makefa($q_C,X_1 \cdots X_m, q_D, G, \sccs$) \\
      \putno \> \> \> \> $\Delta = \Delta~\cup~\{ (q_0, \epsilon, q_A) \}$ \\
      \putno \> \> \textbf{else} \\
      \putno\label{makeDFA:nonrec-nonterminal-beg} \> \> \> \textbf{foreach} $A \rightarrow X \in P$ \textbf{do}  \\
      \putno\label{makeDFA:nonrec-nonterminal-end} \> \> \> \> \makefa($q_0, X, q_1, G, \sccs$) 
    }
  }
  \caption{\label{pcode:makeFA}
    Converting a strongly regular grammar into a finite automata)
  }
\end{figure}

Next, we show in Figure \ref{pcode:makeFA} the procedure to convert a
strongly regular grammar to a finite-state automaton.  The algorithm
is presented as described in~\cite{Nederhof00}.  The main
procedure \makefa\ takes five inputs: the initial state, the string,
the final state reached on reading the string, the strongly regular
grammar, and the mutually recursive nonterminals. We assume a helper
function \textsf{lookup} that maps nonterminals to automata states.
If the nonterminal $A$ is not in the map then a new fresh state $q$ is
returned and the pair $(A,q)$ is inserted in the map. Moreover, we add
$q$ into $Q$. Otherwise if there exists already a pair $(A,q)$ then
$q$ is returned. Note that since \textsf{lookup} can add new states
into $Q$ we pass it as an argument.

Starting from the start symbol of the grammar, the procedure \makefa\
descends the grammar (by triggering
lines~\ref{makeDFA:descend-beg}-\ref{makeDFA:descend-end}) until
terminals are found (lines~\ref{makeDFA:term-1}
and \ref{makeDFA:term-2}), and it creates automata transitions
labelled with those terminals. While descending if it encounters a
non-recursive nonterminal $A$ it continues recursively for each
right-hand side of a production for which $A$ is the left-hand side
(lines~\ref{makeDFA:nonrec-nonterminal-beg}-\ref{makeDFA:nonrec-nonterminal-end}).
The nontrivial case is when the algorithm encounters a recursive
nonterminal symbol $A$. We describe the case when the set of mutually
recursive nonterminals where $A$ belongs to is classified as
``left''. The other case when it is either ``right'' or ``cyclic'' is
symmetric.

Since the set of mutually recursive nonterminals $SCC_i$ is
``left'' all its productions must be of the form (1)
$C \rightarrow \alpha$ or (2) $C \rightarrow D \alpha$, where $C,D \in
SCC_i$ and $\alpha \in (\Sigma \cup (V \setminus SCC_i))^{+}$.
The \textbf{foreach} loop at
lines~\ref{makeDFA:left-loop-1-beg}-\ref{makeDFA:left-loop-1-end}
covers case (1) by descending recursively the right-hand side of the
productions whose left-hand side is denoted by $C$ (\ie $\alpha$) and
passing to the recursive call $q_C$ as final state, an automata state
assigned to $C$. The next \textbf{foreach} loop at
lines~\ref{makeDFA:left-loop-2-beg}-\ref{makeDFA:left-loop-2-end}
handles the other case (2) in a similar manner but the first symbol
$D$ appearing at the right-hand side is another nonterminal symbol
from the same strongly connected component than $C$. We descend again
the right-hand side but providing to the recursive call $q_D$ and
$q_C$ (states assigned to $D$ and $C$) as the initial and final
states, respectively.

\newpage
\section{Intersection of a Context-Free and a Regular Language}
\label{sec-pre-star}

The algorithm to check intersection between a CFL and a regular
language works on context-free grammars with rules of these forms:
\begin{equation}
 A \rightarrow BC \mid a \mid B \mid \epsilon 
\label{eq:normalization}
\end{equation}

\noindent An arbitrary CFG can be converted to a grammar of this form by a
linear increase in terms of the size of the original grammar. Let
$\Rightarrow^{*}$ be the reflexive transitive closure of
$\Rightarrow$. For a grammar $G$ in the above form and a set $L
\subseteq \Sigma_V^*$ we denote by $\prestar{G}{L}$ the set of
predecessors of elements in $L$ with respect to the grammar $G$.  
That is,
\[ 
  \prestar{G}{L} = \{ \alpha \in \Sigma_V^* \mid \exists\ 
	\alpha' \in L: \alpha \Rightarrow^{*} \alpha' \}
\]
Starting from a finite-state automaton $A = \langle Q, \Sigma, \Delta,
q_0, F \rangle$ that recognises a language $L$, and a context-free
grammar $G = (V, \Sigma, P, S)$, we produce an automaton
$A_{\prestar{G}{L}}$ recognizing $\prestar{G}{L}$ by adding transitions to $A$
according to the following saturation rule:

\begin{center}
If $A \rightarrow \beta\ \in P$ and $\langle q, \beta, q' \rangle
\in \Delta$ in the current automaton, add $\langle q, A, q'
\rangle$ to $\Delta$.
\end{center}

\noindent Given the specific form of our input context-free grammar,
we can split the saturation rule into four cases:

\begin{enumerate}

\item for each rule $A \rightarrow \epsilon \in P$ and for each $q_i
  \in Q$ add $\langle q_i, A , q_i \rangle$ to $\Delta$.

\item for each rule $A \rightarrow a \in P$, if there exists
  $\langle q_0,a,q_1\rangle \in \Delta$, then add
  $\langle q_0, A, q_1 \rangle$ to $\Delta$.

\item for each rule $A \rightarrow B \in P$ ($B \in V$), if there exists
  $\langle q_0, B, q_1 \rangle \in \Delta$ then add $\langle q_0, A,
  q_1\rangle$ to $\Delta$.

\item for each rule $A \rightarrow BC \in P$, if there exists
  $\langle q_0, B, q_1 \rangle \in \Delta$ and $\langle q_1, C, q_2
  \rangle \in \Delta$ then add $\langle q_0, A, q_2\rangle$ to
  $\Delta$.

\end{enumerate}

\begin{figure}[b]
\begin{tabular}{cc}
\begin{minipage}{0.35\linewidth}
\centerline{
  \begin{tikzpicture}[->,>=stealth',scale=0.7,every node/.style={scale=0.7},shorten >=1pt,auto,node distance=1.5cm,semithick]
    \newState{q0}{$q_0$}{initial, initial text={}}{minimum size=1pt}
    \newState{q1}{$q_1$}{right of=q0}{minimum size=1pt}
    \newState{q2}{$q_2$}{right of=q1}{minimum size=1pt}
    \newState{q3}{$q_3$}{right of=q2}{minimum size=1pt}
    \newState{q4}{$q_4$}{right of=q3}{accepting,minimum size=1pt} 
    \newTransition{q0}{q1}{$a$}{}
    \newTransition{q1}{q2}{$b$}{}
    \newTransition{q2}{q3}{$b$}{}
    \newTransition{q3}{q4}{$a$}{}
  \end{tikzpicture}}
\end{minipage} &
\begin{minipage}{0.63\linewidth}
\centerline{
  \begin{tikzpicture}[->,>=stealth',shorten >=1pt,auto,scale=0.7,every node/.style={scale=0.7},node distance=2.7cm,semithick]
    \newState{q0}{$q_0$}{initial, initial text={}}{minimum size=1pt}
    \newState{q1}{$q_1$}{right of=q0}{minimum size=1pt}
    \newState{q2}{$q_2$}{right of=q1}{minimum size=1pt}
    \newState{q3}{$q_3$}{right of=q2}{minimum size=1pt}
    \newState{q4}{$q_4$}{right of=q3}{accepting,minimum size=1pt} 
    \newTransition{q0}{q1}{$a|A|B|C$}{}
    \newTransition{q1}{q2}{$b|A|D|E$}{}
    \newTransition{q2}{q3}{$b|A|D|E$}{}
    \newTransition{q3}{q4}{$a|A|B|C$}{}
    \newTransition{q0}{q0}{$A$}{loop above}
    \newTransition{q1}{q1}{$A$}{loop below}
    \newTransition{q2}{q2}{$A$}{loop above}
    \newTransition{q3}{q3}{$A$}{loop below}
    \newTransition{q4}{q4}{$A$}{loop above}
    \newTransition{q0}{q2}{$E$}{bend left}
    \newTransition{q2}{q4}{$B$}{bend left}
    \newTransition{q1}{q3}{$A|E$}{bend right=60}
    \newTransition{q1}{q4}{$B$}{bend right=60}
    \newTransition{q0}{q4}{$A$}{bend right=60}
  \end{tikzpicture}}
\end{minipage} \\
(a) & (b) 
\end{tabular}
\caption{Two finite automata recognizing $L=abba$ and
  $\prestar{G'}{L}$, respectively. The transition $\langle q_0,A,q_4
  \rangle$ means that $abba \in L(G')$.}          
\label{ex:prestar-dfas-2}   
\end{figure}
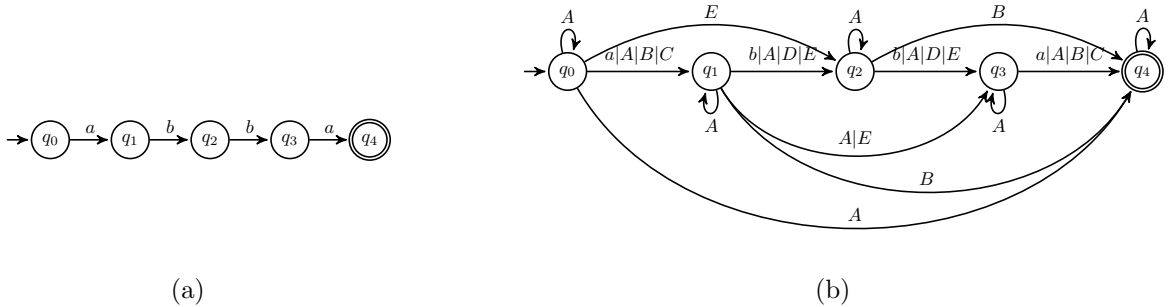

\noindent
The intersection of a regular language and a context-free grammar
can be found by computing first $\prestar{G}{L}$ and then checking if
there is a transition $\langle q_0, S, q_F \rangle \in
\Delta_{A_{\prestar{G}{L}}}$.
The latter check can be done in constant time so the complexity is the
same as the complexity of the $pre^{*}$ algorithm.

\begin{example} \rm
Consider the palindrome grammar $G = \langle \{a,b\}, \{A\}, A, P \rangle$
where $P$ is the set of productions: $A \rightarrow aAa \mid bAb \mid a
\mid b \mid \epsilon $, and the finite automaton shown in
Figure~\ref{ex:prestar-dfas-2}(a) recognizing the language $L=abba$. We
would like to check if $abba \in L(G)$. 

First, we generate the
automaton that recognizes $\prestar{G}{L}$ depicted in
Figure~\ref{ex:prestar-dfas-2}(b) and then we check if there is a
transition from the initial to the final state labelled with $A$ 
(that is, $A \rightarrow^{*} abba$).

In fact, rather
than computing $\prestar{G}{L}$ directly we normalize our input
grammar obtaining $G' = \langle \{a,b\}, \{A,B,C,D,E \}, A, P \rangle$
where $P$ is the set of productions:

\begin{center}
\begin{tabular}{lcl}
 $A$ & $\rightarrow$ & $CB \mid DE \mid a \mid b \mid \epsilon$ \\
 $B$ & $\rightarrow$ & $AC$ 
\end{tabular}
\qquad
\begin{tabular}{lcl}
 $C$ & $\rightarrow$ & $a$ \\
 $D$ & $\rightarrow$ & $b$ 
\end{tabular}
\qquad
\begin{tabular}{lcl}
 $E$ & $\rightarrow$ & $AD$ \\
 ~ & 
\end{tabular}
\end{center}
\noindent and compute $\prestar{G'}{L}$ instead. 
Note that since there
exists a transition from $q_0$ to $q_4$ with label $A$ in the
automaton shown in Figure~\ref{ex:prestar-dfas-2}(b) we have proven
that $abba \in L(G')$.

\end{example}

\newpage
\section{Recursive Multithreaded Programs}
\label{sec-benchmarks}

Detailed descriptions of the programs used in Section~\ref{sec-results}'s
experimental evaluation, as well as their safety properties, can be found
in~\cite{LongCMM12,Chaki_TACAS06,kiss}. This appendix is intended 
as a self-contained short description.

There are two classes of programs: Erlang programs extracted from
textbook algorithms and several variants of a real Bluetooth driver
implementation.
Table~\ref{benchmarks-info} shows the sizes of the programs after each
context-free grammar has been normalized (that is, they satisfy the form
in Equation~\ref{eq:normalization}):

\begin{itemize}
\setlength{\itemsep}{-0.5ex}
\renewcommand{\labelitemi}{\scriptsize$\bullet$} 
\item  $\#CFGs$: the number of context-free grammars
\item  $|\Sigma|$  number of terminal symbols
\item  $|N|$: total number of nonterminal symbols
\item  $|P|$: total number of grammar productions
\end{itemize}

\begin{table}[h]
\begin{center}
\begin{tabular}{|l|c|c|c|c||l|c|c|c|c|}
      \hline
      Program  & $\#CFGs$ &  $|\Sigma|$ & $|N|$ & $|P|$ & Program  & $\#CFGs$ &  $|\Sigma|$ & $|N|$ & $|P|$ \\ 
      \hline
      \hline
      \textsf{SharedMem} & 4 & 8  & 138 & 234 & \textsf{Version 1} &  7 & 17 & 471 & 804 \\
      \hline
      \textsf{Mutex} & 4 & 22  & 297  & 512 & \textsf{Version 2} & 9 & 26 & 1055 & 1847 \\
      \hline
      \textsf{RA} & 2 & 20 & 127 & 205 & \textsf{Version 2 w/ Heuri} & 9 & 26 & 807 & 1351 \\
      \hline
      \textsf{Modified RA} & 5 & 22 & 323 & 530 & \textsf{Version 3 (1A2S)} & 9 & 22 & 746 & 1292 \\ 
      \hline
      \textsf{TNA} & 3 & 17 & 134 & 204 & \textsf{Version 3 (1A2S) w/ Heuri} & 8 & 22 & 569 & 938 \\
      \hline
      \textsf{Banking} & 3  & 13 & 144  & 244  & \textsf{Version 3 (2A1S)} & 9 & 25 & 1053 & 1052 \\
      \hline
\end{tabular}
\caption{\label{benchmarks-info} Sizes of the programs shown in Table~\ref{results}(a-b)}
\end{center}
\end{table}

\noindent \textbf{Erlang programs.} \textsf{SharedMem} is the shared memory
 program shown in detail in Figure~\ref{fig:encoding}. \textsf{Mutex}
is an implementation of the Peterson mutual exclusion protocol where
two processes try to acquire a lock. The checked property is that at
most one process can be in the critical section at any one time.
\textsf{RA} is a resource allocator manager that handles
``allocate'' and ``free'' requests. We check that the manager cannot
allocate more resources to clients than there are currently free 
resources in the system. 
\textsf{Modified RA} adds some new functionality to the
logic of the resource allocator manager. We check the same property
used in \textsf{RA}. \textsf{TNA} is a telephone number analyzer that
serves ``lookup'' and ``add number'' requests. The property to check
is that some programming errors cannot
happen. Finally, \textsf{Banking} is a toy banking application where
users can check a balance as well as deposit and withdraw money. 
We check that deposits and withdrawals of money are done atomically.

\vspace{2mm}
\noindent \textbf{Bluetooth driver~\cite{bluetooth}.}  This is a simplified 
implementation of a Windows NT Bluetooth driver and several variants
discussed originally in~\cite{kiss}. The driver keeps track of how
many threads are executing in the driver. The driver increments
(decrements) atomically a counter whenever a thread enters (exits) the
driver. Any thread can try to stop the driver at any time, and after
that new threads are not supposed to enter the driver. When the driver
checks that no threads are currently executing the driver, a flag is
set to true to establish that the driver has been stopped. Other
threads must assert this flag is false before they start their work in
the driver. There are two dispatch functions that can be executed by
the operative system: one that performs I/O in the driver and another
to stop the driver.  Assuming that threads can asynchronously execute
both dispatch functions we check the following race condition: no
thread can enter in the driver after the driver has been
stopped. \textsf{Version 1} and \textsf{Version 2}~\cite{kiss} are two
buggy versions of the driver implementation. \textsf{Version 2 w/
Heuri} is an alternative encoding of \textsf{Version 2} introduced
by~\cite{LongCMM12} to limit context switches only at basic block
boundaries. This makes the verification task easier but it is, in
general, unsound as it does not cover all possible behaviours of
the driver. \textsf{Version 3 (2A1S)}~\cite{Chaki_TACAS06} is a safe
version after blocking the counterexample found in \textsf{Version 2}
where two adder and one stopper processes are
considered, \textsf{Version 3 (1A2S)} is a buggy version with one
adder and two stopper processes, and finally, \textsf{Version 3 (1A2S)
w/ Heuri} is an alternative encoding with the unsound heuristics used
in \textsf{Version 2 w/ Heuri}.

\end{document}